\documentclass[11pt]{article}
\usepackage{amsmath,amssymb,amsfonts} 
\usepackage{epsfig} \usepackage{latexsym,nicefrac,bbm}
\usepackage{xspace}
\usepackage{color,fancybox,graphicx,subfigure,fullpage}
\usepackage[top=1in, bottom=1in, left=1in, right=1in]{geometry}
\usepackage{tabularx} \usepackage{hyperref} 
\usepackage{pdfsync}
\usepackage[boxruled]{algorithm2e}
\usepackage{multicol}
\usepackage{enumitem}
\usepackage{multirow}

\renewcommand{\epsilon}{\varepsilon}

\newcommand{\poly}{\mathrm{poly}}

\newcommand\N{\mathbb N}
\newcommand\R{\mathbb R}
\newcommand\C{\mathbb C}

\newtheorem{theorem}{Theorem}[section]

\newtheorem{definition}{Definition}[section]
\newtheorem{lemma}[theorem]{Lemma}
\newtheorem{remark}[theorem]{Remark}
\newtheorem{proposition}[theorem]{Proposition}
\newtheorem{corollary}[theorem]{Corollary}

\newenvironment{proof}{\begin{trivlist} \item {\bf Proof:~~}}
   {\qed\end{trivlist}}

\def\FullBox{\hbox{\vrule width 6pt height 6pt depth 0pt}}

\def\qed{\ifmmode\qquad\FullBox\else{\unskip\nobreak\hfil
\penalty50\hskip1em\null\nobreak\hfil\FullBox
\parfillskip=0pt\finalhyphendemerits=0\endgraf}\fi}

\renewcommand{\C}{\mathbb{C}}
\renewcommand{\R}{\mathbb{R}}
\renewcommand{\N}{\mathbb{N}}

\DeclareMathOperator{\Tr}{Tr}

\DeclareMathOperator{\supp}{supp}
\DeclareMathOperator{\diag}{diag}

\DeclareMathOperator{\hull}{hull}

\DeclareMathOperator{\primal}{Prim}
\DeclareMathOperator{\dual}{Dual}

\usepackage[T1]{fontenc}
\usepackage{lmodern}

\title{\bf On the Computability of Continuous Maximum Entropy Distributions with Applications}
\author{Jonathan Leake \\ KTH \and Nisheeth K. Vishnoi \\ Yale University}

\begin{document}

\maketitle

\sloppy

\begin{abstract}

    We initiate a study of the following  problem:
    Given a continuous domain $\Omega$ along with its convex hull  $\mathcal{K}$, a point $A \in \mathcal{K}$ and a prior measure $\mu$ on $\Omega$, find the probability density over $\Omega$ 
    whose marginal is $A$ and that minimizes the KL-divergence to  $\mu$.
    This framework gives rise to several extremal distributions that  arise
    in mathematics, quantum mechanics, statistics, and theoretical computer science. 
    Our  technical contributions  include a polynomial bound on the norm of the optimizer of the dual problem that holds in a very general setting and relies on a  ``balance'' property of the measure $\mu$ on $\Omega$, and exact algorithms for evaluating the dual and its gradient for several interesting settings of $\Omega$ and $\mu$.
    Together, along with the ellipsoid method, these results imply polynomial-time algorithms to compute such KL-divergence minimizing distributions in several cases.
    Applications of our results include:
    1) an optimization characterization of  the Goemans-Williamson measure \cite{GoemansW1995} that is used to round a positive semidefinite matrix to a vector,
    2) the computability of the entropic barrier for polytopes studied  by  \cite{BubeckEldan}, and
    3) a polynomial-time algorithm to compute the barycentric quantum entropy of a density matrix that was proposed as an alternative to von Neumann entropy in the 1970s  \cite{Band1976,Park1977,SlaterEntropy1}:
    this corresponds to the case when  $\Omega$ is the set of rank one projections matrices and $\mu$ corresponds to the  Haar measure on the  unit sphere.
    Our techniques generalize to the setting of Hermitian rank $k$ projections using the Harish-Chandra-Itzykson-Zuber formula \cite{HarishChandra1957,IZ1980}, and are applicable even beyond,  to adjoint orbits of compact Lie groups.

\end{abstract}

\newpage

\tableofcontents
\newpage

\section{Introduction}

\paragraph{Entropy maximizing distributions.}
Let $\Omega$ be a subset of $\R^d$ and
let $\mathcal{K}=\mathrm{hull}(\Omega)$ denote the convex hull of $\Omega$.
Suppose one is given an $A \in \mathcal{K}$. 
A natural question arises: {\em Is there a canonical way to choose a probability measure supported on $\Omega$ that can be used to express $A$ as a convex combination of points on $\Omega$?} 
When $\Omega$ is a discrete and finite set, this problem has been extensively studied and a canonical probability distribution was proposed by Jaynes \cite{Jaynes1,Jaynes2}: among all probability distributions that can be used to express $A$ as a convex combination of points in $\Omega$,  pick the one that maximizes the Shannon entropy.
These distributions are referred to as maximum entropy (max-entropy) distributions and arise in machine learning, statistics, mathematics, and theoretical computer science (TCS).
In TCS, these distributions have found many uses due to duality, connections to polynomials, and  algorithms to compute them \cite{GS02,SinghV14,asadpour2017atsp,garg2015,CMTV17,ALOW17}; see \cite{StraszakV19}.

In this paper we initiate a study of the computability when $\Omega$ is  a continuous (and often nonconvex) manifold.
Examples of interest include
\[
    \mathcal{V}_1:=\{ vv^\top: v \in \mathbb{R}^n \},
\]
\[
    \mathcal{P}_1:= \{ vv^*: v \in \C^n, \|v\|_2=1\},
\]
the set of rank $k$ Hermitian projection matrices
\[
    \mathcal{P}_k:= \{ Y: Y \in \C^{n \times n}, \Tr(Y)=k, Y=Y^*, Y^2=Y\}
\]
(related to the Grassmanian), or a convex body (in which case $\mathcal{K}=\Omega).$

Unlike the discrete setting, in the continuous setting the notion of finding a max-entropy distribution is not well-defined since a canonical notion of entropy does not necessarily exist. 
We  instead consider relative entropy, Kullback-Leibler (KL) divergence with respect to a prior measure $\mu$ on $\Omega$ that corresponds to the density function $f(X)\equiv 1$ for all $X \in \Omega.$
For all of the manifolds mentioned above, there is a canonical measure that has this property and is called the uniform measure; see Section \ref{sec:preliminaries}. 
This leads us to the following infinite dimensional convex optimization problem which gives a canonical way to write $A$ as a convex combination of points in $\Omega$: Find a measure $\nu$ on $\Omega$ that is continuous with respect to $\mu$ and, subject to the constraint that the expected point in $\mathcal{K}$ with respect to $\nu$ is $A$,  $\nu$ minimizes the KL divergence to $\mu.$ 
Note that, by choice, $\nu$ is as close to the distribution $\mu$ as possible; hence we call it a maximum entropy distribution.

The class of extremal entropy maximizing distributions that arise in this manner have several properties that have led  to their appearance, implicitly or explicitly,  in several different areas: %

\begin{itemize}[leftmargin=2mm]
\setlength\itemsep{0em}
\item the work of Klartag (inspired by a work of Gromov) on  the isotropic constant \cite{Klartag2006,Gromov1990},
\item the work of  Khatri and Mardia on the Matrix Bingham distribution  in statistics with applications to various scientific and engineering problems \cite{bingham1974,Khatri77,Hoff2009}, 
\item as shown  here, the work of Goemans and Williamson on rounding semidefinite programs \cite{GoemansW1995},
\item  the works of G\"uler, Bubeck and Eldan on barrier functions for interior point methods \cite{Guler1997,Guler1998,BubeckEldan}, 
\item the works of Band, Park, and Slater that defined the barycentric quantum entropy and proposed it as an alternative to the von Neumann entropy in the 1970s \cite{Band1976,Park1977,SlaterEntropy1}.

\end{itemize}

\noindent
\paragraph{Computability of entropy maximizing distributions.} One of the reasons why the entropy maximizing problem defined earlier is interesting (and unifies the above problems)  is {\em duality}: the dual optimization problem roughly has the form: 
$$ \inf_{Y} \ \langle Y,A\rangle + \log \int_{X \in \Omega} e^{-\langle Y, X \rangle } d\mu(X),$$
where $\langle \cdot, \cdot \rangle$ is an inner product and $\mu$ is the given  measure.
If strong duality holds, it can be shown that the optimal distribution $\nu^\star$ to the entropy maximizing problem can be described by the optimizer $Y^\star$ to the dual above:
$ \textstyle \nu^\star (X) \propto e^{-\langle Y^\star,X \rangle}$  for $X \in \Omega.$
As for computability of $\nu^\star$, $Y^\star$ lives in a small, convex, and finite dimensional (same dimension as $\mathcal{K}$) domain.
Hence in principle, one could hope to represent $\nu^\star$ efficiently.
However, bounding the running time of a   optimization method to find $Y^\star$ reduces to 1) a bounding some norm of $Y^\star$ and, 2) coming up with efficient algorithms to compute  $\int_{X \in \Omega} e^{-\langle Y, X \rangle } d\mu(X)$ for matrices $Y$ with that norm.
These are the main problems studied in this paper.

\subsection{Our contributions}

The main contributions of this paper are to initiate a formal study of the computability of entropy maximizing distributions on continuous domains, to present an ellipsoid method-based framework to compute them, to derive polynomial time algorithms for computing  maximum entropy distributions for specific manifolds mentioned earlier, and to present implications to  some of the applications  listed above.

\paragraph{The continuous maximum entropy framework and duality.} Our general framework is presented Section \ref{sec:maxentropy}.
The focus is on the setting when the manifold $\Omega$ and the base measure $\mu$  is fixed to either the set of all rank one matrices over reals ($\mathcal{V}_1$) with the measure induced by Lebesgue measure on $\R^n$, or the set of all rank $k$ projections over complexes ($\mathcal{P}_k$) for $k \geq 1$ with the appropriate Haar measure.
The input consists of an element $A$ (which is a matrix in the cases of interest) and the goal is to compute a representation for $\nu^\star$ that is the KL-divergence minimizing distribution to $\mu$ with marginal $A$.
We start by writing down the dual of this optimization problem (Section \ref{sect:dualform}) and showing  that strong duality holds under Slater's condition  -- that there is a density function that is strictly positive (and bounded) on $\Omega$ and has marginal $A$ (Section \ref{sect:sd_proof}).
This is implied by the condition that  $A$ is in the relative interior of the convex hull $\mathcal{K}$ of $\Omega$, which we then show is true quite generally in Sections \ref{sec:general_sd} and \ref{sec:interior}.
Strong duality then implies that the optimal measure $\nu^\star$ is determined by the optimal dual solution $Y^\star$ as $\nu^\star(X) \propto e^{-\langle Y^\star,X \rangle}$; see Theorem \ref{thm:strongduality}.

\paragraph{Norm of the optimal dual solution.}
However, to solve the dual convex program one needs, at the bare minimum, that the norm of $Y^\star$ is reasonably bounded.
It is not difficult to see that as  $A$ tends to the boundary of $\mathcal{K}$, the optimal measure is concentrated on a face of $\mathcal{K}$ implying that the norm of $Y^\star$ must tend to infinity.
Thus, one needs some assumption on the ``interiority'' of $A$ to ensure polynomial time computability. 
The situation is exacerbated by the fact that the  $Y^\star$ appears in the exponent and, hence, to have any hope of computability of the entropy maximizing distribution, the bound  on $Y^\star$ should be {\em polynomial} in the bit complexity of $A$.
Unlike the case when $\Omega$ is discrete (studied in \cite{SinghV14}), the fact that the base measure $\mu$ is continuous makes it harder.
Our main contribution towards the problem of bounding the norm of $Y^\star$ involves identifying a certain ``balance'' property of the measure $\mu$ on the manifold $\Omega$ (Definition \ref{def:balanced}) and showing that, roughly,   $\|Y^\star\| \le \mathrm{poly}(d,1/\eta)$ where $\eta$  is the distance of $A$ from the boundary of $\mathcal{K}$; see Theorem \ref{thm:boundingbox}.
We show that this balance property holds for a wide class of manifolds and obtain as corollaries  a bound of $\mathrm{poly}(n,1/\eta)$ for both $\Omega = \mathcal{P}_k$ (Corollary \ref{cor:boundingbox_rank_k}) and when $\Omega$ is an $n$-dimensional convex body (Corollary \ref{cor:boundingbox_convex}).
This bounding box result is quite general and expected to find further applications.  

\paragraph{Computing the integral in the dual for matrix manifolds.}
A bound on the norm of $Y^\star$ allows us to show that we can use the  ellipsoid method to solve the dual convex program, provided the measure $\mu$ is balanced on $\Omega$, and we can evaluate the dual and its gradient at a specified $Y$ of norm up to that of  $Y^\star$.
The tasks of evaluating the dual and its gradient essentially reduce to the computation of the integral
$\int_{X \in \Omega} e^{-\langle Y, X \rangle } d\mu(X).$
In the case when $\Omega = \mathcal{V}_1$ with $\mu$ being the measure induced by the Lebesgue measure, we observe that the dual optimization problem is finite only when $Y \succ 0$, and thus we need to evaluate the integral only for such a $Y$.  
The integral above then turns out to have a simple formula: roughly, $\log \det Y$ (Proposition \ref{prop:lebesgue_formula}). 

In other interesting cases, computing such an integral turns out to be a  nontrivial task.
In the case when $\Omega = \mathcal{P}_1$ and $\mu_1$ is the uniform measure induced by the Haar measure on the  complex unit sphere, we first note that the entropy maximizing measure cannot be obtained by solving the problem first for $\mathcal{V}_1$ and then ``projecting'' it on the sphere; see Section \ref{sec:GWsphere}.
Then, we note that the integral does not reduce to a product of $n$ integrals as in the Lebesgue case, and there is no easy way around this.
We need an algorithm to integrate the density $e^{-v^* Y v}$ over the complex unit sphere where the only thing we know about $Y$ is that it is Hermitian.
Neither the density is log-concave, nor the support (unit sphere) is convex.
Our main contribution here is to give an {\em exact} algorithm to compute this integral whose running time depends {\em single exponentially} on the bit complexity of the input $Y$ to it (Theorem \ref{thm:counting}). 
As remarked earlier, because $Y$ is being exponentiated, this is the best one can hope for and also turns out to be sufficient to obtain polynomial time algorithms for computing maximum entropy distributions on $\mathcal{P}_1.$

Interestingly, the algorithm to compute this integral and its proof relies on an  connection between the manifold $\mathcal{P}_1$ and
the probability simplex in $n$ dimensions.
Specifically, one can naturally push forward the entropy maximizing measure from $\mathcal{P}_1$ to a log-linear measure on the corresponding simplex.
There are then algorithms to sample from such a density function on the simplex to estimate such an integral; however, to obtain an $1+\delta$ approximation to it, the running time of these methods depends polynomially on $1/\delta$ instead  $\log 1/\delta$. 
We give an exact algorithm to compute this integral. 
Our method relies on Laplace transforms, is elementary, and
a significant effort is needed to deal with the case when  $Y$ has repeated eigenvalues.
Importantly, this viewpoint also leads us to an exact algorithm for computing such an integral for $\mathcal{P}_k$ for $k>1$ using the  Harish-Chandra-Itzykson-Zuber formula \cite{HarishChandra1957,IZ1980,DuistermaatH1982,Vergne1996}; see Theorem \ref{thm:HCIZ}.

\paragraph{Efficient algorithm via the ellipsoid method.}
Our general ellipsoid method-based algorithm  requires  1) a  full dimensional embedding of hull($\Omega$) in a $d$-dimensional real Hilbert space, 2)  $\mu$ is a balanced measure on $\Omega$, 3) $\Omega$ is contained in a ball of radius $r$, 4) the point $A$ is in the $\eta$-interior of hull($\Omega$) and, 5) that we have an exact counting/integrating oracle.
It runs in time polynomial in $d,1/\eta,\log r$ and $\log 1/\epsilon,$ to solve the dual problem to an additive $\epsilon$; see Theorem \ref{thm:main_algorithm}.
Our bound on the norm of  $Y^\star$ and exact algorithms to compute the dual objective/gradient for  the case of $\mathcal{P}_k$ imply a polynomial time algorithm to compute the entropy maximizing measure in this case when $A$ is in the polynomial interior of hull($\mathcal{P}_k$); see Corollary \ref{cor:maxentropy_algo}.

\subsection{Applications}

\paragraph{SDP rounding.}
One approach to semi-definite programming (SDP) based approximation algorithms,  starting with the work of Goemans-Williamson \cite{GoemansW1995} for the maximum cut problem, is SDP rounding.
Here, typically, $A$ is a positive semi-definite (PSD) matrix, that is computed using a SDP relaxation to some non-convex problem, and one of the goals is to {\em round} $A$ to a vector.
This involves choosing a distribution on the set $\mathcal{V}_1$ defined above, and typical choices have been somewhat magical and lack an explanation.
In the Goemans-Williamson setting,  $A$ is an $n \times n$  PSD matrix, and the density $\nu$ on $\Omega$ they choose to express $A$ as a convex combination is as follows:  pick a vector $v \in \mathbb{R}^n$ from the normal distribution with covariance matrix $A$.
We show that this distribution is the maximum entropy distribution $\nu^\star$ (corresponding to $A$) on $\mathcal{V}_1$ with base measure induced by the Lebesgue measure on $\mathbb{R}^n$, thus giving an optimization characterization of this measure; see Corollary \ref{cor:GW}. 
The proof relies on strong duality  and a closed form expression for the dual objective integral on $\mathcal{V}_1$; see Theorem \ref{thm:strongduality}.

\paragraph{Quantum entropy.}
In quantum mechanics, 
a density matrix $\rho$ is a trace one complex $n \times n$ PSD matrix and describes the statistical state of a system. 
The extreme points in the set of density matrices are the pure states or $\mathcal{P}_1.$
von Neumann defined a notion of entropy \cite{von1955mathematical} of $\rho$ that is computed by first writing $\rho$ as a convex combination $\sum_{i=1}^n \lambda_i u_iu_i^*$, where $\{u_i\}_{i \in [n]}$ is an orthonormal basis for $\C^n,$ and then computing the negative Shannon entropy of the $\lambda_i$'s.
While the von Neumann entropy is a mathematically elegant notion, it was  vigorously argued in the 1970s that it does not capture the uncertainty in $\rho$ \cite{Band1976,Park1977,SlaterEntropy1}. 
In fact, von Neumann's way  to write $\rho$ as a convex combination of pure states can be viewed as ``the most terse'', or entropy minimizing one.
In the same papers, an alternative way to define entropy of a density matrix was suggested -- as the entropy of the entropy maximizing distribution with marginal $\rho$ -- and referred to as the barycentric quantum entropy.
Unlike the von Neumann entropy, that has a simple formula ($-\Tr \rho \log \rho$)), 
 the barycentric entropy did not have an efficient algorithm that could compute it.
Our algorithm to compute entropy maximizing distributions for $\mathcal{P}_1$ mentioned above  directly implies a polynomial time algorithm to compute the barycentric entropy of a density matrix (that is sufficiently in the interior) along with the probability density that achieves it; see Corollary \ref{cor:quantum}.

\paragraph{Entropic barrier function.}
Bubeck and Eldan in \cite{BubeckEldan} proved that the entropic barrier of a convex body $K \subseteq \R^d$ is a $(1+o(1))n$-self-concordant barrier on $K$.
Roughly speaking, this barrier function, for a point in $K$ is defined to be the optimal value of a dual maximum entropy optimization problem when $\Omega=K$ and the measure is the Lebesgue measure on $K$.
The computability of this barrier function for a point $K$ is not known in general.
One obstacle is to get a reasonable bound on the norm of the optimal dual solution. 
An almost direct consequence of Theorem \ref{thm:boundingbox} implies such a bound for points that are sufficiently in the interior of $K$; see Corollary \ref{cor:boundingbox_convex}.    

\section{Preliminaries}\label{sec:preliminaries}

\paragraph{Notation.}
Let $\C,\R,\R_+,\N$ denote the complex, real, nonnegative real, and natural numbers respectively.
For $k,n \in \N$, let $\C^{k \times n}$ and $\R^{k \times n}$ denote the sets of $k \times n$ complex and real matrices respectively.
A matrix $M \in \C^{n \times n}$ is said to be Hermitian if $A = A^*$ where $*$ denotes the conjugate transpose.
A Hermitian matrix $M$ is said to be PD (positive definite) and PSD (positive semidefinite) if its eigenvalues are positive and nonnegative respectively.
For an $n \times n$ matrix $X$, we define $\diag(X)$ to be the length-$n$ vector of the diagonal entries of $X$.
If $x$ is a vector, then we define $\diag(x)$ to be the diagonal matrix with entries the entries of $x$.
For any $k,n \in \N$, we equip the vector space $\C^{k \times n}$ with the Frobenius inner product $\langle Y, Z \rangle := \Tr(YZ^*)$.
We also denote $\|Y\| := \sqrt{\langle Y, Y \rangle}$.
Note that $\langle Y, Z \rangle \in \R$ whenever $Y,Z$ are Hermitian, so that the set of $n \times n$ Hermitian matrices is a real Hilbert space of dimension $n^2$.
Also $\langle Y, Z \rangle \geq 0$ whenever $Y,Z$ are PSD.
We further let $B_\epsilon(Y)$ denote the open $\epsilon$-ball centered at $Y$ in the space in which $Y$ lives (e.g., the $n \times n$ Hermitian matrices).
Finally, we let $\hull(S)$ denote the convex hull of a set $S$ in some ambient vector space.

\paragraph{Manifolds.}
In general, we let $\Omega$ be any smooth manifold that is embedded in a $d$-dimensional real Hilbert space $V$ with inner product $\langle \cdot, \cdot \rangle$.
Let $\mathcal{L}(X)=B$ denote the affine space in which $\hull(\Omega)$ is full dimensional, i.e.,  every element $X \in \hull(\Omega)$ satisfies the equation $\mathcal{L}(X)=B$.
The concrete manifolds we consider are collections of matrices with some  structure.
In particular, for fixed $n \in \N$, consider the following manifold within $\C^{n \times n}$.
An \emph{$n \times n$ rank-$k$ PSD projection} is a PSD matrix with $k$ eigenvalues equal to $1$ and the rest equal to $0$.
\[
  \textstyle  \mathcal{P}_k = \mathcal{P}_k(n) :=  \{n \times n \text{ rank-$k$ PSD projections}\}.
\]
Note that $\mathcal{P}_k$ is also a manifold within the space of $n \times n$ Hermitian matrices.\footnote{Note that $\mathcal{P}_k$ is homeomorphic to a Grassmannian, i.e., the manifold of $k$-dimensional subspaces within an $n$-dimensional space.
The homeomorphism is explicitly given as the map which sends a rank-$k$ PSD projection to the $k$-dimensional subspace given by its image.}
Other manifolds we consider are the complex unit sphere $S_\C^n \subset \C^n$ (which is related to $\mathcal{P}_1$), the manifold of all rank one matrices (not necessarily trace one): $ \mathcal{V}_1:= \{ vv^\top: v \in \mathbb{R}^n \},$ and a convex body $K \subset \mathbb{R}^n$.

We would also like to consider the convex hull of a given manifold $\Omega$.
To make sense of such a notion, we need to consider the manifold as being embedded in some ambient vector space.
This ambient space often the space of $n \times n$ Hermitian matrices in our examples.
In general, we refer to the elements of $\hull(\Omega)$ as \emph{marginals} or \emph{marginals matrices}.

\paragraph{Group actions.}
It is useful to understand the symmetries of some of the manifolds mentioned above in terms of groups that act on them.
Recall that an $n \times n$ unitary matrix is an invertible matrix $U$ for which $U^{-1} = U^*$, and an $n \times n$ orthogonal matrix is an invertible matrix $O$ for which $O^{-1} = O^\top$.
The unitary and orthogonal groups ($U(n)$ and $O(n)$) act on the manifolds discussed above as follows:
\begin{itemize}
\setlength\itemsep{0em}
    \item $U(n)$ acts on column vectors in $S_\C^n$ and on $\hull(S_\C^n)$ by left multiplication.
    \item $U(n)$ acts on $\mathcal{P}_k$ and on $\hull(\mathcal{P}_k)$ by conjugation.
    \item $O(n)$ acts on $\mathcal{V}_1$ and on $\hull(\mathcal{V}_1)$ by conjugation.
\end{itemize}
Note that the actions of $U(n)$ on $S_\C^n$ and on $\mathcal{P}_1$ are compatible in the sense that for $x \in S_\C^n$ and $U \in U(n)$, we have $(Ux)(Ux)^* = U(xx^*)U^*$ where $xx^* \in \mathcal{P}_1$.

\paragraph{Relative interior.}

The convex set $\hull(\Omega)$ is not necessarily full dimensional in the ambient Hilbert space.
To define a notion of interior for $\hull(\Omega)$, we restrict to the minimal affine subspace in which $\Omega$ lives (this is given explicitly by $\mathcal{L}(X) = B$ discussed above).
More generally, we make the following definition.
\begin{definition}[\bf Relative interior] \label{def:interior}
    Fix a convex subset $S$ in a vector space $V$, and let $V_{\mathcal{L},B}$ be the minimal affine subspace in which $S$ lives.
    We say that $Y \in V$ is in the \emph{$\eta$-interior} of $S$ (for $\eta > 0$) if
    \[
        B_\eta(Y) \cap V_{\mathcal{L},B} \subseteq S.
    \]
    We say that $Y$ is in the \emph{interior} of $S$ if there exists $\eta > 0$ such that $Y$ is in the $\eta$-interior of $S$.
\end{definition}
Here we usually consider $S = \mathcal{P}_k(n)$, and we will be interested in the case where $\eta \geq \frac{1}{\text{poly}(n)}$.

\paragraph{Measures and densities.}

Often, the manifolds $\Omega$ we  consider have some geometric structure (e.g.,  it is a manifold with a group action), and we  want to consider measures which  interact nicely with this structure.
To make sure this happens, we restrict to the class of measures which are given by continuous density functions on $\Omega$.
To make sense of this, we  need a natural base measure $\mu$ on $\Omega$ which corresponds to the density function $f(X) \equiv 1$.
(E.g., in the case of $\Omega = \C^n$ or $\Omega = \R^n$, the Lebesgue measure often plays this role.)
In particular, the support of $\mu$ should be equal to $\Omega$.

In the case of $\Omega = \mathcal{P}_k$, there is a canonical measure which is appropriately called the uniform measure:
we define $\mu_k$ be the unique unitarily invariant measure on $\mathcal{P}_k$, where $U(n)$ acts by conjugation (as discussed above).
Hence, equivalently (and more formally), we restrict to the class of measures on $\mathcal{P}_k$ which are absolutely continuous with respect to $\mu_k$.
We prove here the existence of $\mu_k$, a classical result.

\begin{proposition}[Existence of $\mu_k$] \label{prop:invariant_measure}
    There exists a distribution $\mu_k$ on $\mathcal{P}_k$ (which we call the uniform distribution). If $X$ is a random variable distributed according to $\mu_k$, then $X$ and $UXU^*$ have the same distribution for any unitary $U$.
\end{proposition}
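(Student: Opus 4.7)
The plan is to realize $\mathcal{P}_k$ as a homogeneous space under the conjugation action of the compact group $U(n)$, and then obtain $\mu_k$ as the pushforward of the (normalized) Haar measure on $U(n)$. I would rely on the standard fact that every compact Hausdorff topological group admits a unique left- and right-invariant Borel probability measure (the Haar measure), which I take as given.

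First I would verify transitivity of the action. Fix the reference projection $P_0 = \diag(1,\ldots,1,0,\ldots,0)$ with $k$ ones. For any $P \in \mathcal{P}_k$, spectral decomposition provides an orthonormal basis of eigenvectors of $P$ with $k$ eigenvalues equal to $1$ and $n-k$ equal to $0$; arranging these eigenvectors as columns of a unitary $U$ gives $P = U P_0 U^*$. Hence the map $\Phi : U(n) \to \mathcal{P}_k$ defined by $\Phi(U) := U P_0 U^*$ is surjective. It is also continuous (in fact smooth), so $\Phi$ realizes $\mathcal{P}_k$ as a continuous image of the compact group $U(n)$.

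Next I would define $\mu_k := \Phi_* \lambda$, the pushforward of the normalized Haar measure $\lambda$ on $U(n)$. This is a Borel probability measure supported on $\mathcal{P}_k$ (since $\Phi$ has image $\mathcal{P}_k$ by transitivity). To verify unitary invariance, let $V \in U(n)$ and let $X \sim \mu_k$, so $X = \Phi(U)$ for $U \sim \lambda$. Then
\[
V X V^* \;=\; V (U P_0 U^*) V^* \;=\; (VU) P_0 (VU)^* \;=\; \Phi(VU).
\]
By left-invariance of Haar measure, $VU$ has the same distribution as $U$, hence $\Phi(VU)$ has the same distribution as $\Phi(U) = X$. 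Therefore $VXV^*$ and $X$ are equidistributed, which is precisely the unitary invariance claimed.

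There is no real obstacle here; the argument is a textbook application of Haar measure on a compact Lie group acting transitively on its orbit. If desired, I would also note uniqueness: if $\mu_k'$ is any unitarily invariant Borel probability measure on $\mathcal{P}_k$, then integrating a test function $f$ against $\mu_k'$ and using invariance together with Fubini against $\lambda$ shows $\int f\, d\mu_k' = \int f\, d\mu_k$, so $\mu_k$ is the unique such measure. This justifies the definite article in ``\emph{the} uniform distribution.''
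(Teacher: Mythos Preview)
Your argument is correct and is the standard Haar-pushforward construction. The paper, however, takes a different and more hands-on route: it builds $\mu_k$ by an explicit sampling procedure, choosing random complex unit vectors $v_1,\ldots,v_k$ successively orthogonal to one another, stacking them as rows of a $k\times n$ matrix $P$, and setting $X:=P^*P$; unitary invariance is then argued inductively from the unitary invariance of the uniform measure on each sphere. Your approach is cleaner and immediately yields uniqueness (which the paper asserts elsewhere but does not prove in this proposition), while the paper's construction has the practical advantage of describing an explicit, implementable sampler for $\mu_k$---in keeping with the computational flavor of the surrounding material.
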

\begin{proof}
    Pick random complex unit vectors $v_i \in \C^{n+1-i}$ for $i \in [k]$. Note that $v_1 \in \C^n$. Now, map $v_2$ into $v_1^\perp \cong \C^{n-1}$, map $v_3$ into $\{v_1,v_2\}^\perp \cong \C^{n-2}$, etc. to obtain a collection of $k$ orthogonal vectors in $\C^n$. Form an $k \times n$ matrix $P$ by letting the $v_i$ be the rows of $P$. Defining $X := P^*P \in \mathcal{P}_k$ gives a distribution $\mu_k$ on $\mathcal{P}_k$.
    
    For unitary invariance, note that this property holds for the choice of $v_1$ by construction. This can then be inductively applied to $v_2,\ldots,v_k$ by composing the given unitary with the appropriate projection.
\end{proof}

\noindent
We also consider the standard Lebesgue measure on $\mathbb{R}^n$ for convex bodies and its pushforward measure $\mu$ through the map $v \mapsto vv^\top$ on $\mathcal{V}_1$. 
Note that $S_\C^n$ also has a canonical unitarily invariant measure, usually called the Haar measure.
The pushforward of this measure through the map $v \mapsto vv^*$ yields the unitarily invariant measure $\mu_1$ on $\mathcal{P}_1$.

\paragraph{Integration/Counting oracle.}

We are interested in computing the following exponential integral for a given  $Y$ in our Hilbert space $V$.

\begin{definition}[\bf Exponential integrals] \label{def:exp_integral}
    Fix $n \in \N$ and let $\mu$ be a measure with support $\Omega$, a manifold embedded in the real Hilbert space $V$. We define the following function on an  input $Y \in V$:
     $$   \mathcal{E}(Y) = \mathcal{E}_\mu(Y) := \log \int_{\Omega} e^{-\langle Y, X \rangle} d\mu(X).$$
    Whenever $\mu = \mu_k$ and $\Omega = \mathcal{P}_k$, we use the following shorthand notation
     $   \mathcal{E}_k(Y).$
    We sometimes also refer to these integrals as \emph{exponential integrals}.
\end{definition}

\noindent
A \emph{strong integration/counting oracle} for $\Omega$ and $\mu$  outputs two quantities, given an element  $Y$ from the ambient Hilbert space $V$ of $\Omega$:
\begin{enumerate}
    \item $\mathcal{E}_\mu(Y)$
    \item the matrix $\nabla \mathcal{E}_\mu(Y)$, defined so that the following holds for any  $Z \in V$:
    \[
        \langle \nabla \mathcal{E}_\mu(Y), Z \rangle = \left.\frac{d}{dt} \mathcal{E}_\mu(Y+tZ)\right|_{t=0}.
    \]
\end{enumerate}
In the case of $\Omega=\mathcal{P}_k$, $Y$ and $Z$ are Hermitian.
Further, since the measure $\mu_k$ is unitarily invariant, we can assume that $Y$ is diagonal and expect the running time of the counting oracle should  depend polynomially on $n$ and the number of bits needed to represent $e^{-y_i}$ for any $i$, where $y_1,\ldots,y_n$ are the eigenvalues (diagonal elements) of $Y$.

As we will show, in the special case when $\Omega=\mathcal{V}_1$ and $\mu$ is the pushforward of the Lebesgue measure, we can compute the integral $\mathcal{E}_\mu(Y)$ exactly in time polynomial in the bit complexity of $Y$ due to a direct formula.
This happens because the measure $\mu$ is a product measure, which is not the case for $\mu_k$.

\section{The maximum entropy framework}\label{sec:maxentropy}

In this section we present our maximum entropy  convex program.
Fix a manifold $\Omega$ in a $d$-dimensional real Hilbert space with inner product $\langle \cdot, \cdot \rangle$, and let $\mathcal{L}(X)=B$ denote the corresponding affine space containing $\Omega$.
 Let $\mu$ be the base measure on $\Omega$ and $A$ in $\mathcal{K} := \hull(\Omega)$.
 Our goal is to  find a density function $\nu$ with marginal $A$ that minimizes the KL-divergence with respect to $\mu$.

\begin{figure*}
    \centering
    \[
    \hspace{0.4cm}
    \begin{array}[t]{c|c}
        \quad\quad
        \bf Primal
        \quad\quad\quad
        &
        ~
        \bf Dual
        \\

        \quad\quad
        \begin{aligned}[t]
        &\sup_\nu \left[-\int_{\Omega} \nu(X) \log\left(\nu(X)\right) d\mu(X)\right] \\
            &\textrm{subject to:} \\
                &\qquad \nu: \Omega \to \R_{\geq 0}, \text{ $\mu$-measurable} \\
                &\qquad \int_\Omega X \nu(X) d\mu(X) = A \\
                &\qquad \int_\Omega \nu(X) d\mu(X) = 1
        \end{aligned}
        \quad\quad\quad

        &

        ~
        \begin{aligned}[t]
        &\inf_Y F_A(Y) = \inf_Y \left[\langle Y, A \rangle + \log \int_\Omega e^{-\langle Y, X \rangle} d\mu(X)\right] \\
                &\textrm{subject to:} \\
                &\qquad \mathcal{L}(Y)=0
        \end{aligned}
    \end{array}
    \]

    \caption{Primal and dual maximum entropy convex programs for $A$ in the interior of $\hull(\Omega)$.}
    \label{fig:primal_dual}
\end{figure*}

We use the shorthand $\primal_\mu(A)$ (or $\primal_k(A)$ if $\mu = \mu_k$) to refer to this primal optimization program.
We  mainly consider the case of $\mu = \mu_k$ and $\Omega = \mathcal{P}_k$ or $\Omega=\mathcal{V}_1$ with $\mu$ the pushforward of Lebesgue measure.
In these cases $Y$ will comes from some subspace of the $n \times n$ Hermitian matrices.
Drawing from the intuition that these base measures are uniform over the manifold, and hence in some sense maximize entropy, we say the KL-divergence minimizing measure is entropy maximizing. 
However, we note that this framework is also applicable to other base measures,
in particular to the case when $\Omega$ is a convex body in $\mathbb{R}^d$ and $\mu$ is the Lebesgue measure.
The fact that the entropy integral (without the minus sign) is convex as a function of the density $\nu$ follows from the fact that this integral is precisely the KL divergence between the probability distribution corresponding to $\nu$ and the distribution $\mu$.
Convexity of the KL divergence for probability distributions is then a well-known  fact.

Efficiently solving this convex program directly is a priori impossible as the support of $\nu$ is infinite.
To find a succinct representation for the optimal $\nu^\star$, we turn to the dual program (see Section \ref{sect:dualform} for a derivation), which
gives us a nice representation of the max-entropy density function $\nu^\star$.
We  often use the shorthand $\dual_\mu(A)$ (or $\dual_k(A)$ if $\mu = \mu_k$) to refer to this program.

In the case of $\mathcal{P}_k$ with uniform measure $\mu_k$, the optimal solution to $\dual_k(A)$ is given by a Hermitian matrix $Y^\star$.
By strong duality (see Theorem \ref{thm:strongduality}), this in turn shows that the max-entropy density function $\nu^\star$ takes on a nice form:
\[
    \nu^\star(X) \propto e^{-\langle Y^\star, X \rangle}.
\]
As a note, in the case of $\Omega = \mathcal{P}_k$ this matrix $Y^\star$ is only unique up to a shift by a multiple of the identity matrix.
Issues arising from non-uniqueness can be handled by restricting to the minimal affine subspace in which $\hull(\mathcal{P}_k)$ lives, as referred to in the discussion surrounding Definition \ref{def:interior}.
However, as $A$ tends to the boundary of $\hull(\Omega)$, $Y^\star$ can be seen to tend to infinity as the support of the measure $\nu^\star$ tends to lower dimensions.

\section{Formal statement of our results}

\subsection{Mathematical and computational results}

Our first result shows that strong duality holds.

\begin{theorem}[\bf Strong duality]\label{thm:strongduality}
   Let  $\Omega$ be a manifold that is embedded in a $d$-dimensional real Hilbert space with an inner product $\langle \cdot, \cdot \rangle$, and let $\mu$ be a measure supported on $\Omega$.
     For any $A$ in the relative interior of the convex hull of $\Omega$,
     the optimal values of the primal and dual objective functions coincide, and the corresponding max-entropy distribution has density function of the following form for some $Y^\star$:
    \[
        \nu^\star(X) \propto e^{-\langle Y^\star, X \rangle}.
    \]
 \end{theorem}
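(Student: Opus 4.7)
The plan is to derive the dual program via Lagrangian duality, verify Slater's condition using the assumption that $A$ lies in the relative interior of $\hull(\Omega)$, and then invoke a strong-duality theorem for infinite-dimensional convex optimization to conclude.

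First, I would form the Lagrangian for the primal by introducing a multiplier $Y$ (subject to $\mathcal{L}(Y)=0$) for the marginal constraint $\int_\Omega X \nu(X)\,d\mu(X) = A$ and a scalar multiplier $\lambda$ for the normalization constraint. The pointwise first-order condition in $\nu$, namely $-\log \nu(X) - 1 - \langle Y, X\rangle - \lambda = 0$, gives $\nu(X) \propto e^{-\langle Y, X\rangle}$. Substituting this back and absorbing $\lambda$ into the normalization produces the dual objective $F_A(Y) = \langle Y, A\rangle + \log\int_\Omega e^{-\langle Y, X\rangle}\,d\mu(X)$. The restriction $\mathcal{L}(Y)=0$ is natural because $\langle Y, X\rangle$ differs only by an additive constant on $\Omega$ along directions in the kernel of $\mathcal{L}$, and such shifts cancel in the dual objective; this also accounts for the non-uniqueness of $Y^\star$ in cases like $\mathcal{P}_k$.

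Weak duality is immediate from a Gibbs-variational identity: for any primal-feasible $\nu$ and any dual-feasible $Y$, one checks $F_A(Y) + \int_\Omega \nu \log \nu\,d\mu = \kl(\nu \,\|\, \nu_Y) \geq 0$, where $\nu_Y(X) \propto e^{-\langle Y, X \rangle}$ is the Gibbs density. For strong duality, the crucial step is to verify Slater's condition, i.e., to exhibit a primal-feasible density that is strictly positive and bounded on $\Omega$. Since $A$ lies in the relative interior of $\hull(\Omega)$, a neighborhood of $A$ inside the affine span $\mathcal{L}(X) = B$ is contained in $\hull(\Omega)$. By Carath\'eodory, every point in this neighborhood is a convex combination of finitely many elements of $\Omega$; smearing these representations against $\mu$ over a small open neighborhood in $\Omega$ yields a bounded, strictly positive density whose marginal is $A$, supplying the required Slater point. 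Once Slater's condition holds, a Fenchel-Rockafellar-type theorem for infinite-dimensional convex programs gives equality of the primal and dual optima, and the Lagrangian derivation together with complementary slackness forces $\nu^\star(X) \propto e^{-\langle Y^\star, X\rangle}$.

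The main obstacle is the Slater construction coupled with the infinite-dimensional nature of the primal: one must produce a density that is feasible (correct marginal), strictly positive and bounded on $\Omega$ (so that the KL divergence is finite and the constraint qualification is genuinely active), and one must ensure that $\mu$ assigns positive mass to the support of the constructed density. This is exactly where the relative-interior hypothesis on $A$ is used in an essential way; once this step is in place, the remainder is a fairly mechanical application of infinite-dimensional convex duality.
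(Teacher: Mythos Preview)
Your overall strategy---derive the dual via the Lagrangian, verify Slater, then invoke strong duality---matches the paper's, and the dual derivation and Gibbs-variational weak-duality step are fine. The substantive difference lies in how you verify Slater's condition, and here your construction has a gap.

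Your Carath\'eodory-and-smearing argument does not do what you claim. If $A = \sum_i \lambda_i X_i$ with $X_i \in \Omega$ and you replace each point mass $\delta_{X_i}$ by a density on a small $\mu$-neighborhood of $X_i$, two things go wrong. First, the marginal of the smeared density is $\sum_i \lambda_i \bar X_i$ where $\bar X_i$ is the local $\mu$-barycenter near $X_i$, not $X_i$ itself, so you no longer hit $A$ exactly. Second, the resulting density is supported on a finite union of small balls, not strictly positive on all of $\Omega$ as you assert. The first issue can be repaired (choose the $X_i$ in affine general position so that $A$ remains in the convex hull of the perturbed $\bar X_i$, then re-solve for the weights), but this is a nontrivial perturbation step you do not carry out. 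The second issue is an overstatement, though not fatal for the separating-hyperplane argument that follows.

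The paper sidesteps this entirely by reversing the order of the argument. It first proves that the dual optimum $Y^\star$ \emph{exists}, via a coercivity bound: the relative-interior hypothesis on $A$ implies that every closed half-space through points near $A$ carries uniformly positive $\mu$-mass, and this forces $F_A(Y) \to +\infty$ as $\|Y\| \to \infty$, so the infimum is attained. The exponential density $\nu^\star(X) \propto e^{-\langle Y^\star, X\rangle}$ is then automatically strictly positive and bounded, and the first-order condition $\nabla F_A(Y^\star) = 0$ says precisely that its marginal is $A$. Thus the dual optimizer itself furnishes the Slater point, and strong duality follows without any ad hoc construction.
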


\noindent
The proof of this result uses standard techniques and appears in the appendix (Sections \ref{sect:sd_proof} and \ref{sec:general_sd}).
This result applied to $\mathcal{P}_k$ and $\mu_k$ shows that optimizing $\dual_k(A)$ is in fact equivalent to optimizing $\primal_k(A)$, and therefore the max-entropy measure has the exponential form described above.

With strong duality in hand, we focus on the computability of the optimal matrix $Y^\star$  for the dual program.
To do this we use a version of the ellipsoid algorithm (see Theorem \ref{thm:ellipsoid} and the algorithm that follows), for which we need two things.

First, we need an upper bound on some norm of the dual optimal solution.
If $Y^\star$ is the optimal solution, then the number of iterations of the ellipsoid algorithm depends on $\log \|Y^\star \|$.
That said, it may seem that a bound depending on $e^{1/\eta}$, where $\eta$ is such that $B_\eta(A) \subset \mathrm{hull}(\Omega)$, is enough to achieve polynomial dependence on $\frac{1}{\eta}$.
However, this is not enough,  since the integral appearing in the dual is polynomially dependent on the number of bits needs to represent $e^{-y_i}$, where the $y_i$'s are the entries or eigenvalues of a given input $Y$.
Hence, we actually need polynomial dependence on $\frac{1}{\eta}$, which is achieved in our bounding box result below.
Note that this issue is not  surprising, as it crops up in exactly the same way in the discrete maximum entropy case (see \cite{SinghV14}).

We give here a bounding box result which is more general than we  need for the rank-$k$ projections case ($\Omega = \mathcal{P}_k$ and $\mu = \mu_k$).
It relies on a key ``balance'' property of the measures.
This notion extends important properties of the discrete uniform measure to continuous measures on manifolds and is one of the key notions we introduce.

\begin{definition}[\bf Balanced measure] \label{def:balanced}
   A measure $\mu$ is said to be \emph{balanced} if for any $\delta > 0$ and $X \in \Omega$, we have that at least $\exp(-\mathrm{poly}(\delta^{-1}, d))$ of the mass of $\mu$ is contained in the $\delta$-ball about $X$ (where $d$ is the dimension of the ambient space in which $\hull(\Omega)$ lives). 
\end{definition}
We see in Definition \ref{def:twoparam_interior} how this notion can be used to give a more refined notion of interior (beyond the $\eta$ parameter discussed above).
Conceptually, it allows us to give an measure-theoretic relaxation of the notion of a separating hyperplane.

\begin{theorem}[\bf Bounding box]\label{thm:boundingbox}
  Let    
   $\mu$ be a measure supported on a manifold $\Omega$ embedded in a $d$-dimensional real Hilbert space. 
    Suppose that $\mu$ is balanced, in the sense of Definition \ref{def:balanced}.
    Further, let $A$ be an element of the $\eta$-interior of the convex hull of $\Omega$.
    Then there is an optimal solution $Y^\star$ to the dual program such that:
    \[
        \| Y^\star \| \leq \poly(\eta^{-1}, d).
    \]
\end{theorem}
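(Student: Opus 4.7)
The plan is to show that $F_A(Y)$ grows at least linearly in $\|Y\|$ once $\|Y\|$ exceeds $\poly(\eta^{-1}, d)$, while $F_A(0)$ is a benign baseline; since $Y^\star$ minimizes $F_A$ and hence satisfies $F_A(Y^\star) \leq F_A(0)$, this immediately forces $\|Y^\star\| \leq \poly(\eta^{-1}, d)$.

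The first step is to exploit the $\eta$-interior hypothesis to produce a point of $\Omega$ on which the linear functional $\langle Y, \cdot \rangle$ is noticeably smaller than its value at $A$. Concretely, for any $Y$ with $\mathcal{L}(Y) = 0$ and $Y \neq 0$, set $\widehat Y := Y/\|Y\|$. Then $A - \eta \widehat Y$ lies in the affine subspace containing $\hull(\Omega)$ and in the $\eta$-ball around $A$, so it belongs to $\hull(\Omega)$. Writing $A - \eta \widehat Y$ as a convex combination of points of $\Omega$ (Carathéodory), at least one such point $X_0 \in \Omega$ must satisfy
\[
    \langle Y, X_0 \rangle \;\leq\; \langle Y, A\rangle - \eta \|Y\|.
\]

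The second step is to transfer this pointwise bound to a lower bound on the entire integral, using the balance property of $\mu$ to show that $X_0$ "drags along" a nontrivial chunk of mass. For any $\delta > 0$, on $B_\delta(X_0) \cap \Omega$ we have $\langle Y, X\rangle \leq \langle Y, X_0 \rangle + \delta \|Y\| \leq \langle Y, A\rangle - (\eta - \delta)\|Y\|$, so by Definition \ref{def:balanced},
\[
    \int_\Omega e^{-\langle Y, X\rangle}\, d\mu(X)
    \;\geq\; \mu\bigl(B_\delta(X_0) \cap \Omega\bigr) \cdot e^{-\langle Y, A\rangle + (\eta - \delta)\|Y\|}
    \;\geq\; e^{-\poly(\delta^{-1},d)}\,\mu(\Omega)\, e^{-\langle Y, A\rangle + (\eta - \delta)\|Y\|}.
\]
Taking logarithms and adding $\langle Y, A\rangle$,
\[
    F_A(Y) - F_A(0) \;\geq\; (\eta - \delta)\|Y\| - \poly(\delta^{-1}, d),
\]
and choosing $\delta = \eta/2$ yields $F_A(Y) - F_A(0) \geq \tfrac{\eta}{2}\|Y\| - \poly(\eta^{-1}, d)$.

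The final step combines this with optimality: $F_A(Y^\star) \leq F_A(0)$, so $\tfrac{\eta}{2}\|Y^\star\| \leq \poly(\eta^{-1}, d)$, giving $\|Y^\star\| \leq \poly(\eta^{-1}, d)$. The main obstacle is the second step: identifying the correct way to phrase the balance property so that a single "witness" point of $\Omega$ (produced from the interior hypothesis) actually certifies enough measure under $\mu$. Secondary technical points — ensuring $A - \eta \widehat Y$ really lies in $\hull(\Omega)$ (using that $\mathcal{L}(Y) = 0$ so $\widehat Y$ is tangent to the affine subspace), handling the case where the infimum over $\Omega$ of $\langle Y, X\rangle$ is not attained (passing to a near-optimizer suffices), and confirming $F_A(0) = \log \mu(\Omega)$ is well-defined under the implicit assumption that $\mu$ has finite total mass — are all routine once the core "interior $\Rightarrow$ linear growth" mechanism is set up.
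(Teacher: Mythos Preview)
Your proof is correct and is essentially the paper's argument carried out directly, without the intermediate ``two-parameter interior'' abstraction. The paper packages the same idea by first defining an $(\eta,\delta)$-interior (every half-space through the $\eta$-ball around $A$ carries $\geq \delta$ mass), proving the bound $\|Y^\star\|\leq \eta^{-1}\log(\delta^{-1})$ from that definition, and then showing that balance converts the one-parameter $\eta$-interior into the $(\eta/2,\exp(-\poly(\eta^{-1},d)))$-interior. Your Carath\'eodory step---producing a witness $X_0\in\Omega$ with $\langle Y,X_0\rangle\leq\langle Y,A\rangle-\eta\|Y\|$ and then invoking balance on $B_{\eta/2}(X_0)$---is exactly the mechanism behind that conversion, just done inline; the paper's formal proof asserts (and its overview explains via ``translating a ball toward a point of $\Omega$'') precisely that any half-space through a deep-enough point contains an $\eta/2$-ball about some $X\in\Omega$, which is what your witness $X_0$ supplies. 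The only cosmetic wrinkle is that $A-\eta\widehat Y$ sits on the boundary of the open $\eta$-ball, so strictly you should use $A-(\eta-\epsilon)\widehat Y$ and let $\epsilon\downarrow 0$ (or note $\hull(\Omega)$ is closed in the cases of interest); you already flag this kind of limiting as routine.
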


\noindent
Corollary \ref{cor:boundingbox_rank_k}  and Corollary \ref{cor:boundingbox_convex} give bounds for  rank-$k$ projections and convex bodies as corollaries.

\begin{remark} \label{rem:boundingbox_differences}
   Our bounding box result significantly generalizes  the discrete case (Theorem 2.7 in \cite{SinghV14}).
    Uniform distribution in the discrete case has atoms of uniformly strictly positive (at worst singly-exponentially small) mass at all points, and this implies a bound on optimal dual solutions.
    In the continuous case this is no longer true,  the notion of balance then fills the gap.
\end{remark}

\noindent
Second, at each step of the ellipsoid algorithm, we need to be able to evaluate the dual objective function and its gradient at given input $Y$.
The hardest part of such a computation comes in evaluating $\mathcal{E}_\mu$, the exponential integral portion of the objective function.
We show that if we have access to such an evaluation oracle, then under very general conditions, we can compute the maximum entropy distribution.

\begin{theorem}[Ellipsoid method-based general algorithm] \label{thm:main_algorithm}
    Let $\mu$ be a balanced measure with support on a manifold $\Omega$ embedded in a $d$-dimensional real Hilbert space. 
    Let the affine space in which $\Omega$ lies, $\mathcal{L}(X) = B$, be given as input $(\mathcal{L},B)$.
    Assume that $\Omega$ is 
    contained in a ball of radius $r$. There exists an algorithm that, given $A$ in the $\eta$-interior of $\hull(\Omega)$, any $\epsilon > 0$, and a strong counting/integration oracle for the exponential integral $\mathcal{E}_\mu(Y)$, returns $Y^\circ$ such that
    \[
        F_A(Y^\circ) \leq F_A(Y^\star) + \epsilon
    \]
    where $F_A$ is the objective function for the dual program $\dual_\mu(A)$, and $Y^\star$ is an optimum of the dual program. The running time of the algorithm is polynomial in $d$, $\eta^{-1}$, $\log(\epsilon^{-1})$, $\log(r)$, and the number of bits needed to represent $A$, $\mathcal{L}$, and $B$.
\end{theorem}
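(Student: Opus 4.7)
The plan is to run the ellipsoid method for convex optimization on the dual program $\dual_\mu(A)$, restricted to the linear subspace $V_0 := \{Y : \mathcal{L}(Y) = 0\}$ in which the dual lives. Since $(\mathcal{L},B)$ is supplied as input, an orthonormal basis for $V_0$ (of dimension at most $d$) can be computed up front with polynomial bit overhead, and we henceforth identify $V_0$ with a Euclidean space. The four ingredients we need are: (i) convexity of $F_A$, (ii) a bound on the norm of an optimum $Y^\star$, (iii) a Lipschitz bound on $F_A$ on the search region, and (iv) a first-order oracle for $F_A$. Item (i) is immediate: $\langle Y, A\rangle$ is linear in $Y$, and $\mathcal{E}_\mu(Y)=\log\int_\Omega e^{-\langle Y,X\rangle}d\mu(X)$ is the log moment-generating function of the pushforward of $\mu$ under $X\mapsto -X$, hence convex in $Y$ by H\"older's inequality. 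Item (ii) is exactly Theorem~\ref{thm:boundingbox}, which supplies a radius $R = \poly(\eta^{-1},d)$ with $\|Y^\star\| \leq R$; so it suffices to restrict the search to the ball of radius $R$ inside $V_0$.

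For (iii), differentiating under the integral gives
\[
    \nabla F_A(Y) \;=\; A - \mathbb{E}_{X\sim \nu_Y}[X],
    \qquad \nu_Y(X)\propto e^{-\langle Y,X\rangle}.
\]
Since $\Omega \subseteq B_r(0)$, both $A$ (which lies in $\hull(\Omega)$) and $\mathbb{E}_{\nu_Y}[X]$ have norm at most $r$, hence $\|\nabla F_A(Y)\|\leq 2r$ uniformly on $V_0$. Thus $F_A$ is $2r$-Lipschitz on the ball of radius $R$, and in particular to guarantee $F_A(Y^\circ)-F_A(Y^\star)\leq \epsilon$ it suffices to produce an iterate within Euclidean distance $\epsilon/(2r)$ of $Y^\star$. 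For (iv), the strong integration/counting oracle returns $\mathcal{E}_\mu(Y)$ and $\nabla \mathcal{E}_\mu(Y)$; adding $\langle Y,A\rangle$ and $A$ respectively gives $F_A(Y)$ and $\nabla F_A(Y)$.

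With these in hand, I would invoke the standard first-order ellipsoid method for convex minimization (the sliding-objective variant): initialized with the Euclidean ball of radius $R$ in $V_0$, it produces after $N = O(d^2 \log(Rr/\epsilon))$ iterations a point $Y^\circ$ with $F_A(Y^\circ) \leq F_A(Y^\star) + \epsilon$, using one oracle call and $\poly(d,\log(Rr/\epsilon))$ arithmetic operations per iteration. Substituting $R=\poly(\eta^{-1},d)$ makes $N$ polynomial in $d$, $\eta^{-1}$, $\log r$, and $\log(\epsilon^{-1})$, matching the stated bound; the bit-complexity of $A,\mathcal{L},B$ enters only through the initial basis computation and the coordinate change.

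The main obstacle I expect to handle is the interaction between the finite-precision iterates of the ellipsoid method and the bit-complexity requirements of the counting oracle: the oracle's cost depends polynomially on the bit-complexity of its input $Y$, so iterates must be rounded at every step to $\poly(d,\log(Rr/\epsilon))$ bits without destroying convergence. This is the standard ``ellipsoid method with approximate subgradients'' scenario, and it works here because the Lipschitz constant $2r$ enters only through $\log r$; choosing the rounding so that its per-step error decays faster than the ellipsoid's volume (equivalently, a multiplicative factor smaller than the shrinkage per iteration) preserves both the $O(d^2\log(Rr/\epsilon))$ iteration count and the polynomial dependence claimed in the theorem.
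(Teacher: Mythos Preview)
Your proposal is correct and follows essentially the same route as the paper: restrict to the subspace $V_\mathcal{L}=\{Y:\mathcal{L}(Y)=0\}$, invoke the bounding box Theorem~\ref{thm:boundingbox} to confine the search to a ball of radius $R=\poly(\eta^{-1},d)$, feed the strong counting oracle (augmented by the linear term $\langle Y,A\rangle$) into the ellipsoid method, and read off the iteration count. The only cosmetic difference is that the paper controls the additive error by bounding $\sup F_A-\inf F_A\le 4rR\sqrt{d}$ on the box and choosing the ellipsoid parameter $\beta=\epsilon/(4rR\sqrt{d})$, whereas you use the equivalent $2r$-Lipschitz bound on $\nabla F_A$; your additional remarks on rounding iterates to keep the oracle's input bit-complexity polynomial are a detail the paper leaves implicit.
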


\noindent
Our next result says that in fact we have an efficient strong counting oracle for $\mathcal{E}_k$ on the domain $\mathcal{P}_k$ with measure $\mu_k$.

\begin{theorem}[Counting oracle]\label{thm:counting}
  There is an algorithm that, given $n \in \N$, 
  $k \in [n]$, an $n \times n$ real diagonal matrix $Y = \diag(y)$, and a $\delta > 0$, returns numbers $\bar{E}, \bar{G}$ such that 
    \begin{enumerate}
        \item $|\bar{E} - \mathcal{E}_k(Y)| \leq \delta$
        \item $|\bar{G} - \nabla \mathcal{E}_k(Y) | \leq \delta$, 
    \end{enumerate}  
    where $\mathcal{E}_k$ is the exponential integral defined above (and in Definition \ref{def:exp_integral}). The running time of the algorithm is polynomial in $n$, $\log(\frac{1}{\delta})$, and the number of bits needed to represent $e^{-y_i}$ for any $i \in [n]$.

\end{theorem}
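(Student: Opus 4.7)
The plan is to derive an exact closed-form formula for $\mathcal{E}_k(Y)$ as a sum indexed by $k$-subsets of $[n]$ of rational-exponential terms in the eigenvalues $y_1,\ldots,y_n$ of $Y$, extend it to degenerate (repeated eigenvalue) cases by taking limits, and then convert the resulting expression into a polynomial-time high-precision algorithm.

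To isolate the core idea I would first treat $k=1$. Since $\langle Y, vv^*\rangle = \sum_i y_i |v_i|^2$ for $v \in S_\C^n$, and the pushforward of $\mu_1$ under $v \mapsto (|v_1|^2,\ldots,|v_n|^2)$ is the uniform measure on the probability simplex $\Delta^{n-1}$, we have
\[
\int_{\mathcal{P}_1} e^{-\langle Y, X\rangle}\, d\mu_1(X) \;=\; (n-1)! \int_{\Delta^{n-1}} e^{-\sum_i y_i p_i}\, dp,
\]
whose right-hand side equals the value at $t=1$ of the inverse Laplace transform of $\prod_i (s+y_i)^{-1}$; for pairwise distinct $y_i$'s, partial fractions give the residue sum $\sum_i e^{-y_i}/\prod_{j\neq i}(y_j - y_i)$. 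For general $k$, unitary invariance of $\mu_k$ rewrites the integral as $\int_{U(n)} e^{-\Tr(Y U P_0 U^*)}\, dU$ with $P_0$ a fixed rank-$k$ projection, and the Harish-Chandra-Itzykson-Zuber formula, after degenerating the eigenvalues of $P_0$ to the $k$-fold value $1$ and $(n-k)$-fold value $0$, yields a sum indexed by $k$-subsets $S \subseteq [n]$ of terms of the form $e^{-\sum_{i\in S} y_i}/\prod_{i\in S,\, j \notin S}(y_j - y_i)$ multiplied by an explicit combinatorial constant.

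The main technical obstacle, flagged in the introduction, is that both the partial-fraction expansion and the HCIZ evaluation produce $0/0$ singularities when some $y_i$'s coincide, even though the integral is entire in $Y$ as a Laplace transform of a compactly supported measure. I would resolve this by grouping the $y_i$'s into clusters of equal values with multiplicities $m_1,\ldots,m_r$ and replacing each simple-pole contribution by an iterated residue: each cluster at a value $z_s$ contributes a derivative of order $m_s-1$ of an explicit rational-exponential function, and analogous nested-derivative formulas arise from HCIZ when $k>1$. The gradient $\partial_{y_i}\mathcal{E}_k(Y) = -\mathbb{E}_{\nu^\star}[X_{ii}]$ is then obtained by symbolically differentiating these formulas with respect to $y_i$.

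For the algorithm, given target additive precision $\delta$, I would cluster the input eigenvalues at a threshold $1/\poly(n,1/\delta)$, apply the degenerate-limit formula within each cluster to eliminate catastrophic cancellation between nearly-equal eigenvalues, and evaluate the resulting sum of at most $\binom{n}{k}$ terms using fixed-precision arithmetic with $\poly(n,\log(1/\delta))$ bits per $e^{-y_i}$; a standard error-propagation analysis then yields additive error $\delta$ in both $\mathcal{E}_k(Y)$ and $\nabla \mathcal{E}_k(Y)$ within the claimed running time. The hardest part will be the degenerate-eigenvalue case---both symbolically writing the HCIZ limit cleanly when $Y$ and $P_0$ both have repeated eigenvalues, and quantitatively bounding the precision loss in the clustered evaluation---which is exactly the ``significant effort'' remarked on in the introduction.
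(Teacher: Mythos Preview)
Your overall strategy---simplex pushforward and Laplace transform for $k=1$, HCIZ for general $k$, and limiting for repeated eigenvalues---matches the paper. But there is a genuine gap in the running-time claim.

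You write the HCIZ output as ``a sum indexed by $k$-subsets $S \subseteq [n]$'' and then propose to ``evaluate the resulting sum of at most $\binom{n}{k}$ terms.'' That is exponential in $n$ for $k$ near $n/2$, so it cannot give a $\poly(n)$ algorithm. The paper avoids this by never expanding the HCIZ determinant $\det([e^{-y_i\beta_j}])$ into a subset sum. Instead, it applies the L'H\^opital/limit argument column-by-column (first to the $\beta$-side to degenerate to $P_0 = \diag(1^k,0^{n-k})$, then to the $y$-side for repeated eigenvalues), and observes that because each variable appears in a single column, the limiting object is again the determinant of an explicit $n\times n$ matrix $M^{(k)}(-y)$ whose entries are polynomials in $y_i$ and $e^{-y_i}$ of polynomial bit complexity. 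Computing one $n\times n$ determinant is $\poly(n)$; summing $\binom{n}{k}$ residues is not. The same determinantal structure survives one more differentiation for the gradient, yielding $\det(M^{(k)}_p(-y))/\det(M^{(k)}(-y))$.

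A secondary point: your plan for repeated eigenvalues---cluster at threshold $1/\poly(n,1/\delta)$ and do a numerical error-propagation analysis---is more work than needed and is exactly the ``hardest part'' you anticipate. The paper instead derives, for each multiplicity pattern $(m_1,\ldots,m_r)$, an \emph{exact} determinantal formula valid at that pattern (the L'H\^opital limit lands on a well-defined matrix $M^{(k)}(-y)$ with no small denominators), so there is no catastrophic cancellation to control. You simply read off the multiplicities of the input, write down the corresponding matrix, and compute its determinant; the only approximation is in evaluating $e^{-y_i}$ to the required precision.
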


\noindent
The proof of this theorem for $k=1$ is elementary but relies on the interesting connection between the complex unit sphere and the probability simplex. 
This connection also yields an exact sampling algorithm; see Proposition \ref{prop:rank_one_sampling}.
For $k>1$, the proof of the theorem above relies on the Harish-Chandra-Itzykson-Zuber formula \cite{HarishChandra1957}, \cite{IZ1980}; see Theorem \ref{thm:HCIZ}.

\begin{remark}
In the case of $\mathcal{V}_1$ with the pushforward of Lebesgue measure, there is an exact formula to compute the corresponding dual optimum for positive definite marginals $A$: $Y^\star = \frac{1}{2}A^{-1}$; 
  see Corollary \ref{cor:sdp_rounding}.
       Positive-definiteness of the input $Y$ is in fact required for the dual objective to be finite, which is in stark contrast with the $\mathcal{P}_k$ case where any Hermitian matrix is allowed.
    These points suggest a conceptual divide between the Lebesgue measure case and the rank-$k$ projections case.
We do not expect such a formula for $Y^\star$ in the case of $\mathcal{P}_1$ and, indeed, the lack of one has been one of the obstacles for efficient algorithms for quantum barycentric entropy and computing the normalizing constant of the matrix Bingham distribution.
\end{remark}

\begin{remark}
In this paper we primarily consider the best possible setting where the running time of the counting oracle depends logarithmically on the accuracy. 
We refer to such counting oracles as {\em exact}.
We note that our framework does allow for counting oracles where the dependence is polynomially in $1/\delta$.
\end{remark}

\begin{remark}
    Guler in \cite{guler1996barrier} studies the characteristic function of a convex cone.
    In our language, the characteristic function of a cone is the exponential integral $\mathcal{E}_K(y)$ with respect to the Lebesgue measure on the dual cone $K$:
    \[
        \mathcal{E}_K(y) = \log \int_K e^{-\langle y, x \rangle} dx.
    \]
    For the case of homogeneous convex cones, Guler gives a nice way to construct explicit formulas for the characteristic function.
    (A homogeneous cone is a cone $K$ such that for all $u,v \in K$ theres is a linear isomorphism of $K$ which maps $u$ to $v$.
    Orthants, Lorentz cones, and semidefinite cones are all homogeneous.
    See Sections 3 and 7 of \cite{guler1996barrier} for more details.)
    Given a fixed vector $e$ in the interior of $K$, any other vector $y$ in the interior of $K$, and an automorphism $A_y$ of $K$ mapping $e$ to $y$, the dual objective for $K$ can be written up to additive constant as:
    \[
        F_\theta(y) = \langle y, \theta \rangle - \log \int_K e^{-\langle y, x \rangle} dx = \langle y, \theta \rangle - \frac{1}{2} \log(\det(A_yA_y^\top)).
    \]
    Such an explicit formula gives a route to efficiently computing the dual objective function in this case.
\end{remark}

\noindent
The bounding box and counting oracle for $\mu_k$ and $\mathcal{P}_k$ then imply that the ellipsoid method-based algorithm from Theorem \ref{thm:main_algorithm} gives a polynomial time algorithm for approximately computing $Y^\star$, the optimum of the program $\dual_k(A)$.

\begin{corollary}[Ellipsoid method-based efficient algorithm for $\mathcal{P}_k$]\label{cor:maxentropy_algo}
    There exists an algorithm that, given $n \in \N$, $k \in [n]$, a trace-$k$ PD matrix $A$ in the $\eta$-interior of the convex hull of the set of $n \times n$ rank-$k$ PSD projection matrices (i.e., $\hull(\mathcal{P}_k)$), and an $\epsilon > 0$, returns a Hermitian matrix $Y^\circ$ such that
    \[
        F_A(Y^\circ) \leq F_A(Y^\star) + \epsilon,
    \]
    where $F_A$ is the dual objective function and $Y^\star$ is an optimal solution to the dual program $\dual_k(A)$. The running time of the algorithm is polynomial in $n$, $\frac{1}{\eta}$, $\log(\frac{1}{\epsilon})$, and the number of bits needed to represent $A$.
\end{corollary}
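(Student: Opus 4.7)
The plan is to verify that every hypothesis of Theorem \ref{thm:main_algorithm} is satisfied in the specific setting $\Omega = \mathcal{P}_k$, $\mu = \mu_k$, and then invoke that theorem as a black box. The ambient real Hilbert space is the space of $n \times n$ Hermitian matrices with the Frobenius inner product, which has dimension $d = n^2$. The convex hull $\hull(\mathcal{P}_k)$ lives in the affine subspace $\{X : \Tr(X) = k\}$, so the pair $(\mathcal{L},B)$ can be taken to be $(\Tr, k)$ and has constant bit complexity.

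First I would dispose of the geometric preconditions. Any $X \in \mathcal{P}_k$ is a PSD idempotent, so $\|X\|_F^2 = \Tr(X^2) = \Tr(X) = k$; hence $\mathcal{P}_k$ lies in a ball of radius $r = \sqrt{k} \leq \sqrt{n}$ and the resulting $\log r = O(\log n)$ is absorbed into the polynomial bound. Balance of $\mu_k$ in the sense of Definition \ref{def:balanced} is the structural input required by the bounding box; this is precisely what is established on the way to Corollary \ref{cor:boundingbox_rank_k}, which I would cite to conclude $\|Y^\star\| \leq \poly(n, 1/\eta)$. Since $\mu_k$ is unitarily invariant, one also gets that the diameter of the feasible region for the dual program, after restriction to the hyperplane $\mathcal{L}(Y)=0$, is polynomially bounded in the natural parameters.

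Next I would supply the strong counting oracle required by Theorem \ref{thm:main_algorithm} using Theorem \ref{thm:counting}. A small but essential reduction is needed here: Theorem \ref{thm:counting} is stated for diagonal inputs $Y = \diag(y)$, whereas the ellipsoid routine will hand the oracle arbitrary Hermitian matrices. Unitary invariance of $\mu_k$ gives $\mathcal{E}_k(Y) = \mathcal{E}_k(UYU^*)$ for every unitary $U$, and the gradient transforms equivariantly under conjugation. So given a general Hermitian query $Y$, the procedure is to diagonalize $Y = U\diag(y)U^*$, call Theorem \ref{thm:counting} with input $\diag(y)$ to obtain $\bar{E}$ and a diagonal gradient $\bar{G}_{\mathrm{diag}}$, then return $\bar{E}$ and $U \bar{G}_{\mathrm{diag}} U^*$. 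The bounding box guarantees $\|Y\| \leq \poly(n,1/\eta)$ throughout the ellipsoid run, so each eigenvalue $y_i$ has the same polynomial magnitude, and the bit complexity of $e^{-y_i}$ stays polynomial in the input parameters; Theorem \ref{thm:counting} therefore runs in polynomial time per query.

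With the five hypotheses of Theorem \ref{thm:main_algorithm} in place (finite-dimensional embedding, balanced $\mu_k$, explicit affine description, bounded radius, and strong counting oracle), the conclusion follows directly from that theorem: the algorithm returns $Y^\circ$ with $F_A(Y^\circ) \leq F_A(Y^\star) + \epsilon$ in time polynomial in $d = n^2$, $1/\eta$, $\log(1/\epsilon)$, $\log r$, and the bit complexity of $A$. The only nontrivial moving piece in this proof is the diagonalization-and-conjugation reduction that bridges the Hermitian interface of the ellipsoid loop with the diagonal interface of Theorem \ref{thm:counting}, together with a routine bookkeeping argument that propagating the $\delta$-error of the counting oracle through the ellipsoid method only costs a polynomial factor in the final additive accuracy $\epsilon$.
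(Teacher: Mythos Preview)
Your proposal is correct and follows the same high-level template as the paper (verify the hypotheses of Theorem~\ref{thm:main_algorithm} for $\mathcal{P}_k$, then invoke it), but the reconciliation between the Hermitian ellipsoid search and the diagonal-only counting oracle of Theorem~\ref{thm:counting} is handled differently. The paper (in the proof of Corollary~\ref{cor:rank_k_algorithm_in_section}) first conjugates $A$ to diagonal form once, then invokes the Schur--Horn theorem (Corollary~\ref{cor:schur_horn_opt}) to conclude that the optimal $Y^\star$ may be taken diagonal in the same basis, and thereafter runs the ellipsoid entirely inside the space of real traceless diagonal matrices. Your route instead leaves $A$ untouched, runs the ellipsoid over all traceless Hermitian matrices, and diagonalizes the query $Y$ at every oracle call, using the spectral invariance $\mathcal{E}_k(Y)=\mathcal{E}_k(UYU^*)$ and equivariance of the gradient. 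Both are valid; the paper's reduction drops the search dimension from $n^2-1$ to $n-1$ and requires only a single diagonalization, whereas your approach sidesteps Schur--Horn entirely at the cost of a diagonalization per iteration and the attendant numerical bookkeeping you allude to.
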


\noindent
We further discuss the closeness of the distributions associated to $Y^\circ$ and $Y^\star$ from the previous Corollary in Appendix \ref{sec:closeness}.

\begin{remark} \label{rem:boundingbox_polyeta}
    Notice that the dependence on $\frac{1}{\eta}$ means that we do not achieve a polynomial time algorithm for $A$ near the boundary of $\hull(\mathcal{P}_k)$.
    This dependence comes from the fact that the bounding box (Theorem \ref{thm:boundingbox}) is dependent on $\frac{1}{\eta}$.
    One may then naturally ask whether this bounding box dependence can be improved.
    It turns out that it cannot in this case, see Remark \ref{rem:bound_dependence}.
    Note that this differs from the discrete case, where in \cite{StraszakV19} the authors are able to remove this $\frac{1}{\eta}$ dependence under certain assumptions on the polytope.
\end{remark}

\subsection{Applications} \label{sec:applications}

\paragraph{Barycentric quantum entropy.}

In \cite{SlaterEntropy1}, Slater discusses the notion of barycentric quantum entropy of a density matrix, and compares it to that of von Neumann entropy.
His investigation of this notion was prompted by the work of Band and Park \cite{Band1976,Park1977}, who critiqued the use of von Neumann entropy as a good indicator of the uncertainty of the given density matrix.
In particular, they argue that a better notion of entropy would relate to distributions on all possible pure states, whereas the von Neumann entropy is derived from the discrete distribution on the pure states corresponding to eigenvectors of the matrix.
In response to this, Slater defines a notion of quantum entropy in terms of a max-entropy program on the set of all pure states.
He then goes on to show how one might determine the quantum entropy in a few specific cases.

\begin{definition}[Barycentric quantum entropy]
    Let $\rho$ be an $n \times n$ Hermitian density matrix (trace-1, positive semidefinite). Then the \emph{barycentric quantum entropy} of $\rho$ is defined  (in our notation) as:
$  
        H_b(\rho) := \inf_\nu \int_{\mathcal{P}_1} \nu(X) \log(\nu(X)) d\mu_1(X)
 $
    subject to
    \[
        \nu(X) \geq 0 ~\;  \forall X \in \mathcal{P}_1 \qquad \text{and} \qquad \int_{\mathcal{P}_1} \nu(X) d\mu_1(X) = 1 \qquad \text{and} \qquad \int_{\mathcal{P}_1} X \nu(X) d\mu_1(X) = \rho,
    \]
    where $\mathcal{P}_1$ denotes the set of pure states and $\mu_1$ denotes the unitarily invariant measure  on $\mathcal{P}_1.$
\end{definition}

\noindent
Our results for computing max-entropy measures on $\mathcal{P}_1$ immediately imply efficient computability of the barycentric quantum entropy for 
density matrices that are polynomially in the interior.

\begin{corollary}[Computability of  barycentric quantum entropy]\label{cor:quantum}
    There exists an algorithm that, given a Hermitian density matrix $\rho$ in the $\eta$-interior of the set of Hermitian density matrices and an $\epsilon > 0$, returns a number $\bar{H}$ such that $|\bar{H} - H_b(\rho)| < \epsilon$. The running time of the algorithm is polynomial in $n$, $\frac{1}{\eta}$, $\log(\frac{1}{\epsilon})$, and the number of bits needed to represent $\rho$.
\end{corollary}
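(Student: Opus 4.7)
The plan is to reduce the computation of $H_b(\rho)$ to approximately solving the dual program $\dual_1(\rho)$ and then evaluating the dual objective at the returned point. First observe that the set of Hermitian density matrices is exactly $\hull(\mathcal{P}_1)$, so the $\eta$-interior hypothesis on $\rho$ in the corollary statement coincides with the hypothesis of Corollary \ref{cor:maxentropy_algo} for $k=1$. Applying Theorem \ref{thm:strongduality} to $\Omega = \mathcal{P}_1$ with the unitarily invariant measure $\mu_1$, strong duality gives
\[
    H_b(\rho) \;=\; \inf_\nu \int_{\mathcal{P}_1} \nu \log \nu \, d\mu_1 \;=\; -\primal_1(\rho) \;=\; -F_\rho(Y^\star),
\]
where $F_\rho(Y) = \langle Y, \rho \rangle + \mathcal{E}_1(Y)$ and $Y^\star$ is any optimizer of $\dual_1(\rho)$. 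Thus it suffices to evaluate $F_\rho$ at a near-optimal dual point to additive accuracy $\epsilon$.

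Next, I would invoke Corollary \ref{cor:maxentropy_algo} with $k=1$ and accuracy parameter $\epsilon/2$ to produce a Hermitian matrix $Y^\circ$ with $F_\rho(Y^\star) \leq F_\rho(Y^\circ) \leq F_\rho(Y^\star) + \epsilon/2$, in time polynomial in $n$, $1/\eta$, $\log(1/\epsilon)$, and the bit complexity of $\rho$. The ellipsoid analysis underpinning Corollary \ref{cor:maxentropy_algo}, together with the bounding box of Corollary \ref{cor:boundingbox_rank_k}, simultaneously controls $\|Y^\circ\| \leq \poly(n, 1/\eta)$ and keeps the bit complexity of $Y^\circ$ polynomial.

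Finally, I would evaluate $F_\rho(Y^\circ)$ explicitly. The inner product $\langle Y^\circ, \rho \rangle$ can be computed to any polynomial precision by direct arithmetic on the bit representations of $Y^\circ$ and $\rho$. To evaluate $\mathcal{E}_1(Y^\circ)$, I would use unitary invariance of $\mu_1$ (Proposition \ref{prop:invariant_measure}) to reduce to the diagonal case: if $Y^\circ = U \diag(\lambda) U^*$, then $\mathcal{E}_1(Y^\circ) = \mathcal{E}_1(\diag(\lambda))$, and then invoke the counting oracle of Theorem \ref{thm:counting} with accuracy $\delta = \epsilon/4$ on $\diag(\lambda)$. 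Since $\|Y^\circ\|$ is polynomially bounded, each $e^{-\lambda_i}$ admits a polynomial-bit rational approximation, so the oracle runs in polynomial time. Returning $\bar{H} := -(\langle Y^\circ, \rho \rangle + \bar{E})$, where $\bar{E}$ is the oracle's estimate, and combining the two sources of error by the triangle inequality yields $|\bar{H} - H_b(\rho)| \leq \epsilon$ within the required running time.

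The main technical care lies in bit-complexity bookkeeping: one must verify that the $\poly(n, 1/\eta)$ norm bound on $Y^\circ$ from the bounding box, combined with the polynomial bit complexity of the ellipsoid iterates, genuinely yield eigenvalues $\lambda_i$ whose $e^{-\lambda_i}$ has polynomially many bits, and that the approximate eigendecomposition is performed to enough precision that the Lipschitz continuity of $Y \mapsto \mathcal{E}_1(Y)$ on the relevant ball does not erode the $\epsilon/4$ accuracy of the oracle. Once these checks are in place, the corollary follows as a direct composition of Theorem \ref{thm:strongduality}, Corollary \ref{cor:maxentropy_algo}, and Theorem \ref{thm:counting}.
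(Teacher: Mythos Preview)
Your proposal is correct and follows precisely the route the paper intends: the paper does not give a standalone proof of this corollary but treats it as an immediate consequence of strong duality (Theorem \ref{thm:strongduality}) together with the algorithm of Corollary \ref{cor:maxentropy_algo} for $k=1$, which is exactly the composition you spell out. Your additional care with the bit-complexity bookkeeping (norm bound on $Y^\circ$, approximate diagonalization, Lipschitz control on $\mathcal{E}_1$) goes beyond what the paper makes explicit but is the natural way to fill in the details.
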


\paragraph{Goemans-Williamson SDP rounding.}

In their seminal paper,  Goemans-Williamson \cite{GoemansW1995} gave a rounding scheme that gives a way to round a given PD matrix $A$ to a vector.
Their method goes by drawing a vector $v$ from a particular distribution on $\R^n$ based on the matrix $A$.

\begin{definition}[Goemans-Williamson measure]
    Given $n \in \N$ and a real positive definite $n \times n$ matrix $A$, the \emph{Goemans-Williamson measure} $\mu_{\mathrm{GW}}$ can be defined via a sampling process on $\R^n$ as follows.
    \begin{enumerate}
        \item Sample $g \in \R^n$ from the standard multivariate Gaussian distribution.
        \item Compute $v := Vg$ where $V$ is a square root of $A$, i.e., $A = VV^\top$.
        \item $v$ is a sample from $\mu_{\mathrm{GW}}$.
    \end{enumerate}
\end{definition}
It is then straightforward to compute the marginals matrix associated to this distribution as follows:
\[
    \mathbb{E}[vv^\top] = \int_{\R^n} (vv^\top) d\mu_{\mathrm{GW}}(v) = V\left[\int_{\R^n} gg^\top dg\right]V^\top = VV^\top = A.
\]
Thus, if we map $\mathbb{R}^n$ to $\mathcal{V}_1$ via $v \mapsto vv^\top$ and also pushforward the Lebesgue measure through this map, the above is precisely the marginal constraint in our max-entropy framework.
This observation implies  that the pushforward of the measure $\mu_{\mathrm{GW}}$ is a (strictly) feasible solution to the max-entropy primal program  on the domain $\mathcal{V}_1$ with the pushforward of the Lebesgue measure.
We show that it is also the optimal solution to the max-entropy program.

\begin{corollary}[Goemans-Williamson measure maximizes entropy]\label{cor:GW}
For any positive definite matrix $A$, let $\mu_{\mathrm{GW}}$ be the measure corresponding to the Goemans-Williamson  rounding scheme for $A$. Then the pushforward of $\mu_{\mathrm{GW}}$ to $\mathcal{V}_1$ is the max-entropy measure with marginals $A$ on $\mathcal{V}_1$ with respect to the pushforward of Lebesgue measure.
\end{corollary}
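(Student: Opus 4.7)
The plan is to exhibit the pushforward of $\mu_{\mathrm{GW}}$ as a density of the dual-exponential form $e^{-\langle Y^\star, X \rangle}$ guaranteed by Theorem \ref{thm:strongduality}, then invoke strong duality to conclude optimality. Since $A$ is positive definite, $A$ lies in the relative interior of $\hull(\mathcal{V}_1)$ (which is the PSD cone), so Theorem \ref{thm:strongduality} applies, and the max-entropy density has the form $\nu^\star(X) \propto e^{-\langle Y^\star, X \rangle}$ for some $Y^\star$. I will show that the pushforward of $\mu_{\mathrm{GW}}$ to $\mathcal{V}_1$ already has this form with $Y^\star = \tfrac{1}{2}A^{-1}$ and satisfies the marginal constraint; together with uniqueness of the max-entropy measure (strict convexity of the primal), this identifies it as $\nu^\star$.

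First I would compute the density of $\mu_{\mathrm{GW}}$ on $\R^n$. Writing $A = VV^\top$, the image of a standard Gaussian under $g \mapsto Vg$ has Lebesgue density proportional to $\exp\!\bigl(-\tfrac{1}{2}\, v^\top A^{-1} v\bigr)$. Using the identity $v^\top M v = \langle M, vv^\top \rangle$ (Frobenius inner product), this equals $\exp\!\bigl(-\langle \tfrac{1}{2}A^{-1}, vv^\top \rangle\bigr)$. Next I would push forward through $\pi : v \mapsto vv^\top$ to the manifold $\mathcal{V}_1$. The map $\pi$ is two-to-one away from $0$, since $\pi(v) = \pi(-v)$, but both the GW density and the Lebesgue reference density are even in $v$, so the Radon--Nikodym derivative of $\pi_\ast \mu_{\mathrm{GW}}$ with respect to $\pi_\ast \mathrm{Leb}$ is simply $\exp\!\bigl(-\langle \tfrac{1}{2}A^{-1}, X \rangle\bigr)$, up to normalization, for $X = vv^\top \in \mathcal{V}_1$.

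The marginal computation is the one already displayed in the excerpt: $\int_{\R^n} vv^\top \, d\mu_{\mathrm{GW}}(v) = V \,\mathbb{E}[gg^\top]\, V^\top = VV^\top = A$, so $\pi_\ast \mu_{\mathrm{GW}}$ is feasible for $\primal_\mu(A)$. It is of the exact form $\nu(X) \propto e^{-\langle Y^\star, X \rangle}$ with $Y^\star = \tfrac{1}{2}A^{-1}$. To close the argument I would note that the dual objective $F_A$ has gradient $\nabla F_A(Y) = A - \mathbb{E}_{\nu_Y}[X]$ where $\nu_Y(X) \propto e^{-\langle Y, X\rangle}$, so the marginal condition precisely says $\nabla F_A(Y^\star) = 0$; combined with convexity of $F_A$, this makes $Y^\star$ dual-optimal, and by strong duality (Theorem \ref{thm:strongduality}) the associated exponential-family density is the primal max-entropy optimum.

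The one potentially delicate point is the first-order characterization of the dual optimum. Concretely, one must check that differentiation under the integral sign is valid at $Y = \tfrac{1}{2}A^{-1}$, i.e.\ that $\int_{\mathcal{V}_1} X e^{-\langle Y, X \rangle} d\mu(X)$ converges. This is where I would be most careful, but it follows from $Y^\star \succ 0$: the integrand decays as a shifted Gaussian in $v$ on $\R^n$ (after pulling back through $\pi$), yielding finite moments of all orders, and justifying both the gradient formula and the finiteness of $F_A(Y^\star)$ needed to invoke strong duality.
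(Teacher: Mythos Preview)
Your proposal is correct and follows essentially the same route as the paper: identify the GW density on $\mathcal{V}_1$ as $\nu(X) \propto e^{-\langle \frac{1}{2}A^{-1}, X\rangle}$, verify that $Y^\star = \tfrac{1}{2}A^{-1}$ is the dual optimum, and invoke strong duality. The only minor difference is in how the dual optimality is checked: the paper computes the closed-form Gaussian integral $\mathcal{E}_\mu(Y) = \tfrac{n}{2}\log\pi - \tfrac{1}{2}\log\det Y$ (Proposition~\ref{prop:lebesgue_formula}) and differentiates that to find the critical point, whereas you use the general identity $\nabla F_A(Y) = A - \mathbb{E}_{\nu_Y}[X]$ and plug in the already-verified marginal constraint---your route is slightly more direct and avoids the explicit formula, but the two arguments are equivalent in content.
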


\paragraph{Entropic barrier function.}

Bubeck and Eldan in \cite{BubeckEldan} prove that the entropic barrier of a convex body $K \subseteq \R^d$ is a $(1+o(1))n$-self-concordant barrier on $K$, improving a seminal result of Nesterov and Nemirovski \cite{nesterov1994interior}.
In fact this gives the first explicit construction of a universal barrier for convex bodies with optimal self-concordance parameter.
\begin{definition}[Entropic barrier]
    Given a convex body $K \subseteq \R^d$, define the \emph{entropic barrier} for $K$ as the real-valued function on the interior of $K$ defined as:
    \[
        B_K(v) := \sup_{y \in \R^d} \left[\langle y, v \rangle - \log \int_K e^{\langle y, x \rangle} dx\right].
    \]
    Note that $-B_K(v)$ is precisely the maxium entropy dual program, up to negation of $y$ in the expression.
\end{definition}

\noindent
Open questions still remain about the efficient computability of the entropic barrier.
This is in particular true in the case where $K$ is a polytope, given as a membership oracle.
Towards this, the following is essentially a corollary to Theorem \ref{thm:boundingbox} (see Section \ref{sec:boundingbox_convex} for a full proof),
and can be used to efficiently compute the entropic barrier at points which are in the $\eta$-interior of $K$.

\begin{corollary}[Bounding box for convex bodies]\label{thm:boundingbox_convex}
    Let $\Omega \subset \R^d$ be a convex body contained in a ball of radius $R$.
    Further, let $A$ be an element of the $\eta$-interior of the convex hull of $\Omega$.
    Then there is an optimal solution $Y^\star$ to the dual program such that $\| Y^\star \| \leq \poly(\eta^{-1}, d, \log(R))$.
\end{corollary}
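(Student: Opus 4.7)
My plan is to mimic the proof of Theorem~\ref{thm:boundingbox} directly, using the Lebesgue measure on $\Omega$ as the base measure. The core identity is that $F_A(Y^\star) \le F_A(0) = \log \mathrm{vol}(\Omega) \le d \log R + O(d)$, since $\Omega$ is contained in a ball of radius $R$. So it suffices to exhibit a lower bound on $F_A(Y)$ that grows linearly in $\|Y\|$ with linear coefficient of order $\eta$, up to a small dimensional penalty; combining the two inequalities will then yield the required bound on $\|Y^\star\|$. Note that since $\Omega$ is a convex body, it is full-dimensional in $\R^d$ and equals its own convex hull, so there is no linear constraint on $Y$ and the $\eta$-interior hypothesis gives the clean inclusion $B_\eta(A) \subseteq \Omega$.

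Concretely, for nonzero $Y$ set $u := Y/\|Y\|$, $\lambda := \|Y\|$, and $X' := A - (2\eta/3)\,u$. Since $B_\eta(A) \subseteq \Omega$, we have $X' \in \Omega$ and in fact $B_{\eta/3}(X') \subseteq B_\eta(A) \subseteq \Omega$. For any $X \in B_{\eta/3}(X')$ the inequality $\langle Y, X \rangle \le \langle Y, X' \rangle + \lambda \eta / 3$ holds, hence
\[
\log \int_\Omega e^{-\langle Y, X \rangle}\, dX \;\geq\; -\langle Y, X' \rangle - \lambda\eta/3 + d \log(\eta/3) + \log V_d,
\]
where $V_d$ denotes the volume of the unit Euclidean ball in $\R^d$. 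Adding $\langle Y, A \rangle$ and using $\langle Y, A - X' \rangle = (2\eta/3)\lambda$ gives
\[
F_A(Y) \;\geq\; (\eta/3)\,\lambda - O(d\log(1/\eta)) - O(d).
\]
Combining with $F_A(Y^\star) \le F_A(0) \le d\log R + O(d)$ and rearranging then yields $\|Y^\star\| \le \poly(\eta^{-1}, d, \log R)$.

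The main obstacle I anticipate is ensuring that the dependence on $R$ is only logarithmic rather than polynomial. A naive reduction to Theorem~\ref{thm:boundingbox} by rescaling $\Omega$ into the unit ball would send $\eta \mapsto \eta/R$, thereby introducing \emph{polynomial} $R$-dependence, which is far worse than what the statement claims. The direct argument above sidesteps this because the lower-bound step uses a ball of radius $\Theta(\eta)$ that fits \emph{entirely} inside $\Omega$ (guaranteed by the $\eta$-interior hypothesis), and the logarithm of its volume depends only on $d$ and $\log(1/\eta)$, not on $R$. The quantity $R$ enters the argument only via the upper bound $\log \mathrm{vol}(\Omega) = O(d \log R)$ coming from $\Omega \subseteq B_R$, which is exactly what produces the logarithmic $R$-dependence in the final bound.
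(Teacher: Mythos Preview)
Your argument is correct and is essentially the same as the paper's. The paper works with the normalized uniform probability measure on $\Omega$ (so that $F_A(0)=0$) and packages the lower bound via the two-parameter interior and Lemma~\ref{lem:twoparam_bounding}: any half-space through $B_{\eta/2}(A)$ contains an $\frac{\eta}{4}$-ball inside $\Omega$, hence carries mass at least $(\eta/4R)^d$, which gives $\|Y^\star\|\le \frac{2d}{\eta}\log\frac{4R}{\eta}$. You instead keep the unnormalized Lebesgue measure, so the $R$-dependence appears in the upper bound $F_A(0)=\log\mathrm{vol}(\Omega)\le d\log R+\log V_d$ rather than in the mass lower bound; otherwise the restriction-to-a-small-ball step and the comparison $F_A(Y^\star)\le F_A(0)$ are identical. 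One minor bookkeeping remark: $\log V_d$ is of order $-\Theta(d\log d)$ rather than $-O(d)$, but this only changes the additive term to $O(d\log d)$ and does not affect the claimed $\poly(\eta^{-1},d,\log R)$ bound.
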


\noindent
Details of how this implies computability of the entropic barrier are omitted from this paper.

\section{Technical overview}

In this section, we give overviews of the proofs of the main  results of this paper
 and compare our techniques with those of previous work.
We start by describing the approach of \cite{SinghV14} in the case of discrete uniform measures $\mu$ with finite support $\Omega \subseteq \{0,1\}^d$.
In this case, the marginals vector $A$ of a measure $\nu$ on $\Omega$ is defined by setting $A_k$ to be the expected value of the $k$th entry of $x$ when $x$ is chosen according to $\nu$.
Note that the marginals vector $A$ is always an element of $\hull(\Omega)$.
The problem the authors of \cite{SinghV14} solve is described as follows: given a finite subset $\Omega$ and a desired marginals vector $A$  in the $\eta$-interior of $\hull(\Omega)$, compute the probability measure on $\Omega$ with marginals $A$ which maximizes entropy.

They consider the dual formulation
\[
    \inf_{y \in \R^d} \; \;  F_A(y) :=  \langle y, A \rangle + \log \sum_{x \in \Omega} e^{-\langle x, y \rangle},
\]
which gives rise to measures on $\Omega$ of the following succinct form for some real vector $y^\star$:
\[
    \nu(x) \propto e^{-\langle x, y^\star \rangle}.
\]
By strong duality $\nu=\nu^\star$ is the entropy maximizing measure, and they then use the ellipsoid method to approximate $y^\star$.

We generalize their approach to continuous measures $\mu$ on continuous domains $\Omega$.
For the most part, the ellipsoid algorithm can be applied in the same way as in the discrete case once we have the three main pieces in hand: (1) strong duality, (2) a bound on $Y^\star$, and (3) the strong counting oracle.
Even in the continuous case, one can show that strong duality holds via a certain Slater-type condition (see Sections \ref{sect:sd_proof} and \ref{sec:general_sd}).
What makes the passage from the discrete case to the continuous case much more interesting and nontrivial is proving the remaining two main pieces.

\subsection{Proof overview: bounding box}

The goal of this section is to explain the proofs of the main bounding box result and its corollaries.
We first describe the approach of the discrete $\mu$ case discussed above.
Note that for $B \in \hull(\Omega)$, there exists some $X_0 \in \Omega$ such that
\[
    \langle -Y^\star, X_0-B \rangle \geq 0,
\]
since every closed half-space containing $B$ contains an extreme point $X_0 \in \hull(\Omega)$.
If $A$ is in the $\eta$-interior of $\hull(\Omega)$, we can choose $B = A - \eta \frac{Y^\star}{\|Y^\star\|}$ to get:
\[
    \langle -Y^\star, X_0-A \rangle \geq \eta \|Y^\star\|.
\]
Because $\mu$ is a discrete uniform measure, we have $\mu(\{X_0\}) = |\Omega|^{-1}$.
This implies a bound on $Y^\star$ as follows, via the dual objective function $F_A(Y)$:
\[
    0 = F_A(0) \geq F_A(Y^\star) = \log \int e^{\langle -Y^\star, X-A \rangle} d\mu(X) \geq \log\left(e^{\eta\|Y^\star\|} \cdot |\Omega|^{-1}\right) \implies \|Y^\star\| \leq \frac{\log|\Omega|}{\eta}.
\]
The lower bound on $F_A(Y^\star)$ above follows from restricting the integral (which is a sum in the discrete case) to the single point $X_0$.
This demonstrates exactly why this argument fails in the continuous case, because in that case we have $\mu(\{X\}) = 0$ for all $X \in \Omega$.

This is the first difficulty we must overcome.
We need a way to restrict the dual objective integral to a region of $\Omega$ which has positive mass, emulating the role of atoms in the discrete case.

We introduce a two-parameter interior for the measure $\mu$.
We say that $A$ is in the $(\eta,\delta)$-interior of $\mu$ if every half-space intersecting the $\eta$-ball about $A$ contains at least $\delta$ mass of $\mu$ (Definition \ref{def:twoparam_interior}).
Instead of restricting the dual integral to a single point of $\Omega$, we restrict it to the appropriate $\delta$-mass to obtain a bound on $\|Y^\star\|$:
\[
    0 \geq \log \int e^{\langle -Y^\star, X-A \rangle} d\mu(X) \geq \log\left(e^{\eta\|Y^\star\|} \cdot \delta\right) \implies \|Y^\star\| \leq \frac{1}{\eta} \log\frac{1}{\delta}.
\]
We explain this formally in Lemma \ref{lem:twoparam_bounding}.

This leads to the second difficulty.
Our bounding box theorem only refers to the $\eta$ parameter, and so we need a way to handle or control $\delta$ in terms of $\eta$ and $d$.

Here is where the key \emph{balance} property comes into play.
We say that a measure $\mu$ is balanced if for all $\epsilon > 0$ and $X \in \Omega$, the $\epsilon$-ball about $X$ contains $\exp(-\poly(\epsilon^{-1}, d))$ of the mass of $\mu$ (Definition \ref{def:balanced}).
This links the two interiority parameters: from any point of the $\epsilon$-interior of $\hull(\Omega)$, there will be at least $\exp(-\poly(\epsilon^{-1}, d))$ mass in the direction of any $X \in \Omega$ on the boundary.

The crucial feature of the balance property is then how this linking of the parameters allows one to transfer between them.
Specifically for a balanced measure, the $\eta$-interior of $\hull(\Omega)$ is contained in the $(\frac{\eta}{2},\exp(-\poly(\frac{\eta}{2}, d)))$-interior of $\mu$.
To see this, let $A$ be in the $\eta$-interior of $\hull(\Omega)$.
Hence, any half space which intersects the $\frac{\eta}{2}$-ball about $A$ contains another $\frac{\eta}{2}$-ball in $\hull(\Omega)$.
By translating this ball toward a point of $\Omega$, we can assume that the half-space contains an $\frac{\eta}{2}$-ball about a point of $\Omega$.
Since $\mu$ is balanced, this implies $A$ is in the $(\frac{\eta}{2}, \exp(-\poly(\frac{\eta}{2}, d)))$-interior of $\mu$.

At this point, the rest of the proof of Theorem \ref{thm:boundingbox} is straightforward.
For balanced $\mu$ and $A$ in the $\eta$-interior of $\hull(\Omega)$, we actually have that $A$ is in the $(\frac{\eta}{2}, \exp(-\poly(\frac{\eta}{2}, d)))$-interior of $\mu$.
The two parameter bound described above then implies $\|Y^\star\| \leq \poly(\frac{1}{\eta}, d)$.

To obtain bounding boxes for $\mu_k$ on $\mathcal{P}_k$, $n \times n$ rank $k$ projections, (Corollary \ref{cor:boundingbox_rank_k}) and to uniform measures on convex bodies (Corollary \ref{cor:boundingbox_convex}), we then demonstrate balance properties.
In the case of $\mu_k$, $\mathcal{P}_k \subset B_{\sqrt{k}}(0)$ can be covered by at most $\exp(\poly(\log\delta^{-1}, n))$ balls of radius $\delta$ for any $\delta > 0$, morally because:
\[
    \frac{\mathrm{vol}(B_{\sqrt{k}})}{\mathrm{vol}(B_\delta)} = \frac{(\pi \sqrt{k})^n / n!}{(\pi \delta)^n / n!} = \left(\frac{\sqrt{k}}{\delta}\right)^n = \exp(\poly(\log\delta^{-1}, n)).
\]
Therefore a $\delta$-ball about \emph{some} point of $\mathcal{P}_k$ must contain at least $\exp(-\poly(\log\delta^{-1}, n))$ of the mass of $\mu_k$, and unitary invariance then implies that this is actually true for \emph{all} points of $\mathcal{P}_k$.

For uniform measures $\mu$ on convex bodies $K$ contained in a ball of radius $R$, we prove the bounding box using similar arguments as follows.
By the volume ratio computation above, every $\delta$-ball contained in $K$ contains at least $(\frac{\delta}{R})^d$ of the mass of $\mu$.
Therefore every $A$ in the $\eta$-interior of $\hull(\Omega)$ is also in the $(\frac{\eta}{2}, (\frac{\eta}{2R})^d)$-interior of $\mu$, since every half-space intersecting the $\frac{\eta}{2}$-ball about $A$ contains another $\frac{\eta}{2}$-ball in $K$.
The bounding box then follows from the two-parameter bound discussed above (Lemma \ref{lem:twoparam_bounding}).

\subsection{Proof overview: counting oracle for $\mathcal{P}_1$ and $\mathcal{V}_1$}

The goal of this section is to explain why we can efficiently evaluate and compute the gradient of
\[
    \mathcal{E}_\mu(Y) = \log \int_\Omega e^{-\langle Y, X \rangle} d\mu(X)
\]
in the case of $\Omega = \mathcal{P}_1$ and $\Omega = \mathcal{V}_1$

First consider the case of $\Omega = \mathcal{V}_1$, where $\mu$ is the pushforward of the Lebesgue measure through $x \mapsto xx^\top$.
In this case we have a very explicit formula whenever $Y$ is positive definite:
\[
    \mathcal{E}_\mu(Y) = \log \int e^{-\langle Y, X \rangle} d\mu(X) = \frac{n}{2}\log(\pi) - \frac{1}{2}\log\det(Y).
\]
Since $\mu$ is the pushforward of the Lebesgue measure through $x \mapsto xx^\top$, this expression follows from the following classical Gaussian integral formula:
\begin{equation}\label{eq:GW_TO}
    \int_{\mathcal{V}_1} e^{-\langle Y, X \rangle} d\mu(X) = \int_{\R^n} e^{-x^\top Y x} dx = \sqrt{\det(\pi Y)}.
\end{equation}
This is demonstrated formally in Proposition \ref{prop:lebesgue_formula}.
We show how leads to our optimality characterization of the Goemans-Williamson measure at the end of this section.

The above Gaussian formula for $\mathcal{V}_1$ suggests a natural approach for computing $\mathcal{E}_1$ on $\mathcal{P}_1$.
Allowing complex Hermitian matrices, note that $\mathcal{P}_1$ is the set of norm-1 elements of $\mathcal{V}_1$.
Hence, we ``integrate out'' the norm of the elements of $\mathcal{V}_1$, in an attempt to obtain a similar formula for $\mathcal{P}_1$.
We do this via a standard change of variables (equalities are up to scalar):
\[
    \int_{\mathcal{V}_1} e^{-\langle Y, X \rangle} d\mu(X) = \int_{\mathcal{P}_1} \int_0^\infty e^{-\langle Y, r^2X \rangle} r^{2n-1} dr d\mu_1(X) = \int_{\mathcal{P}_1} \langle Y, X \rangle^{-n} d\mu_1(X) \neq \mathcal{E}_1(Y).
\]
This shows that this approach fails: that is, integrating out the norm does not provide us a formula for $\mathcal{E}_1(Y)$ (for more discussion see Section \ref{sec:GWsphere}).

This demonstrates the first difficulty for constructing a counting oracle for $\mathcal{P}_1$.
Normalizing the max-entropy measure on $\mathcal{V}_1$ as above yields a measure on $\mathcal{P}_1$ which is \emph{not} a max-entropy measure.
Max-entropy measures on $\mathcal{P}_1$ an $\mathcal{V}_1$ are therefore fundamentally different objects, and thus constructing the associated counting oracles  requires  different techniques.
In particular the well-known Gaussian integral formulas cannot help us in the case of $\mathcal{P}_1$.

The remarkable fact is then that max-entropy measures on $\mathcal{P}_1$ \emph{can} be translated into max-entropy measures on a very simple polytope: the standard simplex in $\R^n$.
We have the following equality for real $Y = \diag(y)$, where $m$ is the Lebesgue measure on the simplex $\Delta_1:=\{p \in \mathbb{R}_+^n: \sum_{i=1}^n p_i=1\}$:
\[
    \int_{\mathcal{P}_1} e^{-\langle Y, X \rangle} d\mu_1(X) = \int_{\Delta_1} e^{-\langle y, x \rangle} dm(x).
\]
Put another way, max-entropy measures on $\mathcal{P}_1$, a nonconvex manifold, correspond to max-entropy measures on $\Delta_1$, a convex polytope.
To see this, first note the following for any $m_1,\ldots,m_n$. The first equality is the Bombieri inner product formula (Lemma \ref{lem:bombieri}), and the second inequality is a basic induction after a change of variables:
\[
    \int_{\mathcal{P}_1} X_{11}^{m_1} \cdots X_{nn}^{m_n} d\mu_1(X) = \frac{m_1! \cdots m_n! (n-1)!}{(m_1+\cdots+m_n+n-1)!} = \int_{\Delta_1} x_1^{m_1} \cdots x_n^{m_n} dm(x).
\]
The exponential equality then follows from limiting, since $\mathcal{P}_1$ and $\Delta_1$ are compact and since $e^{-\langle Y, X \rangle}$ and $e^{-\langle y, x \rangle}$ are limits of polynomials.

This argument also implies the more general fact: that $m$ is the pushforward of $\mu_1$ through the map $\phi: X \mapsto \diag(X)$:
\[
    \int_{\mathcal{P}_1} f(\phi(X)) d\mu_1(X) = \int_{\Delta_1} f(x) dm(x).
\]
This transfer to the simplex now leads to an explicit computation for $\mathcal{E}_1(Y)$ when $Y = \diag(y)$.
(Considering diagonal $Y$ is actually without loss of generality, see the discussion in Section \ref{sec:counting_oracle}.)
By making a change of variables, the simplex integral is an iterated convolution:
\[
    \frac{1}{(n-1)!} \int_{\mathcal{P}_1} e^{-\langle Y, X \rangle} d\mu_1(X) = \int_0^1 \int_0^{1-x_1} \cdots \int_0^{1-x_1-\cdots-x_{n-2}} e^{-\langle y, x \rangle} dx = \left.(e^{-y_1 t} * \cdots * e^{-y_n t})\right|_{t=1}.
\]
This is stated formally in Lemma \ref{lem:laplace_integral}.
Applying the Laplace transform $\mathcal{L}$ converts this convolution into a partial fraction decomposition problem for distinct values of $y_i$:
\[
    \left.(e^{-y_1 t} * \cdots * e^{-y_n t})\right|_{t=1} = \mathcal{L}^{-1}\left[\frac{1}{\prod_i (s+y_i)}\right](1) = \mathcal{L}^{-1}\left[\sum_i \frac{c_i}{s+y_i}\right](1) = \sum_i c_i e^{-y_i}.
\]
Computing the values of $c_i$ via a standard partial fractions formula gives:
\[
    \frac{1}{(n-1)!} \int_{\mathcal{P}_1} e^{-\langle Y, X \rangle} d\mu_1(X) = \sum_{i=1}^n \frac{e^{-y_i}}{\prod_{j \neq i} (y_j-y_i)} = \frac{\det(M(-y))}{\prod_{i < j} (y_j-y_i)}.
\]
This is stated formally in Proposition \ref{prop:evaluation_formula}.
Here $M(-y)$ is a Vandermonde-like matrix which arises when forming the common denominator of the last expression, given (for the case of distinct $y_i$s) as follows:
\[
    M(-y) := \left[
        \begin{matrix}
            1 & 1 & \cdots & 1 \\
            (-y_1) & (-y_2) & \cdots & (-y_n) \\
            (-y_1)^2 & (-y_2)^2 & \cdots & (-y_n)^2 \\
            \vdots & \vdots & \ddots & \vdots \\
            (-y_1)^{n-2} & (-y_2)^{n-2} & \cdots & (-y_n)^{n-2} \\
            e^{-y_1} & e^{-y_2} & \cdots & e^{-y_n} \\
        \end{matrix}
    \right].
    \]
We define this matrix formally in Definition \ref{def:eval_matrix}.

This brings us to the second difficulty for constructing a counting oracle for $\mathcal{P}_1$.
When the values of $y_i$ are not distinct, then the denominator vanishes and this formula cannot be used.
Even though $\mathcal{E}_1$ is continuous, this could still be a major problem: if for example the gradient of $\mathcal{E}_1(Y)$ becomes large as $y_i$ approaches $y_j$, then computing $\mathcal{E}_1(Y)$ could become computationally infeasible.

To handle this difficulty, we take limits by successively applying L'Hopital's rule.
One iteration for $y_1 = y_2$ goes as follows:
\[
    \lim_{y_2 \to y_1} \frac{\det(M(-y))}{\prod_{i < j} (y_j-y_i)} = \left.\frac{\partial_{y_2}\det(M(-y))}{\partial_{y_2}\prod_{i < j} (y_i-y_j)}\right|_{y_2 = y_1} = \frac{\det(M'(-y))}{\prod_{2 < i}(y_i-y_1)^2 \prod_{2 < i < j} (y_i-y_j)}.
\]
The key observation here is the fact that the numerator is still a determinant, due to the fact that only one column of $M(-y)$ depends on $y_i$ for all $i$.
Applying L'Hopital's rule as many times as is necessary leads to the following, where $\lambda_i$ represent the distinct values of $y$ with multiplicities $m_i$:
\[
    \exp(\mathcal{E}_1(Y)) = \int_{\mathcal{P}_1} e^{-\langle Y, X \rangle} d\mu_1(X) = (n-1)! \frac{\det(M(-\lambda))}{\prod_{i<j} (\lambda_i-\lambda_j)^{m_im_j}}.
\]
Note that $M(-\lambda)$ is a matrix similar to $M(-y)$ above which handles the non-distinctness (we unify the notation of these matrices in Definition \ref{def:eval_matrix}).
A similar expression for the gradient is achieved using the same techniques, and so we state it here without further detail:
\[
    (\nabla \mathcal{E}_1(Y))_l = -\sum_{i \neq p} \frac{m_i}{\lambda_p - \lambda_i} - \frac{\det(M_p(-\lambda))}{\det(M(-\lambda))}.
\]
$M_p(-\lambda)$ is another, similar Vandermonde-like matrix, see Proposition \ref{prop:gradient_formula} and Definition \ref{def:grad_matrix}.

Since the entries of $M(-\lambda)$ and $M_p(-\lambda)$ have bit complexity polynomial in $n$ and the bit complexity of $e^{-y_i}$, their determinants have the same bit complexity.
Therefore these formulas, for $\mathcal{E}_1(Y)$ and its gradient, lead to an efficient counting oracle for $\mathcal{P}_1$.

\paragraph{The optimality of Goemans-Williamson measure.} As a consequence of Equation \eqref{eq:GW_TO}, we now show briefly how this formula is used to prove that the Goemans-Williamson measure $\mu_{\mathrm{GW}}$ with respect to a real symmetric positive definite matrix $A$ is a max-entropy measure on $\mathcal{V}_1$.
For $A = VV^\top$, the measure $\mu_{\mathrm{GW}}$ is defined to be distributed according to $xx^\top := (Vg)(Vg)^\top$ where $g$ is a standard Gaussian in $\R^n$.
By the change of variables formula, $xx^\top$ is distributed as follows on $\R^n$:
\[
    xx^\top \sim e^{-\frac{1}{2}\|V^{-1}x\|^2} \cdot \det(V^{-1}) d\mu(xx^\top) \propto e^{-\langle \frac{1}{2}A^{-1}, xx^\top \rangle} d\mu(xx^\top).
\]
We state this formally in Proposition \ref{prop:GW_density}.
To prove that this is a max-entropy measure, we determine the critical point of the dual objective with respect to real symmetric positive definite $Y$:
\[
    0 = \nabla(\langle Y, A \rangle + \mathcal{E}_\mu(Y)) = A - \frac{1}{2}Y^{-1} \implies Y^\star = \frac{1}{2}A^{-1}.
\]
Therefore, the Goemans-Williamson measure $\mu_{\mathrm{GW}}(X) \propto e^{-\langle \frac{1}{2}A^{-1}, X \rangle} d\mu(X)$ is the max-entropy measure on $\mathcal{V}_1$ with respect to $A$.

\paragraph{Proof overview: sampling for $\mathcal{P}_1$.}

We now discuss how to sample from max-entropy distributions on $\mathcal{P}_1$.
Our main algorithm (Theorem \ref{thm:main_algorithm}) gives an efficient oracle for approximating the max-entropy density function:
\[
    \nu(X) \propto e^{-\langle Y^\star, X \rangle}.
\]
The main problem is that it is not at all clear how to use such a density function to sample from a manifold.

We avoid this difficulty by transferring the problem of sampling to the simplex $\Delta_1$ for $Y^\star = \diag(y^\star)$, using the following fact discussed in the previous section:
\[
    \frac{1}{(n-1)!} \int_{\mathcal{P}_1} e^{-\langle Y^\star, X \rangle} d\mu_1(X) = \int_0^1 \int_0^{1-x_1} \cdots \int_0^{1-x_1-\cdots-x_{n-2}} e^{-\langle y^\star, x \rangle} dx.
\]
The sampling process for $\mathcal{P}_1$ then occurs in two parts.

First, we sample from the max-entropy distribution on the simplex, one coordinate at a time.
We use the right-hand side of the above expression to compute the cumulative density function (CDF) for each coordinate, conditioned on the previously sampled coordinates.
Formulas and computations for these conditioned CDFs are very similar to that of the counting oracle, and hence we omit them here (see Corollary \ref{cor:cdf_formula_rank_one}).

Once we have a sample $x$ on the simplex, we need to convert it into a sample on $\mathcal{P}_1$ by considering its inverse image under the map $\phi: X \mapsto \diag(X)$.
The difficulty that now arises is the fact that there are many elements of $\mathcal{P}_1$ which map to the same simplex element under $\phi$.

Fortunately, there is a principled way to select from these possibilities.
The fiber $\phi^{-1}(x)$ is an orbit of the action of diagonal unitary matrices on $\mathcal{P}_1$ by conjugation. 
Since $Y^\star$ is diagonal, this implies the max-entropy measure $\nu(X)$ is uniform when restricted to $\phi^{-1}(x)$.
Given $x$, we then sample $X$ from $\phi^{-1}(x)$ by picking an arbitrary $X_0 \in \phi^{-1}(x)$ and conjugating by a uniformly random diagonal unitary matrix.

Hence, to sample $X$ from $\mathcal{P}_1$ we (1) sample $x$ from the simplex, and then (2) sample $X$ uniformly from $\phi^{-1}(x)$.
This samples $X$ from the correct measure due to the disintegration theorem, which says the following for any $f$:
\[
    \int_{\mathcal{P}_1} f(X) d\mu_1(X) = \int_{\Delta_1} \int_{\phi^{-1}(x)} f(X) d\mu_{\phi^{-1}(x)}(X) dm(x).
\]
That is, the measure $\mu_1$ can be split into measures on $\Delta_1$ and on the fibers $\phi^{-1}(x)$ (see Proposition \ref{prop:disintegration}).

Therefore, the above sampling process efficiently samples the max-entropy measure on $\mathcal{P}_1$ with density $\nu(X)$.

\subsection{Proof overview: extending the counting oracle for $\mathcal{P}_1$ to $\mathcal{P}_k$}

For the case of $\mathcal{P}_k$ and $\mu_k$, we want to generalize the formulas of the $k=1$ case.
To do this, we make use of the famous Harish-Chandra-Itzykson-Zuber formula (Theorem \ref{thm:HCIZ}) for integrals over the Haar measure of the unitary group $U(n)$.
It is stated as follows for Hermitian $Y,B$ with {\em distinct} eigenvalues $y_i,\beta_i$:
\[
    \int_{U(n)} e^{-\langle Y, UBU^* \rangle} dU = \left(\prod_{p=1}^{n-1} p!\right) \frac{\det([e^{-y_i\beta_j}]_{1 \leq i,j \leq n})}{\prod_{i < j} (y_i - y_j)(\beta_j - \beta_i)}.
\]
For $B = \diag(1,\ldots,1,0,\ldots,0)$ with $k$ 1s and $n-k$ 0s, notice that $\mathcal{P}_k = \{UBU^* ~:~ U \in U(n)\}$.
This leads to the following:
\[
    \exp(\mathcal{E}_k(Y)) = \int_{\mathcal{P}_k} e^{-\langle Y, X \rangle} d\mu_k(X) = \int_{U(n)} e^{-\langle Y, UBU^* \rangle} dU.
\]
To handle the issue of the denominator vanishing, and to compute the gradient, we apply all the same techniques which were required for the $k=1$ case (see Corollaries \ref{cor:dual_integral_k} and \ref{cor:gradient_formula_k}).
These formulas end up having the right bit complexity, and so they immediately imply an efficient strong counting oracle for $\mathcal{P}_k$.

Unlike in the case of $k=1$, the problem of sampling in the case of $k>1$ is more difficult as the image of $\mathcal{P}_k$ under the map $\phi: X \mapsto \diag(X)$ is much more complicated.
Thus we leave as an open problem the question of sampling from the associated maximum entropy distributions in the case of $\mathcal{P}_k$ for $k>1$.

\section{Bounding box} \label{sec:boundingbox}

In this section, we prove the general bounding box result (Theorem \ref{thm:boundingbox}).
With this, we then specialize to the cases of rank-$k$ projections and convex bodies.

\subsection{General bounding box}

In what follows we will discuss ``interiors'' of a probability distribution $\mu$ given by two parameters, $(\eta,\delta)$.
The $\eta$ parameter will control how far we are from the boundary, and the $\delta$ parameter will control how well-distributed $\mu$ is on its support.
At the end of the day, we will prove that for nice situations one only needs to consider the $\eta$ parameter (as in the bounding box result of \cite{SinghV14}).

We now define the two-parameter interior.
In what follows, we will let $V_\mathcal{L}$ be the vector subspace given by $\mathcal{L}(X) = 0$, where $\mathcal{L}(X) = B$ is the maximal set of linearly independent equality constraints for $\Omega$.
More informally, $V_\mathcal{L}$ is the vector space corresponding to the minimal affine space in which $\mathcal{K} = \hull(\Omega)$ lives (i.e., translate the affine space so that $0 \in V_\mathcal{L}$).
The fact that $\mathcal{L}(X) = B$ is a maximal linearly independent set means that the optimal solution to the dual program is unique when restricted to $V_\mathcal{L}$.
(Existence follows from Lemmas \ref{lem:linear_constraints} and \ref{lem:dual_optimum}.)
We discuss this further in Section \ref{sec:computing_max_entropy}.

\begin{definition} \label{def:delta_param}
    We define the $(0,\delta)$-interior of $\mu$ to be the set of all $A \in \mathcal{K}$ such that for all $Y \in V_\mathcal{L}$ we have:
    \[
        \mu(\{X \in \Omega ~|~ \langle X-A, Y \rangle \geq 0\}) > \delta.
    \]
\end{definition}

\noindent
Morally, this says that every closed half-space containing $A$ contains more than $\delta$ of the mass of $\mu$.
Note that this is not always an open set (which is perhaps a bit odd for something called the ``interior'', but this will be our convention).

\begin{definition}[Two-parameter interior] \label{def:twoparam_interior}
    We define the $(\eta,\delta)$-interior of $\mu$ to be the set of all $A \in \mathcal{K}$ such that the ball of radius $\eta$ about $A$ is contained in the $(0,\delta)$-interior of $\mu$.
    Note that this is not necessarily an open set.
\end{definition}

\noindent
The next lemma is then precisely how to combine the two parameters to get a bounding box for the optimal solution to the dual program.

\begin{lemma}[Two-parameter bounding box] \label{lem:twoparam_bounding}
    Given $A \in \mathcal{K}$, let $Y^\star \in V_\mathcal{L}$ be the optimal solution to the dual program.
    Recall the dual objective:
    \[
        \inf_Y F_A(Y) = \inf_Y \log \int_\Omega e^{-\langle Y, X-A \rangle} d\mu(X).
    \]
    If $A$ is in the $(\eta,\delta)$-interior of $\mu$, then $\|Y^\star\| \leq \frac{1}{\eta} \log\left(\frac{1}{\delta}\right)$.
\end{lemma}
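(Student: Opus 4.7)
My plan is to sandwich $F_A(Y^\star)$ between an easy upper bound at $Y=0$ and a lower bound extracted from the $(\eta,\delta)$-interior hypothesis. For the upper bound, note that by optimality
\[
F_A(Y^\star) \;\le\; F_A(0) \;=\; \log \int_\Omega d\mu(X) \;=\; 0,
\]
using that $\mu$ is a probability measure on $\Omega$.

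For the matching lower bound, the idea is to exhibit a subset of $\Omega$ on which the integrand $e^{-\langle Y^\star, X-A\rangle}$ is uniformly exponentially large, and which carries at least $\delta$ of the $\mu$-mass. To do this I would push $A$ by $\eta$ in the direction $-Y^\star/\|Y^\star\|$: set
\[
A' \;:=\; A - \eta\,\frac{Y^\star}{\|Y^\star\|}.
\]
Since $Y^\star \in V_\mathcal{L}$, the point $A'$ stays in the affine hull of $\mathcal{K}$, and it lies in the ball of radius $\eta$ about $A$, hence in the $(0,\delta)$-interior of $\mu$ by the hypothesis on $A$. Applying Definition~\ref{def:delta_param} to $A'$ with the direction $-Y^\star \in V_\mathcal{L}$ yields a half-space
\[
H \;:=\; \{X \in \Omega \,:\, \langle X-A',\,-Y^\star\rangle \ge 0\}
\]
with $\mu(H) > \delta$. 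A direct computation shows $\langle X-A',-Y^\star\rangle = -\langle Y^\star, X-A\rangle - \eta\|Y^\star\|$, so that on $H$ we have $-\langle Y^\star, X-A\rangle \ge \eta\|Y^\star\|$, hence $e^{-\langle Y^\star, X-A\rangle} \ge e^{\eta\|Y^\star\|}$.

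Combining, I would restrict the dual integral to $H$ to obtain
\[
F_A(Y^\star) \;=\; \log\int_\Omega e^{-\langle Y^\star, X-A\rangle}\,d\mu(X)
\;\ge\; \log\bigl(e^{\eta\|Y^\star\|}\cdot \delta\bigr)
\;=\; \eta\|Y^\star\| + \log \delta.
\]
Chaining with $F_A(Y^\star) \le 0$ gives $\eta\|Y^\star\| \le \log(1/\delta)$, which is the claimed inequality. The only subtlety, and arguably the main thing to check, is that the perturbed point $A'$ genuinely lies in the $(0,\delta)$-interior of $\mu$ so that Definition~\ref{def:delta_param} can be applied with the direction $-Y^\star$; this is exactly what the two-parameter interior hypothesis on $A$ is designed to guarantee, since $A' \in B_\eta(A)$ stays in the affine hull of $\mathcal{K}$.
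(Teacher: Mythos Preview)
Your proof is correct and follows essentially the same approach as the paper: both push $A$ by $\eta$ in the direction $-Y^\star/\|Y^\star\|$, invoke the $(0,\delta)$-interior condition with direction $-Y^\star$ to get a set of $\mu$-mass at least $\delta$ on which $-\langle Y^\star, X-A\rangle \ge \eta\|Y^\star\|$, and sandwich against $F_A(0)=0$. Your write-up is in fact slightly more careful in noting that $Y^\star \in V_\mathcal{L}$ keeps $A'$ in the affine hull.
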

\begin{proof}
    By definition, we have that $A - \eta \cdot \frac{Y^\star}{\|Y^\star\|}$ is in the $(0,\delta)$-interior of $\mathcal{K}$.
    Therefore:
    \[
        \delta \leq \mu(\{X \in \Omega ~|~ \langle X-(A - \eta \cdot Y^\star/\|Y^\star\|), -Y^\star \rangle \geq 0\}) = \mu(\{X \in \Omega ~|~ \langle X-A, -Y^\star \rangle \geq \eta \cdot \|Y^\star\|\}).
    \]
    This gives the bound:
    \[
        \log \int e^{\langle -Y^\star, X-A \rangle} d\mu(X) \geq \log\left(\delta \cdot e^{\eta \cdot \|Y^\star\|}\right) = \log(\delta) + \eta \cdot \|Y^\star\|.
    \]
    On the other hand, plugging in $Y = 0$ gives an upper bound on the optimal value of the above dual program:
    \[
        0 \geq \log \int e^{\langle -Y^\star, X-A \rangle} d\mu(X) \geq \log(\delta) + \eta \cdot \|Y^\star\|.
    \]
    Rearranging this gives the result.
\end{proof}

This gives us a good way of bounding solutions corresponding to interior points of $\mathcal{K}$.
In general however, trying to get a bound on the $\delta$ parameter of the interior is much more difficult than that of the $\eta$ parameter.
To deal with this we define a property of $\mu$ which allows us to only have to consider the $\eta$ parameter.

\begin{definition}[$\delta$-balanced measure] \label{def:balanced_in_section}
    We say that $\mu$ is \emph{$\delta$-balanced} if for any $X \in \Omega$, we have that at least $\exp(-\mathrm{poly}(\delta^{-1}, d))$ of the mass of $\mu$ is contained in the $\delta$-ball about $X$ (where $d$ is the dimension of $\mathcal{K}$). If $f$ is the polynomial in the exponent (i.e., $\exp(-f(\delta^{-1}, d))$), then we say that $\mu$ is \emph{$\delta$-balanced with bound $f$}.
\end{definition}

We now prove the main bounding box theorem for such balanced measures.
We then use this to obtain a bounding box for rank-$k$ projections and for convex bodies in the following sections.

\begin{theorem}[Bounding box for balanced measures] \label{thm:bounding-main}
    Suppose $\mu$ is $\frac{\eta}{2}$-balanced with bound $f$.
    If $A$ is in the $(\eta,0)$-interior of $\mu$ and $Y^\star \in V_\mathcal{L}$ is the optimal solution to the corresponding dual program, then $\|Y^\star\| \leq 2\eta^{-1} \cdot f(2\eta^{-1}, d) = \mathrm{poly}(\eta^{-1}, d)$.
\end{theorem}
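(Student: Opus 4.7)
The plan is to show that balance lets us upgrade the $(\eta,0)$-interior hypothesis to membership in the $(\eta/2,\delta)$-interior for $\delta = \exp(-f(2\eta^{-1}, d))$, and then to invoke Lemma~\ref{lem:twoparam_bounding} with these parameters. Substituting into that lemma's conclusion $\|Y^\star\| \le \tfrac{1}{\eta/2}\log(1/\delta)$ immediately yields $\|Y^\star\| \le 2\eta^{-1} f(2\eta^{-1},d)$, which is the desired bound.

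The heart of the proof is therefore the implication
\[
A \in (\eta,0)\text{-interior of }\mu \ \text{and}\ \mu\ \tfrac{\eta}{2}\text{-balanced with bound }f \ \Longrightarrow\ A \in (\tfrac{\eta}{2},\exp(-f(2\eta^{-1},d)))\text{-interior of }\mu.
\]
To prove this, fix any $A' \in B_{\eta/2}(A) \cap V_{\mathcal{L},B}$ and any direction $Y \in V_\mathcal{L}$ with $\|Y\|=1$; I must lower bound $\mu(\{X \in \Omega : \langle X-A', Y \rangle \ge 0\})$ by $\exp(-f(2\eta^{-1},d))$. The key geometric step is to locate a point of $\Omega$ which is well inside this half-space and then to use balance to blow it up to a ball of positive mass that still lies in the half-space.

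To locate such a point, apply the $(\eta,0)$-interior hypothesis at the shifted center $A'' := A' + \tfrac{\eta}{2}Y$, which lies in $B_\eta(A) \cap V_{\mathcal{L},B}$ by the triangle inequality. By definition of the $(0,0)$-interior, $\mu(\{X : \langle X - A'', Y\rangle \ge 0\}) > 0$, so there exists $X_0 \in \mathrm{supp}(\mu) = \Omega$ with $\langle X_0 - A', Y\rangle \ge \tfrac{\eta}{2}$. Now apply $\tfrac{\eta}{2}$-balance: the $\tfrac{\eta}{2}$-ball $B_{\eta/2}(X_0)$ contains at least $\exp(-f(2\eta^{-1}, d))$ of the mass of $\mu$. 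For every $X \in B_{\eta/2}(X_0)$, Cauchy-Schwarz gives $\langle X - X_0, Y\rangle \ge -\tfrac{\eta}{2}$, hence
\[
\langle X - A', Y\rangle = \langle X - X_0, Y\rangle + \langle X_0 - A', Y\rangle \ge -\tfrac{\eta}{2} + \tfrac{\eta}{2} = 0.
\]
Thus $B_{\eta/2}(X_0) \cap \Omega$ lies inside the desired half-space, producing the required mass lower bound, and completing the upgrade of interiority parameters.

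The main obstacle, as already indicated above, is the sleight-of-hand of choosing the correct shifted center $A''$ so that the $(\eta,0)$-hypothesis supplies an $\Omega$-point strictly inside the relevant half-space by a margin of $\tfrac{\eta}{2}$ (rather than by the full $\eta$), leaving exactly enough room for the balance ball of radius $\tfrac{\eta}{2}$ to fit entirely inside the half-space. Everything else is a straightforward application of definitions and of Lemma~\ref{lem:twoparam_bounding}; no additional machinery beyond balance, the triangle inequality, and Cauchy-Schwarz is needed.
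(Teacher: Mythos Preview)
Your proposal is correct and follows essentially the same approach as the paper: upgrade the $(\eta,0)$-interior to the $(\eta/2,\exp(-f(2\eta^{-1},d)))$-interior via the balance property, then invoke Lemma~\ref{lem:twoparam_bounding}. Your argument is in fact somewhat more explicit than the paper's at the key geometric step---the paper asserts that any closed half-space through a point of the $(\eta/2,0)$-interior contains an $\eta/2$-ball about some $X\in\Omega$, whereas you spell out the shifted center $A'' = A' + \tfrac{\eta}{2}Y$ and the Cauchy--Schwarz estimate that makes this work---but the underlying idea is identical.
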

\begin{proof}
    We first show that the $(\frac{\eta}{2},0)$-interior of $\mu$ is contained in the $(0,\exp(-f(\frac{2}{\eta}, d)))$-interior of $\mu$.
    To see this, let $A_0$ be some element of the $(\frac{\eta}{2},0)$-interior of $\mu$.
    Then any closed half-space containing $A_0$ also contains an $\frac{\eta}{2}$-ball about some $X \in \Omega$.
    That is, for every $Y \in V_\mathcal{L}$ there exists $X$ such that:
    \[
        B_{\eta/2}(X) \subseteq \{Z \in \Omega ~|~ \langle Z - A_0, Y \rangle \geq 0\}.
    \]
    Since $\mu$ is $\frac{\eta}{2}$-balanced, we have that $\exp(-f(\frac{2}{\eta}, d))$ of the mass of $\mu$ is contained in the $\frac{\eta}{2}$-ball about $X$.
    This implies:
    \[
        \exp(-f(2/\eta, d)) \leq \mu(B_{\eta/2}(X)) \leq \mu(\{Z \in \Omega ~|~ \langle Z - A_0, Y \rangle \geq 0\}).
    \]
    That is, $A_0$ is in the $(0,\exp(-f(\frac{2}{\eta}, d)))$-interior of $\mu$.
    
    Now for $A$ in the $(\eta,0)$-interior of $\mu$, we have that the $\frac{\eta}{2}$-ball about $A$ is contained in the $(\frac{\eta}{2},0)$-interior of $\mu$.
    Therefore $A$ is in the $(\frac{\eta}{2},\exp(-f(\frac{2}{\eta}, d)))$-interior of $\mu$. By Lemma \ref{lem:twoparam_bounding}, this implies $\|Y^\star\| \leq 2\eta^{-1} \cdot f(2\eta^{-1}, d)$.
\end{proof}

\begin{remark}
    Note that Theorem \ref{thm:bounding-main} is immediately applicable to uniform discrete measures on (singly) exponentially sized sets $S$.
    In particular, such a measure is automatically balanced with constant bound $f = \log|S|$.
\end{remark}

\subsection{Rank-$k$ projections}

We now prove bounding box result for $\mathcal{P}_k$, by showing that $\mu_k$ is balanced and applying the previous theorem.
Note that in this case $\mathcal{L}(X) = B$ reduces to $\Tr(X) = k$, and so $V_\mathcal{L}$ is the set of traceless Hermitian matrices in this case.

\begin{corollary}[Bounding box for $\mathcal{P}_k$] \label{cor:boundingbox_rank_k}
    Let $\mu_k$ be the uniform distribution on $\mathcal{P}_k$.
    Then given $A$ in the $(\eta,0)$-interior of $\mu_k$, the optimal traceless solution $Y^\star$ of the corresponding dual program is such that $\|Y^\star\| \leq \frac{2n^2}{\eta} \log\left(\frac{8n\sqrt{k}}{\eta}\right)$.
\end{corollary}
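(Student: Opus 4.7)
The plan is to apply the general bounding box theorem (Theorem \ref{thm:bounding-main}) to $\mu_k$ on $\mathcal{P}_k$, which reduces the problem to verifying that $\mu_k$ is $\delta$-balanced with a bound $f(\delta^{-1}, n^2)$ of the right form, namely $f = O(n^2 \log(\sqrt{k}/\delta))$. The ambient real Hilbert space of $n \times n$ Hermitian matrices has dimension $n^2$ (and the affine hull of $\mathcal{P}_k$, cut out by $\Tr(X) = k$, is a translate of $V_\mathcal{L}$ of dimension $n^2 - 1 \le n^2$), so $d$ in the statement of Theorem \ref{thm:bounding-main} can be taken to be $n^2$.

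To prove balance, fix $\delta > 0$ and an arbitrary $X_0 \in \mathcal{P}_k$. I want a lower bound on $\mu_k(B_\delta(X_0))$. The idea is a volumetric covering argument followed by transfer via unitary invariance. Every $X \in \mathcal{P}_k$ satisfies $\|X\|^2 = \Tr(X^2) = \Tr(X) = k$, so $\mathcal{P}_k$ is contained in the Frobenius ball of radius $\sqrt{k}$ about the origin in the $n^2$-dimensional ambient space. Standard covering bounds give that this ball can be covered by at most $N \leq (C\sqrt{k}/\delta)^{n^2}$ Frobenius balls of radius $\delta/2$, for an absolute constant $C$. Since $\mu_k$ is supported on $\mathcal{P}_k$ and has total mass $1$, the pigeonhole principle produces some $Y_0$ with $\mu_k(B_{\delta/2}(Y_0)) \geq 1/N$. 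Any such ball with positive $\mu_k$-mass must intersect $\mathcal{P}_k$, so pick $X_1 \in B_{\delta/2}(Y_0) \cap \mathcal{P}_k$. Then $B_{\delta/2}(Y_0) \subseteq B_\delta(X_1)$, which gives $\mu_k(B_\delta(X_1)) \geq 1/N$.

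Now comes the conceptually key step: upgrading this bound from the specific point $X_1$ produced by pigeonhole to every point of $\mathcal{P}_k$. Here I use Proposition \ref{prop:invariant_measure}: $\mu_k$ is invariant under the conjugation action $X \mapsto UXU^*$ of $U(n)$. Since this action preserves the Frobenius inner product (hence maps $\delta$-balls to $\delta$-balls) and since $U(n)$ acts transitively on $\mathcal{P}_k$ (any two rank-$k$ orthogonal projections are unitarily conjugate), the function $X \mapsto \mu_k(B_\delta(X))$ is constant on $\mathcal{P}_k$. Therefore $\mu_k(B_\delta(X_0)) \geq 1/N \geq \exp(-n^2 \log(C\sqrt{k}/\delta))$, which is exactly the form required by Definition \ref{def:balanced_in_section}, with bound $f(\delta^{-1}, n^2) = n^2 \log(C\sqrt{k}/\delta)$.

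With balance in hand, apply Theorem \ref{thm:bounding-main} with $\eta/2$ in place of $\delta$: any $A$ in the $(\eta, 0)$-interior of $\mu_k$ admits a traceless dual optimum $Y^\star$ with
\[
    \|Y^\star\| \;\leq\; \frac{2}{\eta} \cdot f\!\left(\tfrac{2}{\eta}, n^2\right) \;=\; \frac{2n^2}{\eta} \log\!\left(\frac{2C\sqrt{k}}{\eta}\right),
\]
which after absorbing $C$ into the logarithm yields the stated bound $\frac{2n^2}{\eta}\log(8n\sqrt{k}/\eta)$. The only real obstacle is the transitivity/invariance step — without it the pigeonhole only gives mass near \emph{some} point of $\mathcal{P}_k$, not near an arbitrary given point — and everything else is bookkeeping around covering numbers and plugging into the general theorem.
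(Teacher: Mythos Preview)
Your proposal is correct and follows essentially the same approach as the paper: a volumetric covering argument to find one $\delta$-ball of large $\mu_k$-mass, then unitary invariance plus transitivity of the $U(n)$-action on $\mathcal{P}_k$ to transfer this to every point, and finally an appeal to Theorem~\ref{thm:bounding-main}. The only cosmetic differences are that the paper covers by $\delta$-balls and enlarges to $2\delta$ (you cover by $\delta/2$-balls and enlarge to $\delta$), and the paper's covering bound carries an extra harmless factor of $n$ that accounts for the $8n$ inside the logarithm in the stated constant.
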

\begin{proof}
    We prove that $\mu_k$ is balanced and then apply the previous proposition.
    The number of balls of size $\delta$ required to cover the unit ball in $\R^{n^2}$ (with Euclidean/Frobenius norm) is at most $(2n/\delta)^{n^2}$.
    Since the set of projections of rank $k$ is contained in the sphere of radius $\sqrt{k}$, we have that it requires at most $(2n\sqrt{k}/\delta)^{n^2}$ $\delta$-balls to cover all such projections.
    With this, there exists some $\delta$-ball (call it $B_\delta$) in this cover which contains at least $(2n\sqrt{k}/\delta)^{-n^2}$ of the mass of $\mu_k$.
    Pick some $X \in \mathcal{P}_k \cap B_\delta$, and let $B_{2\delta}(X)$ be the ball of radius $2\delta$ which is centered at $X$.
    Thus, in fact $B_{2\delta}(X)$ contains at least $(2n\sqrt{k}/\delta)^{-n^2}$ of the mass of $\mu_k$.
    By unitary invariance of $\mu_k$, we have that the ball of radius $2\delta$ about any point of $\mathcal{P}_k$ contains at least $(2n\sqrt{k}/\delta)^{-n^2}$ of the mass of $\mu_k$.
    That is, $\mu_k$ is $\delta$-balanced with bound $f(\delta^{-1}, n) = n^2 \log(4n\sqrt{k} \cdot \delta^{-1})$ for all $\delta > 0$.
    Applying the previous proposition then gives the result.
\end{proof}

\begin{remark} \label{rem:bound_dependence}
    In the discrete measure case, the authors of \cite{StraszakV19} were able to improve the dependence on $\eta$ of the bounding box from $\eta^{-1}$ to $\log(\eta^{-1})$.
    This leads to a max-entropy approximation algorithm which does not depend on $\eta$.
    One may then naturally ask whether or not this is possible for the bounding box for $\mu_k$ discussed here.
    The answer turns out to be ``no'', and this can be seen by considering the optimal $Y^\star = \diag(y_1,y_2)$ in the case of $n=2$ and $k=1$.
    Specifically one can show that for $A = \diag(\eta, 1-\eta)$, the value of $|y_1-y_2|$ is of the order $\eta^{-1}$ as $\eta \to 0$.
    Since the relative entropy of the optimal distribution is unbounded as $\eta$ approaches 0, approximation of $Y^\star$ cannot help us to improve the dependence of $|y_1-y_2|$ on $\eta^{-1}$.
\end{remark}

\subsection{Convex bodies} \label{sec:boundingbox_convex}

We now prove bounding box result for convex bodies.
Instead of applying the previous theorem directly, we make some simpler computations which are in the same spirit.

\begin{corollary}[Bounding box for convex bodies] \label{cor:boundingbox_convex}
    Let $\mu$ be the uniform distribution on a $d$-dimensional convex body $\Omega$ contained in a ball of radius $R$. (Note that $\mathcal{K} = \hull(\Omega) = \Omega$ in this case.) Then given $\alpha$ in the $(\eta,0)$-interior of $\mu$, the optimal solution $y^\star \in V_\mathcal{L}$ of the corresponding dual program is such that $\|y^\star\| \leq \frac{2d}{\eta} \log(\frac{4R}{\eta})$.
\end{corollary}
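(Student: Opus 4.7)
The plan is to reduce to the two-parameter bounding box of Lemma \ref{lem:twoparam_bounding} by showing that $\alpha$ sits in the $(\eta/2,(\eta/(4R))^d)$-interior of $\mu$. The key geometric inputs are that $\mu$ is uniform and $\Omega \subseteq B_R(0)$ is $d$-dimensional, so $\hull(\Omega)=\Omega$ is full-dimensional (hence $V_\mathcal{L}=\R^d$), $\mathrm{vol}(\Omega) \leq c_d R^d$ where $c_d$ is the volume of the unit ball in $\R^d$, and consequently any radius-$\rho$ ball contained in $\Omega$ has $\mu$-mass at least $(\rho/R)^d$. This is the volumetric replacement for the balance property used in Theorem \ref{thm:bounding-main}.

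First, I would unpack the hypothesis. For uniform $\mu$, the $(0,0)$-interior of $\mu$ coincides with the topological interior $\mathrm{int}(\Omega)$: every interior point admits a small ball in $\Omega$, so each closed half-space through it has positive $\mu$-mass, while a boundary point has a supporting hyperplane whose exterior closed half-space is disjoint from $\Omega$ and therefore has zero mass. Hence $\alpha$ being in the $(\eta,0)$-interior of $\mu$ is equivalent to $B_\eta(\alpha) \subseteq \Omega$.

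Next, I would produce the $\delta$ parameter. Fix any $A$ with $\|A-\alpha\| \leq \eta/2$ and any nonzero $y \in \R^d$; set $u=y/\|y\|$ and $H=\{X:\langle X-A,y\rangle\geq 0\}$. I claim that the ball $B_{\eta/4}(A+(\eta/4)u)$ lies inside both $H$ and $\Omega$. Indeed, for any $z$ in this ball, Cauchy–Schwarz gives $\langle z-A,u\rangle \geq \eta/4-\eta/4 = 0$, and the triangle inequality gives $\|z-A\|\leq \eta/4 + \eta/4 = \eta/2$, so the ball sits in $B_{\eta/2}(A)\subseteq B_\eta(\alpha)\subseteq \Omega$. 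The volume estimate then yields $\mu(H\cap\Omega)\geq (\eta/(4R))^d$, and since $A$ and $y$ were arbitrary, $\alpha$ lies in the $(\eta/2,(\eta/(4R))^d)$-interior of $\mu$.

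Finally, I would invoke Lemma \ref{lem:twoparam_bounding} with parameters $\eta/2$ and $(\eta/(4R))^d$ to obtain
\[
\|y^\star\| \leq \frac{2}{\eta}\log\left(\left(\tfrac{4R}{\eta}\right)^d\right) = \frac{2d}{\eta}\log\left(\tfrac{4R}{\eta}\right),
\]
which is exactly the claimed bound. There is no genuine obstacle here; the only mild trade-off is choosing radius $\eta/4$ (rather than $\eta/2$) for the ball placed inside the half-space, so that it simultaneously fits strictly inside $H$ and inside $B_{\eta/2}(A)$. That unavoidable factor of $2$ is precisely what turns $\log(2R/\eta)$ into the $\log(4R/\eta)$ appearing in the statement.
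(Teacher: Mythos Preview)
Your proposal is correct and follows essentially the same approach as the paper: show that the $(\eta,0)$-interior is contained in the $(\eta/2,(\eta/(4R))^d)$-interior via the volume ratio estimate $\mu(B_{\eta/4}) \geq (\eta/(4R))^d$, then invoke Lemma~\ref{lem:twoparam_bounding}. The paper's proof is terser and leaves implicit the geometric step of placing the $\eta/4$-ball inside both the half-space and $\Omega$, which you spell out carefully.
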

\begin{proof}
    Note that $\alpha$ in the $(\eta, 0)$-interior of $\mu$ is automatically in the $\left(\frac{\eta}{2}, \left(\frac{\eta}{4R}\right)^d\right)$-interior of $\mu$, since:
    \[
        \mu(B_{\eta/4}) \geq \frac{\mathrm{vol}(B_{\eta/4})}{\mathrm{vol}(B_R)} = \left(\frac{\eta}{4R}\right)^d.
    \]
    By Lemma \ref{lem:twoparam_bounding}, this implies $\|y^\star\| \leq \frac{2d}{\eta} \log(\frac{4R}{\eta})$.
\end{proof}

\section{Counting oracle for $\mathcal{P}_k$} \label{sec:counting_oracle}

In this section, we prove existence of a strong counting/integration oracle for the objective function of the dual program $\dual_k$.
Recall the dual objective function:
\[
    F_A(Y) = \langle Y, A \rangle + \mathcal{E}_k(Y) = \langle Y, A \rangle + \log \int_{\mathcal{P}_k} e^{-\langle Y, X \rangle} d\mu_k(X).
\]
We want to be able to efficiently compute this function and its gradient.
In this case of rank-$k$ projections, we make the simplifying assumption that $Y$ and $A$ are both diagonal.
This simplification is actually without loss of generality, due to the Schur-Horn theorem (Corollary \ref{cor:schur_horn_opt}) and unitary invariance of $\mu_k$.
Further, it is enough to consider only $\mathcal{E}_k(Y)$ (which is independent of $A$) since $\langle Y, A \rangle$ is linear and hence easy to handle.
This leads to the main theorem of this section, stated originally as Theorem \ref{thm:counting}.

\begin{theorem}[Counting oracle for $\mathcal{P}_k$] \label{thm:counting_in_section}
    There is an algorithm that, given $n \in \N$, $k \in [n]$, an $n \times n$ real diagonal matrix $Y = \diag(y)$, and a $\delta > 0$, returns numbers $\bar{E}, \bar{G}$ such that 
    \begin{enumerate}
        \item $|\bar{E} - \mathcal{E}_k(Y)| \leq \delta$
        \item $|\bar{G} - \nabla \mathcal{E}_k(Y)| \leq \delta$, 
    \end{enumerate}  
    where $\mathcal{E}_k$ is the exponential integral defined above (and in Definition \ref{def:exp_integral}). The running time of the algorithm is polynomial in $n$, $\log(\frac{1}{\delta})$, and the number of bits needed to represent $e^{-y_i}$ for any $i \in [n]$.
\end{theorem}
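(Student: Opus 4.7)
The plan is to derive closed-form expressions for $\mathcal{E}_k(Y)$ and $\nabla\mathcal{E}_k(Y)$ as ratios of determinants of confluent Vandermonde-like matrices whose entries have bit complexity polynomial in $n$ and in the bit length of the $e^{-y_i}$, and then evaluate those determinants by standard exact arithmetic. I would handle $k=1$ first and then lift to arbitrary $k$ via the Harish-Chandra-Itzykson-Zuber (HCIZ) formula.

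For $k=1$, the starting point is that the diagonal map $\phi:X\mapsto\diag(X)$ pushes $\mu_1$ forward to the Lebesgue measure $m$ on the standard simplex $\Delta_1\subset\R^n$; this follows from Bombieri's inner product identity on monomials together with polynomial density on $\mathcal{P}_1$. Consequently, for diagonal $Y=\diag(y)$,
$$\int_{\mathcal{P}_1} e^{-\langle Y,X\rangle}\,d\mu_1(X)=\int_{\Delta_1}e^{-\langle y,x\rangle}\,dm(x),$$
and the right-hand side, written as an iterated integral over $\Delta_1$, is the $(n-1)$-fold convolution of $e^{-y_1 t},\ldots,e^{-y_n t}$ evaluated at $t=1$. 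Applying the Laplace transform turns this convolution into $\mathcal{L}^{-1}\bigl[\prod_i(s+y_i)^{-1}\bigr](1)$, and a partial-fraction decomposition gives, when the $y_i$ are distinct, the determinantal formula
$$\int_{\mathcal{P}_1}e^{-\langle Y,X\rangle}\,d\mu_1(X)=(n-1)!\cdot\frac{\det M(-y)}{\prod_{i<j}(y_j-y_i)},$$
where $M(-y)$ is the $n\times n$ matrix whose rows are $1,(-y),(-y)^2,\ldots,(-y)^{n-2},e^{-y}$. For general $k$, I apply HCIZ with $B=\diag(\mathbf{1}_k,\mathbf{0}_{n-k})$, using $\mathcal{P}_k=\{UBU^*:U\in U(n)\}$ and the fact that $\mu_k$ is the pushforward of Haar measure, to obtain an analogous ratio in which the numerator is an $n\times n$ determinant with entries $e^{-y_i\beta_j}$ and the denominator is the product of the two Vandermonde factors in $y$ and $\beta$.

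The principal obstacle in both formulas is that the denominator vanishes whenever $Y$ has repeated eigenvalues, which is the generic situation the ellipsoid algorithm feeds us (e.g.\ $Y=0$). I resolve this by iterating L'Hopital's rule: coalescing $y_i\to y_j$ cancels one $(y_i-y_j)$ factor in the denominator, and differentiating the numerator in $y_j$ again yields a determinant because $y_j$ appears in only one column. Iterating this over the distinct eigenvalues $\lambda_1,\ldots,\lambda_r$ of $Y$ with multiplicities $m_1,\ldots,m_r$ produces a confluent Vandermonde-like matrix $M_k(-\lambda)$ of size $n$, in which each block of $m_s$ columns records successive derivatives in $\lambda_s$ of a single analytic function (exponentials for $k=1$, products of exponentials for $k>1$). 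The resulting exact identity has the form
$$\int_{\mathcal{P}_k}e^{-\langle Y,X\rangle}\,d\mu_k(X)=c_{n,k}\cdot\frac{\det M_k(-\lambda)}{\prod_{s<t}(\lambda_s-\lambda_t)^{m_s m_t}},$$
valid on all of $Y$, and the same limiting procedure applied to $\partial_{y_p}$ gives $(\nabla\mathcal{E}_k(Y))_p$ as a sum of a simple rational term in $\lambda$ plus a determinantal ratio $\det M_{k,p}(-\lambda)/\det M_k(-\lambda)$ of the same shape.

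With the formulas in hand the algorithm is immediate: compute the multiplicity pattern of $y$; assemble the $n\times n$ matrices $M_k(-\lambda)$ and $M_{k,p}(-\lambda)$, whose entries are polynomials in the $\lambda_s$ of degree at most $n$ times the input quantities $e^{-\lambda_s}$; compute the two determinants by Gaussian elimination in $O(n^3)$ field operations; divide by the polynomial denominator; and take a logarithm to return $\bar E$ and $\bar G$. A standard bit-complexity accounting for determinants shows that carrying $\poly(n,\log(1/\delta),\mathrm{bits}(e^{-y_i}))$ bits of precision through these operations is enough to guarantee additive error at most $\delta$ in both outputs, yielding the claimed running time.
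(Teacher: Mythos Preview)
Your proposal is correct and follows essentially the same route as the paper: the simplex pushforward via Bombieri for $k=1$, the Laplace/convolution/partial-fractions computation yielding the Vandermonde-like determinant, the HCIZ formula for general $k$, and the iterated L'Hopital argument to produce confluent determinantal formulas valid at repeated eigenvalues. The paper's treatment differs only in bookkeeping details (explicit forms of the confluent matrices $M^{(k)}(-y)$ and $M_p^{(k)}(-y)$, and the precise polynomial entries $q_{i,j}$ that arise when limiting $\beta\to(1^k,0^{n-k})$ in HCIZ), but the structure of your argument matches.
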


\noindent
The main tool we use to prove this theorem is a collection of explicit formulas for computing $\mathcal{E}_k$ and its gradient.
We first discuss this in full detail for the case of $k=1$.
After that, we discuss how to generalize the arguments to the $k > 1$ case.

\subsection{Algorithm for $k=1$}

In this section, we construct the strong counting/integration oracle for rank-1 projections by giving formulas for the function $\mathcal{E}_1$ and its gradient (Propositions \ref{prop:evaluation_formula} and \ref{prop:gradient_formula}).
Specifically, for diagonal $Y = \diag(y)$ with distinct entries $\lambda_1 > \cdots > \lambda_k$ with multiplicities $m_1,\ldots,m_k$, we can compute the following where $M(y)$ and $M_p(y)$ are matrices defined below:
\[
    \mathcal{E}_1(Y) = \sum_{i=1}^{n-1} \log(i) + \log\det(M(-y)) - \sum_{i < j} m_im_j \log(\lambda_i - \lambda_j),
\]
\[
    (\nabla \mathcal{E}_1(Y))_l = -\sum_{i \neq p} \frac{m_i}{\lambda_p - \lambda_i} - \frac{\det(M_p(-y))}{\det(M(-y))}.
\]
The only potentially hard part of computing these expressions is computing the determinants of $M(-y)$ and $M_p(-y)$.
It is a standard fact that one can compute a determinant in time polynomial in the number of bits needed to represent the matrix, so we just need to demonstrate that the matrices have the necessary bit complexity.
Considering Definitions \ref{def:eval_matrix} (for $\gamma = 1$) and \ref{def:grad_matrix} below, we see that the matrix entries depend on computing $e^{-y_i}$, $n!$, and $y_i^n$.
All of these can be computed in time polynomial in $n$ and number of bits needed to represent $e^{-y_i}$, which is exactly what we need.
That said, all we have left now is to prove the two formulas stated above, and we do this in the following sections.

\subsubsection{Evaluating the dual integral}

We now prove the main evaluation formulas for integrals on the manifold $\mathcal{P}_1$.
Throughout we will often consider integrals on the unit sphere in $\C^n$, denoted $S_\C^n$, instead of on $\mathcal{P}_1$ directly, and we will let $\mu_{S_\C^n}$ refer to the Haar measure on the unit sphere.
Note that the transfer of formulas from the sphere to $\mathcal{P}_1$ is straightforward, as given by $(2)$ of Proposition \ref{prop:evaluation_formula}.
First, we define a parameterized matrix of a particular form which will show up many times in our computations.

\begin{definition}[Matrix for dual integral, $k=1$] \label{def:eval_matrix}
    Given $y_1,\ldots,y_n \in \R$, let $\lambda_1 < \cdots < \lambda_k$ denote the distinct values of $y_i$ with multiplicities $m_1,\ldots,m_k$. Given $\gamma$, we define an $n \times n$ matrix $M(y,\gamma)$ as follows:
    \[
    M(y,\gamma) := \left[
        \begin{matrix}
            1 & 0 & \cdots & 0 & 1 & 0 & \cdots \\
            \lambda_1 & 1 & \cdots & 0 & \lambda_2 & 1 & \cdots \\
            \lambda_1^2 & \binom{2}{1}\lambda_1 & \cdots & 0 & \lambda_2^2 & \binom{2}{1}\lambda_2 & \cdots \\
            \vdots & \vdots & \ddots & \vdots & \vdots & \vdots & \cdots \\
            \lambda_1^{n-2} & \binom{n-2}{1}\lambda_1^{n-3} & \cdots & \binom{n-2}{m_1-1}\lambda_1^{n-m_1-1} & \lambda_2^{n-2} & \binom{n-2}{1}\lambda_2^{n-3} & \cdots \\
            \frac{e^{\gamma\lambda_1}}{0!} & \frac{\gamma e^{\gamma\lambda_1}}{1!} & \cdots & \frac{\gamma^{m_1-1} e^{\gamma \lambda_1}}{(m_1-1)!} & \frac{e^{\gamma\lambda_2}}{0!} & \frac{\gamma e^{\gamma\lambda_2}}{1!} & \cdots \\
        \end{matrix}
    \right].
    \]
    We also define $M(y) := M(y,1)$. Note that only one row of $M(y,\gamma)$ depends on $\gamma$.
\end{definition}

\noindent
We now state a lemma which gives the most basic result about integrals on $\mathcal{P}_1$.
Specifically, we state a well-known result for integrals of polynomial-like functions.
This proof is very related to the unitarily invariant inner product on homogeneous polynomials, which has many names in the literature: Bombieri inner product, Fischer-Fock inner product, Segal-Bargmann inner product, etc.
The following lemma is standard, see e.g. Lemma 3.2 of \cite{pinasco2012}.

\begin{lemma}[Bombieri inner product formula] \label{lem:bombieri}
    For $\alpha \in \{0,1,2,\ldots\}^n$ such that $\sum_i \alpha_i = d$, we have:
    \[
        \int |v|^{2\alpha} d\mu_{S_\C^n}(v) = \int \prod_i |v_i|^{2\alpha_i} d\mu_{S_\C^n}(v) = \binom{d}{\alpha}^{-1} \binom{d+n-1}{n-1}^{-1} = \frac{\alpha_1! \cdots \alpha_n!(n-1)!}{(d+n-1)!}.
    \]
    Here, $\binom{d}{\alpha}$ is the multinomial coefficient, and $\binom{d+n-1}{n-1}$ is the binomial coefficient.
\end{lemma}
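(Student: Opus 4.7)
The plan is to prove the Bombieri inner product formula via the standard Gaussian-sphere trick, exploiting the fact that the complex standard Gaussian $g \sim \mathcal{CN}(0, I_n)$ on $\C^n$ has a density $\pi^{-n} e^{-\|g\|^2}$ which is unitarily invariant. First I would write $g = r \cdot v$ in ``polar'' form with $r := \|g\| \in \R_{\geq 0}$ and $v := g/\|g\| \in S_\C^n$. By unitary invariance of the Gaussian density, $v$ is distributed according to $\mu_{S_\C^n}$, and by spherical symmetry, $r$ and $v$ are independent. Therefore, for any $\alpha \in \{0,1,2,\ldots\}^n$ with $|\alpha| = d$,
\[
    \mathbb{E}\bigl[|g|^{2\alpha}\bigr] \;=\; \mathbb{E}[r^{2d}] \cdot \mathbb{E}\bigl[|v|^{2\alpha}\bigr].
\]

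Next I would evaluate both sides of the above in closed form. On the left, because the coordinates $g_i$ are independent complex Gaussians, the expectation factorizes as $\prod_i \mathbb{E}[|g_i|^{2\alpha_i}]$. Since $|g_i|^2$ is Exp(1)-distributed, a standard gamma integral gives $\mathbb{E}[|g_i|^{2\alpha_i}] = \alpha_i!$, so the left-hand side equals $\prod_i \alpha_i!$. For the radial factor, note $r^2 = \|g\|^2 = \sum_i |g_i|^2$ is Gamma($n, 1$), hence
\[
    \mathbb{E}[r^{2d}] \;=\; \frac{\Gamma(n+d)}{\Gamma(n)} \;=\; \frac{(n+d-1)!}{(n-1)!}.
\]
Solving for $\mathbb{E}[|v|^{2\alpha}]$ yields the claimed formula
\[
    \int |v|^{2\alpha} \, d\mu_{S_\C^n}(v) \;=\; \frac{\alpha_1! \cdots \alpha_n! \, (n-1)!}{(d+n-1)!},
\]
which immediately rewrites as the stated multinomial/binomial product.

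The proof is essentially routine once one sets up the Gaussian decomposition; the only subtlety is justifying independence of $r$ and $v$, but this follows directly from unitary invariance (the distribution of $(r, v)$ is invariant under $v \mapsto Uv$ for any unitary $U$, forcing the conditional distribution of $v$ given $r$ to be $\mu_{S_\C^n}$ independently of $r$). Since the reference \cite{pinasco2012} is cited as giving the proof, I would keep mine to the concise Gaussian-sphere argument sketched above rather than reproducing the alternative combinatorial derivation via the Bombieri/Fischer-Fock inner product on homogeneous polynomials.
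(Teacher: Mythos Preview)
Your proof is correct. The paper does not actually prove this lemma: it simply states it as standard and refers the reader to Lemma~3.2 of \cite{pinasco2012}. Your Gaussian--sphere argument is a clean, self-contained derivation: the factorization $g = r \cdot v$ with $r$ and $v$ independent is justified exactly as you say, the moment computations $\mathbb{E}[|g_i|^{2\alpha_i}] = \alpha_i!$ (via $|g_i|^2 \sim \mathrm{Exp}(1)$) and $\mathbb{E}[r^{2d}] = (n+d-1)!/(n-1)!$ (via $\|g\|^2 \sim \mathrm{Gamma}(n,1)$) are standard, and dividing gives the claimed formula. Since the paper offers no argument of its own to compare against, your write-up in fact supplies more than the paper does here.
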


\noindent
The next lemma then shows the connection between the integrals we want to compute and the Laplace transform.
As an immediately corollary, we obtain equality of $(3)$ and $(4)$ in Proposition \ref{prop:evaluation_formula} below in the case of distinct values of $y_1,\ldots,y_n$.

\begin{lemma} \label{lem:laplace_integral}
    For $y_1,\ldots,y_n \in \R$ and $x_n := 1-x_1-\cdots-x_{n-1}$, we have the following where $*$ denotes the usual integral convolution:
    \[
        \int_0^1 \int_0^{1-x_1} \cdots \int_0^{1-x_1-\cdots-x_{n-2}} e^{\langle y, x \rangle} dx_{n-1} \cdots dx_1 = \left.(e^{y_1 t} * \cdots * e^{y_n t})\right|_{t=1}.
    \]
    If $y_1 < y_2 < \cdots < y_n$, then we further have:
    \[
        \int_0^1 \int_0^{1-x_1} \cdots \int_0^{1-x_1-\cdots-x_{n-2}} e^{\langle y, x \rangle} dx_{n-1} \cdots dx_1 = \sum_{i=1}^n \frac{e^{y_i}}{\prod_{j \neq i} (y_i-y_j)} = \frac{\det(M(y))}{\prod_{i < j} (y_j-y_i)}.
    \]
\end{lemma}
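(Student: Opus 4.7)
The plan is to prove the two equalities separately, with the first established by induction and the second via the Laplace transform followed by a cofactor expansion.

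For the first equality, I will extend each function $e^{y_i t}$ to a causal function on $\R$ (zero for $t<0$) so that the convolution integral takes the standard one-sided form $(f*g)(t) = \int_0^t f(s)g(t-s)\,ds$. I then prove by induction on $n$ that
\[
(e^{y_1 t} * \cdots * e^{y_n t})(T) = \int_{s_1,\ldots,s_{n-1} \geq 0,\ \sum s_i \leq T} \exp\!\left(\sum_{i<n} y_i s_i + y_n\bigl(T - \textstyle\sum_{i<n} s_i\bigr)\right) ds_1 \cdots ds_{n-1}.
\]
The base case $n=2$ is an immediate rewriting of the convolution integral, and the inductive step peels off one convolution: writing the $n$-fold convolution as $(e^{y_1 t}*\cdots*e^{y_{n-1} t}) * e^{y_n t}$, applying the inductive hypothesis inside, and then changing variables $s = s_1+\cdots+s_{n-1}$ versus $T-s = s_n$ yields the desired $(n-1)$-dimensional simplex integral. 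Setting $T=1$ gives the first claimed equality.

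For the second equality, under the distinctness assumption $y_1<\cdots<y_n$, I take Laplace transforms on the positive real axis, using $\mathcal{L}[e^{y_i t}](s) = (s - y_i)^{-1}$. Since Laplace transforms a convolution to a product,
\[
\mathcal{L}[e^{y_1 t} * \cdots * e^{y_n t}](s) = \prod_{i=1}^n \frac{1}{s-y_i} = \sum_{i=1}^n \frac{1}{\prod_{j\neq i}(y_i-y_j)} \cdot \frac{1}{s-y_i},
\]
where the partial fraction coefficients are computed by the standard cover-up residue formula at $s=y_i$. Inverting the Laplace transform termwise and evaluating at $t=1$ yields the sum $\sum_i e^{y_i}/\prod_{j\neq i}(y_i-y_j)$.

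To derive the determinant form, I will expand $\det(M(y))$ along its last row (the row whose entries are $e^{y_i}$). With all $y_i$ distinct, the matrix in Definition \ref{def:eval_matrix} reduces to the Vandermonde matrix with the last row replaced by $(e^{y_1},\ldots,e^{y_n})$, so each cofactor is a smaller Vandermonde determinant $\prod_{j<k,\ j,k\neq i}(y_k-y_j)$. Dividing by the full Vandermonde $\prod_{i<j}(y_j-y_i)$ and tracking signs from the cofactor expansion together with the sign $(-1)^{n-1}$ incurred when converting $\prod_{j\neq i}(y_j-y_i)$ to $\prod_{j\neq i}(y_i-y_j)$ exactly reproduces the partial-fraction sum. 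The main bookkeeping hurdle is this sign computation; otherwise the argument is a clean chain of inductive, transform, and determinant manipulations.
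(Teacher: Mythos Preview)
Your proposal is correct and follows essentially the same approach as the paper: the first identity is established by recognizing the iterated simplex integral as an unwound convolution (the paper peels off convolutions from the integral side while you build them up inductively from the convolution side, but these are the same argument in reverse), and the second identity is obtained via the Laplace transform, partial fractions, and a last-row cofactor expansion of $M(y)$ with the attendant sign bookkeeping. No substantive differences.
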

\begin{proof}
    We first compute:
    \[
    \begin{split}
        \int_0^1 \int_0^{1-x_1} \cdots &\int_0^{1-x_1-\cdots-x_{n-2}} e^{\langle y, x \rangle} dx_{n-1} \cdots dx_1 \\
            &= \int_0^1 e^{y_1x_1} \cdots \int_0^{1-x_1-\cdots-x_{n-2}} e^{y_{n-1}x_{n-1}} e^{y_n(1-x_1-\cdots-x_{n-1})} dx_{n-1} \cdots dx_1 \\
            &= \int_0^1 e^{y_1x_1} \cdots \int_0^{1-x_1-\cdots-x_{n-3}} e^{y_{n-2}x_{n-2}} \left.(e^{y_{n-1} t} * e^{y_n t})\right|_{t=1-x_1-\cdots-x_{n-2}} dx_{n-2} \cdots dx_1 \\
            &= \cdots = \int_0^1 e^{y_1x_1} \left.(e^{y_2 t} * \cdots * e^{y_n t})\right|_{t=1-x_1} dx_1 \\
            &= \left.(e^{y_1 t} * \cdots * e^{y_n t})\right|_{t=1}.
    \end{split}
    \]
    Using the Laplace transform, we have $\mathcal{L}[e^{y_i t}](s) = \frac{1}{s-y_i}$ which implies:
    \[
        e^{y_1 t} * \cdots * e^{y_n t} = \mathcal{L}^{-1}\left[\frac{1}{(s-y_1)(s-y_2) \cdots (s-y_n)}\right](t).
    \]
    Assuming $y_1 < y_2 < \cdots < y_n$, we can use Lagrange interpolation to compute:
    \[
        \mathcal{L}^{-1}\left[\frac{1}{(s-y_1)(s-y_2) \cdots (s-y_n)}\right](t) = \mathcal{L}^{-1}\left[\frac{1}{Q'(y_1) \cdot (s-y_1)} + \cdots + \frac{1}{Q'(y_n) \cdot (s-y_n)}\right](t).
    \]
    Here, $Q(s) := (s-y_1)(s-y_2) \cdots (s-y_n)$. With this we have:
    \[
        \mathcal{L}^{-1}\left[\frac{1}{Q'(y_1) \cdot (s-y_1)} + \cdots + \frac{1}{Q'(y_n) \cdot (s-y_n)}\right](t) = \frac{e^{y_1 t}}{Q'(y_1)} + \cdots + \frac{e^{y_n t}}{Q'(y_n)} = \sum_{i=1}^n \frac{e^{y_i t}}{\prod_{j \neq i} (y_i-y_j)}.
    \]
    Plugging in $t=1$ gives the first equality in the second statement. To see the last equality, notice that because $y_1 < \cdots < y_n$, the expression for $\det(M(y))$ will be a sum of exponentials multiplied by Vandermonde determinants (expand along the last row of $M(y)$). The result follows, taking care to keep track of signs.
\end{proof}

\noindent
We now state and prove the full evaluation formula for $\mathcal{P}_1$.
The two most involved parts of the proof are showing equality of $(1)$ and $(4)$ on polynomials and showing equality of $(4)$ and $(5)$ for non-distinct values of $y_1,..,y_n$.

\begin{proposition}[Evaluating the dual integral, $k=1$] \label{prop:evaluation_formula}
    Fix $n \in \N$, and let $\mu_{S_\C^n}, \mu_1, \mu_{\Delta_1}$ be the uniform probability distributions on the complex unit sphere in $\C^n$, on $\mathcal{P}_1$, and on the standard simplex in $\R^n$, respectively. For a given analytic function $f$ on the standard simplex the following expressions are equal:
    \begin{enumerate}
        \item $\displaystyle \int_{S_\C^n} f(|v_1|^2,\ldots,|v_n|^2) d\mu_{S_\C^n}(v)$,
        \item $\displaystyle \int_{\mathcal{P}_1} f(\diag(X)) d\mu_1(X)$,
        \item $\displaystyle \int_{\Delta_1} f(x) d\mu_{\Delta_1}(x)$,
        \item $\displaystyle (n-1)! \int_0^1 \int_0^{1-x_1} \cdots \int_0^{1-x_1-\cdots-x_{n-2}} f(x_1,\ldots,x_{n-1},1-x_1-\cdots-x_{n-1}) dx_{n-1} \cdots dx_1$.
    \end{enumerate}
    If $f(x) = e^{\langle y, x \rangle}$ for some real $y$ with distinct entries $\lambda_1 < \cdots < \lambda_k$ with multiplicities $m_1,\ldots,m_k$, then we have another equal expression:
    \begin{enumerate}
        \setcounter{enumi}{4}
        \item $\displaystyle (n-1)! \frac{\det(M(y))}{\prod_{i < j} (\lambda_j - \lambda_i)^{m_im_j}}$.
    \end{enumerate}
\end{proposition}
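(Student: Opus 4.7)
The plan is to establish the chain of equalities in stages. For $(1) = (2)$, I would invoke that $\mu_1$ is by definition the pushforward of $\mu_{S_\C^n}$ through the map $v \mapsto vv^*$ (as stated in the Preliminaries), together with the identity $\diag(vv^*) = (|v_1|^2, \ldots, |v_n|^2)$; setting $g(X) := f(\diag(X))$ in the pushforward identity $\int g\,d\mu_1 = \int g(vv^*)\,d\mu_{S_\C^n}$ gives the result. For $(3) = (4)$, the $(n-1)$-dimensional standard simplex has Lebesgue volume $\tfrac{1}{(n-1)!}$, so the uniform probability measure $\mu_{\Delta_1}$ equals $(n-1)!$ times Lebesgue measure on the parametrization $(x_1,\ldots,x_{n-1})$ with $x_n = 1 - x_1 - \cdots - x_{n-1}$, which immediately yields $(4)$.

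For $(1) = (3)$, the plan is to first prove equality on monomials $f(x) = \prod_i x_i^{\alpha_i}$ and then extend by approximation. Lemma \ref{lem:bombieri} evaluates $(1)$ on such a monomial as $\tfrac{\alpha_1!\cdots\alpha_n!\,(n-1)!}{(|\alpha|+n-1)!}$. The same value for $(3)$ is the classical Dirichlet integral on the simplex, which can be obtained by a simple induction on $n$ that strips off one variable at a time using the Beta integral. Since $S_\C^n$, $\mathcal{P}_1$ and $\Delta_1$ are compact and any analytic $f$ on a neighborhood of the simplex is a uniform limit of its Taylor polynomials there, the monomial equality extends to all analytic $f$ by linearity and dominated convergence. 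In particular the exponential $e^{\langle y, x\rangle}$ is an admissible $f$, so $(1)$ through $(4)$ are all equal for that specific $f$.

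For $(4) = (5)$ in the exponential case, when all entries of $y$ are distinct, Lemma \ref{lem:laplace_integral} gives $(4) = (n-1)!\,\det(M(y))/\prod_{i<j}(y_j - y_i)$, which is exactly $(5)$ since all multiplicities $m_i = 1$. For general $y$ with multiplicity pattern $(\lambda_1^{m_1}, \ldots, \lambda_k^{m_k})$, I would pass to the limit along a sequence of fully-distinct $y$-values approaching this pattern. The left side $(4)$ is an entire function of $y$ and so converges to its value at the limit. The right side is a confluent Vandermonde limit: as the $m_i$ coordinates merging to $\lambda_i$ coalesce, a standard L'Hopital/Taylor expansion replaces the $m_i$ corresponding columns of $M(y)$ by the normalized derivative columns $\partial^j/(j!\,\partial \lambda_i^j)$ acting on the base column $(1, \lambda, \lambda^2, \ldots, \lambda^{n-2}, e^\lambda)$, producing precisely the binomial-coefficient columns prescribed in Definition \ref{def:eval_matrix}. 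Meanwhile the factors $(y_{j'} - y_{i'})$ for coordinates coalescing into the same $\lambda_i$ cancel against the factorials from iterated differentiation, and the cross factors become $\prod_{i<j}(\lambda_j - \lambda_i)^{m_im_j}$.

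I expect the main obstacle to lie in the confluent limit bookkeeping in $(4) = (5)$: one must verify that the $m_i!$ factorials generated by iterated L'Hopital differentiation line up exactly with the binomial normalizers built into $M(y)$, so that the limit yields no spurious constants and reproduces the denominator $\prod_{i<j}(\lambda_j-\lambda_i)^{m_im_j}$ on the nose. The cleaner route I would take is to invoke the confluent Vandermonde identity as a black box: for any smooth vector-valued function $g$, $\det[g(y_1), \ldots, g(y_n)]/\prod_{i<j}(y_j-y_i)$ extends continuously to coalescing $y$'s by replacing coalescing columns with normalized derivatives of $g$. Applying this to $g(y) = (1, y, y^2, \ldots, y^{n-2}, e^y)^\top$, which has columns equal to those of $M(y)$ in the distinct case, packages all of the bookkeeping into a single classical statement and reduces the non-distinct case of $(4)=(5)$ to continuity plus the already-established distinct case.
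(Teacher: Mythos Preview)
Your proposal is correct and follows essentially the same route as the paper: pushforward for $(1)=(2)$, the simplex volume for $(3)=(4)$, agreement on monomials via Lemma~\ref{lem:bombieri} and the Dirichlet/Beta integral (the paper does the equivalent induction for $(1)=(4)$ directly), and Lemma~\ref{lem:laplace_integral} plus a confluent L'H\^opital limit for $(4)=(5)$. The only cosmetic difference is that you package the coalescing-$y$ bookkeeping as a confluent Vandermonde black box, whereas the paper carries out the L'H\^opital step explicitly by applying the operators $D_i=\prod_{j=1}^{m_i}\partial_{m_1+\cdots+m_{i-1}+j}^{\,j-1}$ to numerator and denominator and tracking the factorials $\prod_{p=0}^{m_i-1}p!$ that emerge.
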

\begin{proof}
    First, for the equality of $(1)$ and $(2)$, note that $\mu_1$ is the pushforward measure of $\mu_{S_\C^n}$ through the map $\psi: S_\C^n \to \mathcal{P}_1$ given by $\psi: v \mapsto vv^*$. (To see this, note that $\psi$ is unitarily invariant and $\mu_{S_\C^n}$ and $\mu_1$ are the unique unitarily invariant measures on their respective domains.) With this, we then have:
    \[
        \int_{\mathcal{P}_1} f(\diag(X)) d\mu_1(X) = \int_{S_\C^n} f(\diag(\psi(v))) d_{S_\C^n}(v) = \int_{S_\C^n} f(|v_1|^2,\ldots,|v_n|^2) d_{S_\C^n}(v).
    \]
    That is, $(1)$ and $(2)$ are equal.

    Next, the equality of $(3)$ and $(4)$ follows from the fact that the map between the two domains of integration (both of which are simplices) is affine. Therefore the determinant of the Jacobian is a constant, and so we only need to integrate over a constant function to determine that constant. A simple induction shows that it is $(n-1)!$.
    
    To prove the equality of $(1)$ and $(4)$, we compute the integrals on a given monomial $x^m := x_1^{m_1} \cdots x_{n-1}^{m_{n-1}} (1-x_1-\cdots-x_{n-1})^{m_n}$. First, by Lemma \ref{lem:bombieri} we have:
    \[
        \int |v|^{2m} d\mu_{S_\C^n}(v) = \frac{m_1! \cdots m_n! (n-1)!}{(|m|+n-1)!} = \binom{|m|+n-1}{m_1,\ldots,m_n,n-1}^{-1}.
    \]
    Now, note:
    \[
        \int_0^1 x^l (1-x)^m dx = \sum_{k=0}^m \binom{m}{k} (-1)^k \int_0^1 x^{l+k} dx = \frac{l!m!}{(l+m+1)!} \sum_{k=0}^m \frac{\prod_{j \neq k} (l+j+1)}{(-1)^k k! (m-k)!} = \frac{l!m!}{(l+m+1)!}.
    \]
    The last equality is due to Lagrange interpolation, considering the sum as a function of $n$. We further have:
    \[
        \frac{l!m!}{(l+m+1)!} = \binom{l+m+2-1}{l,m,2-1}^{-1}.
    \]
    That is, we have equality whenever $n=2$, proving the base case. The rest of the proof goes by induction. First we compute for $\alpha = 1-x_1-\cdots-x_{n-2}$:
    \[
        \int_0^\alpha x_{n-1}^j (\alpha-x_{n-1})^k dx_{n-1} = \int_0^1 (\alpha u)^j (\alpha - \alpha u)^k \alpha du = \alpha^{j+k+1} \int_0^1 u^j (1-u)^k du = \frac{\alpha^{j+k+1} \cdot j!k!}{(j+k+1)!}.
    \]
    With this, we then compute the following by induction, letting $\beta = 1-x_1-\cdots-x_{n-3}$:
    \[
    \begin{split}
        (n-1)! \int_0^1 \cdots &\int_0^\alpha x_1^{m_1} \cdots x_{n-1}^{m_{n-1}} (\alpha-x_{n-1})^{m_n} dx_{n-1} \cdots dx_1 \\
            &= \frac{m_{n-1}!m_n!}{(m_{n-1}+m_n+1)!} \cdot (n-1)! \int_0^1 \cdots \int_0^{\alpha+x_{n-2}} x_1^{m_1} \cdots x_{n-2}^{m_{n-2}} \alpha^{m_{n-1}+m_n+1} dx_{n-2} \cdots dx_1 \\
            &= \frac{m_{n-1}!m_n!(n-1)}{(m_{n-1}+m_n+1)!} \left[(n-2)! \int_0^1 \cdots \int_0^\beta x_1^{m_1} \cdots x_{n-2}^{m_{n-2}} (\beta-x_{n-2})^{m_{n-1}+m_n+1} dx_{n-2} \cdots dx_1\right] \\
            &= \frac{m_{n-1}!m_n!(n-1)}{(m_{n-1}+m_n+1)!} \cdot \frac{m_1! \cdots m_{n-2}! (m_{n-1}+m_n+1)! (n-1-1)!}{(m_1+\cdots+m_n+1+n-1-1)!} \\
            &= \frac{m_1! \cdots m_n! (n-1)!}{(m_1+\cdots+m_n+n-1)!}.
    \end{split}
    \]
    This completes the proof of equality of $(1)$ and $(4)$.
    
    Finally, we prove the equality of $(4)$ and $(5)$ for $f(x) = e^{\langle y, x \rangle}$. Note that if $y_1 < \cdots < y_n$, then the result follows from the previous lemma. Otherwise, the expression in $(4)$ (for this function $f$) is continuous in $y_1,\ldots,y_n$, and so we can limit the expression for distinct eigenvalues. That said, we let $y'_1 < \cdots < y'_n$ be distinct values near to the $y_i$, and we apply L'Hoptial's rule to $\frac{\det(M(y'))}{\prod_{i < j} (y'_j - y'_i)}$ based on the multiplicities of the $y_i$. Specifically, for each $i \in [k]$ we apply the following differential operator to numerator and denominator (let $\partial_i := \partial_{y'_i}$):
    \[
        D_i := \prod_{j=1}^{m_i} \partial_{m_1+\cdots+m_{i-1}+j}^{j-1} = \partial_{m_1+\cdots+m_{i-1}+1}^0 \partial_{m_1+\cdots+m_{i-1}+2}^1 \cdots \partial_{m_1+\cdots+m_i}^{m_i-1}.
    \]
    The powers here correspond to the number of terms of the denominator of $\frac{\det(M(y'))}{\prod_{i < j} (y'_j - y'_i)}$ which will vanish when the $m_i$ values of $y'_1,\ldots,y'_n$ limit to $\lambda_i$. That said, we now want to compute:
    \[
        \left.\frac{D_1 \cdots D_k \det(M(y'))}{D_1 \cdots D_k \prod_{i < j} (y'_j - y'_i)}\right|_{y' = \lambda}.
    \]
    We first compute the denominator via the product rule, noting that the only nonzero term occurs whenever all derivatives from a given $D_i$ are applied to differences of eigenvalues corresponding to $\lambda_i$:
    \[
        \text{denominator} = \left(\prod_{i=1}^k \prod_{p=0}^{m_i-1} p!\right) \cdot \prod_{i < j} (\lambda_j - \lambda_i)^{m_im_j}.
    \]
    We next compute the numerator using the fact that exactly one row of the matrix depends on any given $y'_i$, as so we can apply the derivatives to the appropriate rows. Further, this means the numerator can still be expressed as a determinant. We also incorporate the factorials in the denominator expression above, by dividing each column by the appropriate factorial:
    \[
        \frac{\text{numerator}}{\prod_{i=1}^k \prod_{p=0}^{m_i-1} p!} = \left.\frac{D_1 \cdots D_k \det(M(y'))}{\prod_{i=1}^k \prod_{p=0}^{m_i-1} p!}\right|_{y'=\lambda} = \det(M(y)).
    \]
    This gives the result.
\end{proof}

\begin{remark} \label{rem:eval_order}
    Even though we assume the distinct values of $y_1,\ldots,y_n$ to be in increasing order in the previous result and in Definition \ref{def:eval_matrix}, we actually don't need this. Note that swapping the order of $\lambda_i$ and $\lambda_{i+1}$ affects the numerator and denominator of the expression $(4)$ in the same way, by multiplication by $(-1)^{m_im_{i+1}}$.
\end{remark}

\subsubsection{Computing the gradient}

We now compute the gradient of $\mathcal{E}_1(Y)$ for $Y = \diag(y)$ using the above formulas.
The first thing to note is that we can use an argument similar to what we used in the proof of the evaluation formula.
Specifically, note the following expression where $\partial_{y_l} := \frac{\partial}{\partial y_l}$:
\[
    \partial_{y_l} \mathcal{E}_1(Y) = \partial_{y_l} \log \int e^{-\langle Y, X \rangle} d\mu_1(X) = \frac{\int -X_{ll} \cdot e^{-\langle Y, X \rangle} d\mu_1(X)}{\int e^{-\langle Y, X \rangle} d\mu_1(X)}.
\]
In particular, we obtain the following bound where $y_1 \geq \cdots \geq y_n$ are the entries of diagonal $Y$:
\[
    |\partial_{y_l} \mathcal{E}_1(Y)| \leq \frac{\int e^{-\langle Y, X \rangle} d\mu_1(X)}{\int e^{-\langle y_l I_n, X \rangle} d\mu_1(X)} = e^{y_1} \int e^{-\langle Y, X \rangle} d\mu_1(X).
\]
From these observations, we have that $\partial_{y_l} \mathcal{E}_1(Y)$ is continuous on diagonal matrices $Y$.
Therefore, to compute the gradient we can first assume that $y_l$ is distinct from the other diagonal entries, and then limit via L'Hopital's rule (as in the proof of the evaluation formula).
We do exactly this to prove the gradient formula, after defining another parameterized matrix.

\begin{definition}[Matrix for gradient formula, $k=1$] \label{def:grad_matrix}
    Given $y_1,\ldots,y_n \in \R$, let $\lambda_1 < \cdots < \lambda_k$ denote the distinct values of $y_i$ with multiplicities $m_1,\ldots,m_k$. Given $p \in [k]$, we define an $n \times n$ matrix $M_p(y)$ as the matrix which differs from $M(y)$ in one column, given as follows:
    \[
    M_p(y) := \left[
        \begin{matrix}
            \cdots & 1 & 0 & \cdots & 0 & 0 & \cdots \\
            \cdots & \lambda_p & 1 & \cdots & 0 & 0 & \cdots \\
            \cdots & \lambda_p^2 & \binom{2}{1}\lambda_p & \cdots & 0 & 0 & \cdots \\
            \cdots & \vdots & \vdots & \ddots & \vdots & \vdots & \cdots \\
            \cdots & \lambda_p^{n-2} & \binom{n-2}{1}\lambda_p^{n-3} & \cdots & \binom{n-2}{m_p-2}\lambda_p^{n-m_p} & \binom{n-2}{m_p}\lambda_p^{n-m_p-2} & \cdots \\
            \cdots & \frac{e^{\lambda_p}}{0!} & \frac{e^{\lambda_p}}{1!} & \cdots & \frac{e^{\lambda_p}}{(m_p-2)!} & \frac{e^{\lambda_p}}{m_p!} & \cdots \\
        \end{matrix}
    \right].
    \]
    That is, $\frac{\partial_{\lambda_p}}{m_p}$ is applied to the right-most column of $M(y)$ that depends on $\lambda_p$.
\end{definition}

\begin{proposition}[Gradient formula, $k=1$] \label{prop:gradient_formula}
    Assume $y_1,\ldots,y_n$ are the diagonal values of diagonal $Y$, with distinct values $\lambda_1 > \cdots > \lambda_k$ and multiplicities $m_1, \ldots, m_k$. Letting $p$ be such that $y_l = \lambda_p$, we have the following expression:
    \[
        (\nabla \mathcal{E}_1(Y))_l = -\sum_{i \neq p} \frac{m_i}{\lambda_p - \lambda_i} - \frac{\det(M_p(-y))}{\det(M(-y))}.
    \]
\end{proposition}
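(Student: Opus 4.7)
The strategy mirrors the proof of Proposition \ref{prop:evaluation_formula}: first establish the formula when the entries $y_1,\ldots,y_n$ are all distinct, then pass to the general case by a L'Hopital argument. The preamble to the proposition already reduces us to a continuity argument, so it suffices to verify the formula on the dense open set where all $y_i$ are distinct, together with showing that the right-hand side extends continuously to the coalescent limit with the correct value.

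\emph{Distinct case.} When $y$ has all distinct entries, so every $m_i = 1$, take the logarithm of the explicit formula $(5)$ of Proposition \ref{prop:evaluation_formula} and differentiate with respect to $y_l$. The derivative of $-\sum_{i<j}\log(\lambda_i - \lambda_j)$ contributes exactly $-\sum_{i \neq p}\frac{1}{\lambda_p - \lambda_i}$ after a sign bookkeeping. For the derivative of $\log\det M(-y)$, observe that in the distinct case only one column of $M(-y)$ depends on $y_l$, namely the column $(1,-y_l,\ldots,(-y_l)^{n-2},e^{-y_l})^\top$ associated to $\lambda_p = y_l$. Differentiating this column entrywise in $y_l$ and applying the standard cofactor/multilinearity rule for the derivative of a determinant, a direct comparison with Definition \ref{def:grad_matrix} (setting $m_p=1$) shows that the resulting determinant equals $-\det M_p(-y)$, giving the claimed formula in the distinct case.

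\emph{General case via L'Hopital.} For $y$ with repeated entries, take a perturbation $y'\to y$ with all entries distinct, apply the distinct-case formula, and take the limit. Exactly as in the proof of Proposition \ref{prop:evaluation_formula}, apply the differential operator $D_1\cdots D_k$ of that proof to numerator and denominator of the distinct-case expression; by the same Pascal-identity calculation $m_p\binom{s}{m_p} = \binom{s}{m_p-1}(s-m_p+1)$, these operators carry $\det M(-y')$ to $\det M(-y)$ with the higher-multiplicity column structure of Definition \ref{def:eval_matrix}, and transform $\prod_{i<j}(\lambda'_i - \lambda'_j)$ into $\prod_{i<j}(\lambda_i-\lambda_j)^{m_im_j}$ up to the required product of factorials. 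An analogous application of the same operators to the determinant coming from the $y_l$-column derivative in the distinct case yields $\det M_p(-y)$. In parallel, the finite part of $\sum_{i\neq p}\frac{1}{\lambda_p^{\mathrm{pert}} - \lambda_i^{\mathrm{pert}}}$ collapses to $\sum_{i\neq p}\frac{m_i}{\lambda_p-\lambda_i}$ as each distinct $\lambda_i$ absorbs $m_i$ perturbed copies, while the singular terms arising from perturbed entries that limit to $\lambda_p$ itself cancel the matching singularities in $\det M_p(-y')/\det M(-y')$, since the whole expression equals $\partial_{y_l}\mathcal{E}_1(Y')$, which is continuous.

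The main obstacle will be the parallel combinatorial/sign bookkeeping of two L'Hopital calculations at once: in particular verifying that the column-differentiated matrix of Step 1 survives the $D_1\cdots D_k$ limiting procedure to produce exactly the matrix $M_p(-y)$ of Definition \ref{def:grad_matrix}, factor of $1/m_p$ and all, and that the singular cancellation between the diverging Vandermonde sum and the ratio $\det M_p(-y')/\det M(-y')$ comes out with the correct signs. These checks are analogous to, but a little more delicate than, the corresponding step in Proposition \ref{prop:evaluation_formula}, because one column is already perturbed before the L'Hopital operators are applied.
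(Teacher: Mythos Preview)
Your distinct-case argument is correct, and your overall strategy of ``prove for distinct entries, then limit'' is sound in spirit. However, your limiting step is both different from the paper's and, as written, not quite workable.

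The paper does \emph{not} perturb all entries of $y$ to be distinct and then coalesce them all simultaneously via $D_1\cdots D_k$. Instead it perturbs only the single entry $y_l$ away from $\lambda_p$, leaving all other multiplicities intact, and invokes the already-proven multiplicity form of the evaluation formula (Proposition~\ref{prop:evaluation_formula}) for this perturbed vector $y'$. Differentiating $\mathcal{E}_1$ in $y_l$ then produces an expression with a single factor $(\lambda_p - y_l)^{m_p-1}$ in the denominator, and a single L'Hopital limit $y_l\to\lambda_p$ (applying $\partial_{y_l}^{m_p}$ to numerator and denominator) yields $\det M_p(-y)/\det M(-y)$ directly. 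No singular cancellation between separately diverging pieces is ever needed.

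Your route, by contrast, forces you to confront exactly the cancellation you flag as the ``main obstacle'': in the all-distinct formula the sum $\sum_{i\ne l}\frac{1}{y_l'-y_i'}$ and the ratio $\det M_l(-y')/\det M(-y')$ each diverge as entries coalesce to $\lambda_p$, and only their sum stays bounded. Applying $D_1\cdots D_k$ L'Hopital-style to ``numerator and denominator'' of the determinant ratio is not a valid move here, because that ratio is not of the form $0/0$ with a finite limit; the numerator and denominator vanish to \emph{different} orders (the ratio blows up), so L'Hopital applied to it alone does not converge to $\det M_p(-y)/\det M(-y)$. To make your approach rigorous you would need to first combine the two diverging pieces over a common denominator and only then apply L'Hopital, which is substantially messier than what you have sketched. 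The paper's one-entry perturbation sidesteps this entirely and is the cleaner argument.
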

\begin{proof}
    We first assume that $y_l$ is distinct from $\lambda_p$, and then we limit at the end.
    Specifically, we assume the distinct values of $y_1,\ldots,y_n$ are $\lambda_1 > \cdots > \lambda_p > y_l > \lambda_{p+1} > \cdots > \lambda_k$, where now the multiplicity of $\lambda_p$ is now one less than it was originally. We let $y'$ denote these new values of $y$ (with $y_l$ possibly changed) and let $m_i'$ denote these new multiplicities (only $m_p$ decreased by 1).
    We now want to compute:
    \[
    \begin{split}
        \left.\partial_{y_l} \mathcal{E}_1(Y)\right|_{y_l=\lambda_p} &= \left.\partial_{y_l} \log\left(\frac{(n-1)!}{\prod_{i < j} (\lambda_i - \lambda_j)^{m_i'm_j'}} \cdot \frac{\det(M(-y'))}{\prod_{i \leq p} (\lambda_i-y_l)^{m_i'} \prod_{i > p} (y_l-\lambda_i)^{m_i'}}\right)\right|_{y_l=\lambda_p} \\
            &= \left.\partial_{y_l} \left[\log\left(\frac{\det(M(-y'))}{(\lambda_p-y_l)^{m_p-1}}\right) - \sum_{i < p} m_i \log(\lambda_i-y_l) - \sum_{i > p} m_i \log(y_l-\lambda_i) \right]\right|_{y_l=\lambda_p}.
    \end{split}
    \]
    It is at this point that we limit $y_l$ to $\lambda_p$ and use the L'Hopital's rule argument.
    (Recall the above discussion which describes why this argument is valid.)
    We want to apply this argument to the following part of the above expression:
    \[
        \left.\partial_{y_l} \log\left(\frac{\det(M(-y'))}{(\lambda_p-y_l)^{m_p-1}}\right)\right|_{y_l=\lambda_p} = \left.\frac{(\lambda_p-y_l) \cdot \partial_{y_l} \det(M(-y')) + \det(M(-y')) \cdot (m_p-1)}{\det(M(-y')) \cdot (\lambda_p-y_l)}\right|_{y_l=\lambda_p}.
    \]
    The key is to notice that the denominator contains exactly $m_p'+1 = m_p$ factors of $(\lambda_p-y_l)$ up to scalar, where $m_p'$ factors come from the determinant.
    With this, we apply $\partial_{y_l}$ to the numerator and denominator $2m_p'$ times and then set $y_l = \lambda_p$.
    Computing this for the denominator is straightforward:
    \[
        \text{denominator} = \left.\partial_{y_l}^{m_p} \det(M(-y')) \cdot (\lambda_p-y_l)\right|_{y_l = \lambda_p} = (-1)^{m_p} m_p! \det(M(-y)).
    \]
    The computation is easy here for the same reason as in the proof of Proposition \ref{prop:evaluation_formula}: using the product rule for all the derivatives only leaves a single term which does not evaluate to zero once we set $y_l = \lambda_p$.
    A similar thing happens for the numerator, which yields:
    \[
    \begin{split}
        \text{numerator} &= \left.\partial_{y_l}^{m_p} \left[(\lambda_p-y_l) \cdot \partial_{y_l} \det(M(-y')) + \det(M(-y')) \cdot (m_p-1)\right]\right|_{y_l = \lambda_p} \\
            &= -\left.\partial_{y_l}^{m_p} \det(M(-y'))\right|_{y_l=\lambda_p} \\
            &= (-1)^{m_p+1} m_p! \det(M_p(-y)).
    \end{split}
    \]
    With this, we have the following expression:
    \[
        \left.\partial_{y_l} \mathcal{E}_1(Y)\right|_{y_l=\lambda_p} = -\sum_{i \neq p} \frac{m_i}{\lambda_p - \lambda_i} - \frac{\det(M_p(-y))}{\det(M(-y))}.
    \]
    This completes the proof.
\end{proof}

\subsection{Algorithm for $k > 1$}

We now discuss how to generalize the formulas and arguments from the rank-1 case to the rank-$k$ case.
The computations done here are very similar to those given above, and so we will be a bit less explicit in what follows.
And although the matrices involved are a bit more complicated (see Definitions \ref{def:eval_matrix_rank_k} and \ref{def:grad_matrix_rank_k}), we still achieve the required bit complexity bounds.
Specifically, each of the entries of these matrices require a polynomial number of computations of $m!$, $y_i^m$, and $e^{-y_i}$ for $m \leq n$, and so the determinants can still be computed as efficiently as is necessary for Theorem \ref{thm:counting_in_section}.

We now state the explicit integral formulas for $\mathcal{E}_k$ and $\nabla \mathcal{E}_k$ which generalize those of the $k=1$ case of the previous section.
Our main tool to prove these formulas is the Harish-Chandra-Itzykson-Zuber formula (\cite{HarishChandra1957}, \cite{IZ1980}), given as follows.

\begin{theorem}[HCIZ formula]\label{thm:HCIZ}
    For $n \times n$ Hermitian matrices $Y$ and $B$ with distinct eigenvalues $y_1 < \cdots < y_n$ and $\beta_1 < \cdots < \beta_n$ respectively, we have the following where $\mu$ is the Haar measure on the unitary group $U(n)$:
    \[
        \int_{U(n)} e^{\langle Y, UBU^* \rangle} d\mu(U) = \left(\prod_{p=1}^{n-1} p!\right) \frac{\det([e^{y_i\beta_j}]_{1 \leq i,j \leq n})}{\prod_{i < j} (y_j - y_i)(\beta_j - \beta_i)}.
    \]
\end{theorem}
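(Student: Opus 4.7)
The plan is to verify the HCIZ formula by an Itzykson--Zuber style PDE argument, with the Duistermaat--Heckman localization theorem as a conceptually cleaner backup route. First, by left- and right-invariance of the Haar measure on $U(n)$, the integral $I(Y,B) := \int_{U(n)} e^{\langle Y, UBU^*\rangle}\,dU$ depends only on the spectra $y$ of $Y$ and $\beta$ of $B$, so we may assume $Y=\diag(y)$ and $B=\diag(\beta)$. For each fixed $U$, the flat Laplacian $\Delta_Y$ on $n\times n$ Hermitian matrices acts on $Y\mapsto e^{\langle Y,UBU^*\rangle}$ as multiplication by $\|UBU^*\|^2 = \|B\|^2 = \sum_j \beta_j^2$, so integrating in $U$ gives $\Delta_Y I = \|\beta\|^2 I$. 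The standard radial decomposition of $\Delta_Y$ on $U(n)$-invariant functions, combined with the harmonicity of the Vandermonde $V(y)=\prod_{i<j}(y_j-y_i)$, yields the key conjugation identity
$$V(y)\cdot(\Delta_Y g)(y) \;=\; \sum_{i=1}^n \partial_{y_i}^2\bigl(V(y)\,g(y)\bigr)$$
for symmetric smooth $g$. Hence the renormalized integral $\widetilde{I}(y,\beta) := V(y)V(\beta)\,I(y,\beta)$ satisfies $\sum_i \partial_{y_i}^2 \widetilde{I} = \|\beta\|^2 \widetilde{I}$, and by the symmetric argument in $\beta$ also $\sum_j \partial_{\beta_j}^2 \widetilde{I} = \|y\|^2 \widetilde{I}$, while being antisymmetric under permutations of the $y_i$'s and of the $\beta_j$'s separately.

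The candidate right-hand side $D(y,\beta) := \det[e^{y_i\beta_j}]$ is verified term-by-term from its Leibniz expansion $\sum_\sigma \sgn(\sigma)\, e^{\sum_i y_i\beta_{\sigma(i)}}$ to satisfy the same pair of PDEs and the same antisymmetries, so $\widetilde{I}$ and $D$ lie in the same joint eigenspace. A uniqueness argument using all three conditions together forces $\widetilde{I}(y,\beta) = c(n)\cdot D(y,\beta)$ for a scalar depending only on $n$. The scalar is then pinned down by matching asymptotics as $y\to 0$: since $I(0,\beta)=1$ one has $\widetilde{I}(y,\beta) \to V(y)V(\beta)$, while expanding $e^{y_i\beta_j}=\sum_{k\ge 0}(y_i\beta_j)^k/k!$ and applying the Cauchy--Binet formula to the resulting sum of rank-one matrices shows that the minimum-degree contribution to $D(y,\beta)$ comes from the power tuple $(0,1,\ldots,n-1)$ and evaluates to $V(y)V(\beta)/\prod_{p=0}^{n-1} p!$. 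Hence $c(n) = \prod_{p=1}^{n-1} p!$, reproducing the claimed prefactor.

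The main obstacle is making the uniqueness step rigorous: each of the two PDEs and each antisymmetry condition individually admits an infinite-dimensional solution family, and one must show the joint intersection (in a suitable class of entire functions with at most exponential growth) is one-dimensional. The classical route is to enlarge the family of differential operators to the full commuting ring of radial $U(n)$-invariant differential operators on Hermitian matrices, for which $\widetilde{I}$ is a simultaneous eigenfunction pinned down up to scalar. An alternative proof that sidesteps this issue is Duistermaat--Heckman equivariant localization on the coadjoint orbit $\mathcal{O}_B = \{UBU^* : U\in U(n)\}$, viewed as a symplectic manifold with moment map the inclusion into the Hermitian matrices. For $Y$ with distinct eigenvalues, the maximal torus generated by $Y$ has exactly the $n!$ permuted diagonal matrices $\diag(\beta_{\sigma(1)},\ldots,\beta_{\sigma(n)})$ as isolated fixed points, and the localization formula writes the integral as a sum of contributions $e^{\sum_i y_i\beta_{\sigma(i)}}/\prod_{i<j}(y_i-y_j)(\beta_{\sigma(i)}-\beta_{\sigma(j)})$, which is the Leibniz expansion of $D/(V(y)V(\beta))$, with the Liouville normalization of the orbit supplying the constant $\prod_{p=1}^{n-1} p!$.
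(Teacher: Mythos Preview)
The paper does not prove Theorem~\ref{thm:HCIZ}; it is quoted as a classical result with citations to \cite{HarishChandra1957} and \cite{IZ1980} and then used as a black box to derive the evaluation and gradient formulas for $\mathcal{E}_k$ (Corollaries~\ref{cor:dual_integral_k} and~\ref{cor:gradient_formula_k}). So there is no ``paper's own proof'' to compare against.

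That said, your sketch follows the two standard routes in the literature. The Itzykson--Zuber heat-equation argument is faithfully outlined: reduction to diagonal $Y,B$, the eigenfunction equation $\Delta_Y I = \|\beta\|^2 I$, the Harish-Chandra radial conjugation $V(y)\Delta_Y^{\mathrm{rad}} = (\sum_i \partial_{y_i}^2) V(y)$, and the constant fixed by the $y\to 0$ Cauchy--Binet expansion. You correctly flag the uniqueness step as the soft spot; the usual fix is exactly what you mention---upgrade from the single Laplacian to the full commuting family of $U(n)$-invariant differential operators (equivalently, the power-sum Casimirs), whose joint eigenspaces among entire antisymmetric functions of tempered growth are one-dimensional. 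The Duistermaat--Heckman route you sketch is the conceptually cleanest complete proof and avoids that issue entirely; the fixed points, weights, and Liouville normalization are as you describe. Either argument, once the uniqueness or localization machinery is filled in, yields the formula; neither is what the paper does, since the paper simply imports the result.
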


\noindent
Using the L'Hoptial's rule argument used in the rank-1 case, we can limit $B$ to the rank-$k$ PSD projection $\diag(1,\ldots,1,0,\ldots,0)$ to obtain a formula for $\mathcal{E}_k(Y)$ for $Y$ with distinct eigenvalues.
Using again the same sort of argument, we can then limit $Y$ to any real diagonal matrix (eigenvalues not necessarily distinct).
First, we need to define a parameterized matrix as in the rank-1 case.

\begin{definition}[Matrix for dual integral, $k>1$] \label{def:eval_matrix_rank_k}
    Given $y_1,\ldots,y_n \in \R$, let $\lambda_1 < \cdots < \lambda_k$ denote the distinct values of $y_i$ with multiplicities $m_1,\ldots,m_k$. Let the polynomial $q_{i,j}(t)$ be defined as follows: 
    \[
        q_{i,j}(t) = e^{-t} \frac{\partial_t^j}{j!}(t^i e^t) = \sum_{l=0}^{\min(i,j)} \binom{i}{l} \frac{t^{i-l}}{(j-l)!}.
    \]
    We define an $n \times n$ matrix $M^{(k)}(y)$ as follows:
    \[
        M^{(k)}(y) := \left[
        \begin{matrix}
            1 & 0 & \cdots & 0 & 1 & \cdots \\
            \lambda_1 & 1 & \cdots & 0 & \lambda_2 & \cdots \\
            \lambda_1^2 & \binom{2}{1} \lambda_1 & \cdots & 0 & \lambda_2^2 & \cdots \\
            \vdots & \vdots & \ddots & \vdots & \vdots & \cdots \\
            \lambda_1^{n-k-1} & \binom{n-k-1}{1} \lambda_1^{n-k-2} & \cdots & \binom{n-k-1}{m_1-1}\lambda_1^{n-k-m_1} & \lambda_2^{n-k-1} & \cdots \\
            e^{\lambda_1} q_{0,0}(\lambda_1) & e^{\lambda_1} q_{0,1}(\lambda_1) & \cdots & e^{\lambda_1} q_{0,m_1-1}(\lambda_1) & e^{\lambda_2} q_{0,0}(\lambda_2) & \cdots \\
            e^{\lambda_1} q_{1,0}(\lambda_1) & e^{\lambda_1} q_{1,1}(\lambda_1) & \cdots & e^{\lambda_1} q_{1,m_1-1}(\lambda_1) & e^{\lambda_2} q_{1,0}(\lambda_2) & \cdots \\
            \vdots & \vdots & \ddots & \vdots & \vdots & \cdots \\
            e^{\lambda_1} q_{k-1,0}(\lambda_1) & e^{\lambda_1} q_{k-1,1}(\lambda_1) & \cdots & e^{\lambda_1} q_{k-1,m_1-1}(\lambda_1) & e^{\lambda_2} q_{k-1,0}(\lambda_2) & \cdots \\
        \end{matrix}
        \right].
    \]
    Also, any term of the form $\lambda_i^m$ for $m < 0$ in the above matrix should be replaced by 0.
\end{definition}

\noindent
The matrix defined above and the arguments of the previous section then allow us to write down an explicit formula for $\mathcal{E}_k(Y)$.

\begin{corollary} [Evaluating the dual integral, $k>1$] \label{cor:dual_integral_k}
    Let $Y$ be an $n \times n$ Hermitian matrix $Y$ with eigenvalues $y_1, \ldots, y_n$ and distinct eigenvalues $\lambda_1 > \cdots > \lambda_k$ with multiplicities $m_1, \ldots, m_k$. We have the following:
    \[
        \int_{\mathcal{P}_k} e^{-\langle Y, X \rangle} d\mu_k(X) = \frac{\prod_{p=1}^{n-1} p!}{\prod_{p=1}^{n-k-1} p! \cdot \prod_{p=1}^{k-1} p!} \cdot \frac{\det(M^{(k)}(-y))}{\prod_{i < j} (\lambda_i - \lambda_j)^{m_im_j}}.
    \]
    This leads to a formula for $\mathcal{E}_k(Y)$:
    \[
        \mathcal{E}_k(Y) = \log\left(\frac{\prod_{p=1}^{n-1} p!}{\prod_{p=1}^{n-k-1} p! \cdot \prod_{p=1}^{k-1} p!}\right) + \log\det(M^{(k)}(-y)) - \sum_{i < j} m_im_j \log(\lambda_i - \lambda_j).
    \]
\end{corollary}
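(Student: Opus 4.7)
The plan is to reduce to the HCIZ formula (Theorem \ref{thm:HCIZ}) by realizing the integral over $\mathcal{P}_k$ as an integral over $U(n)$, and then to take two successive limits using L'Hopital's rule, mirroring the technique of Proposition \ref{prop:evaluation_formula}. First, I would let $B_0 = \diag(1,\ldots,1,0,\ldots,0) \in \mathcal{P}_k$ and use unitary invariance of $\mu_k$ together with the parameterization $\mathcal{P}_k = \{UB_0U^* : U \in U(n)\}$ to rewrite
\[
    \int_{\mathcal{P}_k} e^{-\langle Y, X \rangle}\, d\mu_k(X) \;=\; \int_{U(n)} e^{-\langle Y,\,UB_0U^*\rangle}\, dU.
\]
Since HCIZ requires both matrices to have distinct spectra, I would perturb $B_0$ to $B_\eps$ with eigenvalues $\beta_1(\eps),\ldots,\beta_n(\eps)$ (the first $n-k$ clustered near $0$, the last $k$ clustered near $1$), and simultaneously perturb $Y$ if its eigenvalues coincide. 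Applying HCIZ to $(-Y, B_\eps)$ yields, after simplification,
\[
    \int_{U(n)} e^{-\langle Y,\,UB_\eps U^*\rangle}\, dU \;=\; \left(\prod_{p=1}^{n-1} p!\right) \frac{\det\!\bigl([e^{-y_i\beta_j(\eps)}]_{i,j}\bigr)}{\prod_{i<j}(y_j-y_i)(\beta_j(\eps)-\beta_i(\eps))}.
\]

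Next, I would take $\eps \to 0$ in the $\beta$'s by applying L'Hopital's rule exactly as in Proposition \ref{prop:evaluation_formula}. The crucial structural observation is that each column of the numerator matrix depends on a single $\beta_j$, so differentiating the determinant with respect to $\beta_j$ affects only one column and preserves the determinant form. For the $n-k$ columns sent to $\beta=0$, I would apply the operators $\tfrac{1}{(c-1)!}\partial_{\beta_j}^{c-1}$ to collapse them; this produces the $n-k$ rows $(-y_i)^{c-1}/(c-1)!$ of Vandermonde type (after transposition). For the $k$ columns sent to $\beta=1$, applying the same operator produces entries $\frac{1}{(c-1)!}\partial_\beta^{c-1}[e^{-y_i\beta}]|_{\beta=1} = (-y_i)^{c-1}e^{-y_i}/(c-1)!$, which after appropriate row combinations give the polynomial-times-exponential entries $e^{-\lambda_i}q_{j,l}(-\lambda_i)$ appearing in the last $k$ rows of $M^{(k)}$. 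The denominator collapses to $\prod_{i\le n-k < j}(\beta_j-\beta_i)\to 1$ together with collapsing factors that contribute $\prod_{p=1}^{n-k-1}p! \cdot \prod_{p=1}^{k-1}p!$; dividing this into the $\prod_{p=1}^{n-1}p!$ prefactor produces exactly the rational constant in the statement.

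If $Y$ still has repeated eigenvalues, I would now run the identical L'Hopital argument a second time, but in the $y_i$ variables, grouping each block of $m_i$ equal values and applying $\tfrac{1}{(c-1)!}\partial_{y_i}^{c-1}$ for $c=1,\ldots,m_i$. As before, these derivatives act on single rows of the (transposed) numerator matrix, so the result is again a single determinant, and the surviving portion of $\prod_{i<j}(y_j-y_i)$ becomes $\prod_{i<j}(\lambda_i-\lambda_j)^{m_im_j}$ as desired. This second limit introduces exactly the higher-derivative column structure of $M^{(k)}(-y)$; in particular the polynomials $q_{i,j}(t) = e^{-t}\partial_t^j(t^i e^t)/j!$ arise from the chain-rule expansion of repeated $y$-derivatives applied to $(-y)^i e^{-y}$.

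The main obstacle, and the part requiring the most care, is the bookkeeping of the two-stage L'Hopital procedure: verifying that the surviving determinant is literally $\det(M^{(k)}(-y))$ (not merely proportional to it) and that the ratio of factorial constants from the collapsing denominator blocks reduces to $\prod p!/(\prod_{p=1}^{n-k-1}p!\cdot\prod_{p=1}^{k-1}p!)$ with the correct signs. In particular, the identification of the last $k$ rows with the polynomials $q_{i,j}$ requires the Leibniz expansion of $\partial^j(t^ie^t)$ combined with the factorial normalization introduced by the confluent L'Hopital step; once this is verified for all $i,j$, the two formulas displayed in the corollary follow immediately by taking logarithms.
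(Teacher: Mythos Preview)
Your proposal is correct and follows essentially the same approach as the paper: reduce to an integral over $U(n)$, apply HCIZ with both $Y$ and $B$ perturbed to have distinct spectra, then perform two successive L'Hopital limits (first in the $\beta$'s to collapse $B_\eps$ to the rank-$k$ projection, then in the $y$'s to handle repeated eigenvalues of $Y$), exploiting at each stage that only one row or column of the determinant depends on each variable. The paper in fact does not spell out the proof beyond pointing to this two-stage L'Hopital strategy, so your write-up is already more detailed than what appears there; the only remaining work, as you note, is the careful sign and factorial bookkeeping that pins down the constant and matches the limiting matrix entries to the $q_{i,j}$ of Definition~\ref{def:eval_matrix_rank_k}.
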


\noindent
Notice that this reduces to $(5)$ of Proposition \ref{prop:evaluation_formula} whenever $k=1$.
As in the $k=1$ case, we use the Schur-Horn theorem and unitary invariance to restrict the inputs of $\mathcal{E}_k(Y)$ to real diagonal matrices (see Section \ref{sec:schur_horn}).
Therefore, we only need to compute the gradient on the diagonal entries of $Y$.
The arguments are essentially the same as those of the $k=1$ case, again via L'Hoptial's rule, and so we state the gradient formula for $\mathcal{E}_k$  as a corollary without proof.
First though, we need to define another parameterized matrix for the gradient formula, as in the $k=1$ case.

\begin{definition}[Matrix for gradient formula, $k>1$] \label{def:grad_matrix_rank_k}
    Given $y_1,\ldots,y_n \in \R$, let $\lambda_1 < \cdots < \lambda_k$ denote the distinct values of $y_i$ with multiplicities $m_1,\ldots,m_k$. Let $q_{i,j}(t)$ be defined as in Definition \ref{def:eval_matrix_rank_k}. Given $p \in [k]$, we define an $n \times n$ matrix $M^{(k)}_p(y)$ as the matrix which differs from $M^{(k)}(y)$ in one column, given as follows:
    \[
    M^{(k)}_p(y) := \left[
        \begin{matrix}
            \cdots & 1 & 0 & \cdots & 0 & 0 & \cdots \\
            \cdots & \lambda_p & 1 & \cdots & 0 & 0 & \cdots \\
            \cdots & \lambda_p^2 & \binom{2}{1}\lambda_p & \cdots & 0 & 0 & \cdots \\
            \cdots & \vdots & \vdots & \ddots & \vdots & \vdots & \cdots \\
            \cdots & \lambda_p^{n-k-1} & \binom{n-k-1}{1}\lambda_p^{n-k-2} & \cdots & \binom{n-k-1}{m_p-2}\lambda_p^{n-k-m_p+1} & \binom{n-k-1}{m_p}\lambda_p^{n-k-m_p-1} & \cdots \\
            \cdots & e^{\lambda_p} q_{0,0}(\lambda_p) & e^{\lambda_p} q_{0,1}(\lambda_p) & \cdots & e^{\lambda_p} q_{0,m_p-2}(\lambda_p) & e^{\lambda_p} q_{0,m_p}(\lambda_p) & \cdots \\
            \cdots & e^{\lambda_p} q_{1,0}(\lambda_p) & e^{\lambda_p} q_{1,1}(\lambda_p) & \cdots & e^{\lambda_p} q_{1,m_p-2}(\lambda_1) & e^{\lambda_p} q_{1,m_p}(\lambda_p) & \cdots \\
            \cdots & \vdots & \vdots & \ddots & \vdots & \vdots & \cdots \\
            \cdots & e^{\lambda_p} q_{k-1,0}(\lambda_p) & e^{\lambda_p} q_{k-1,1}(\lambda_p) & \cdots & e^{\lambda_p} q_{k-1,m_p-2}(\lambda_p) & e^{\lambda_p} q_{k-1,m_p}(\lambda_p) & \cdots \\
        \end{matrix}
    \right].
    \]
    That is, $\frac{\partial_{\lambda_p}}{m_p}$ is applied to the right-most column of $M^{(k)}(y)$ that depends on $\lambda_p$. As in Definition \ref{def:eval_matrix_rank_k}, any term of the form $\lambda_i^m$ for $m < 0$ in the above matrix should be replaced by 0.
\end{definition}

\begin{corollary}[Gradient formula, $k>1$] \label{cor:gradient_formula_k}
    Assume $y_1,\ldots,y_n$ are the diagonal values of diagonal $Y$, with distinct values $\lambda_1 > \cdots > \lambda_k$ and multiplicities $m_1, \ldots, m_k$. Letting $p$ be such that $y_l = \lambda_p$, we have the following expression:
    \[
        (\nabla \mathcal{E}_k(Y))_l = -\sum_{i \neq p} \frac{m_i}{\lambda_p - \lambda_i} - \frac{\det(M^{(k)}_p(-y))}{\det(M^{(k)}(-y))}.
    \]
\end{corollary}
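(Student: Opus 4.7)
The plan is to imitate the proof of Proposition \ref{prop:gradient_formula} almost verbatim, using the closed form
\[
\mathcal{E}_k(Y) \;=\; \log C_{n,k} \;+\; \log\det\!\bigl(M^{(k)}(-y)\bigr) \;-\; \sum_{i<j} m_i m_j \log(\lambda_i - \lambda_j)
\]
from Corollary \ref{cor:dual_integral_k}, where $C_{n,k}$ is the explicit combinatorial prefactor appearing there. As in the rank-one case, I first note the bound
\[
\bigl|\partial_{y_l} \mathcal{E}_k(Y)\bigr| \;\leq\; e^{\max_i y_i}\!\int_{\mathcal{P}_k} e^{-\langle Y,X\rangle}\,d\mu_k(X),
\]
obtained by differentiating under the integral and bounding $X_{ll} \in [0,1]$, which shows $\partial_{y_l}\mathcal{E}_k$ is continuous on diagonal $Y$. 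Hence I may compute the derivative assuming $y_l$ is a simple eigenvalue (separated from the cluster $\lambda_p$ of multiplicity $m_p$) and then pass to the limit $y_l \to \lambda_p$.

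Under this perturbation, denote the new parameters $y'$, with $\lambda_p$ occurring with multiplicity $m_p - 1$ and $y_l$ occurring once. The closed form above now applies without degeneracy in $y_l$, and splitting the $\log$-denominator gives
\[
\partial_{y_l}\mathcal{E}_k(Y) \;=\; \partial_{y_l}\!\log\!\left(\frac{\det(M^{(k)}(-y'))}{(\lambda_p - y_l)^{m_p-1}}\right) \;-\; \sum_{i \neq p} m_i\,\partial_{y_l}\log|\lambda_i - y_l|.
\]
The second sum limits to $-\sum_{i \neq p} m_i/(\lambda_p - \lambda_i)$ directly, producing the first summand of the claimed formula. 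For the first term I rewrite it as $\mathrm{(numerator)}/\mathrm{(denominator)}$ with the common factor $(\lambda_p - y_l)$ pulled out exactly as in the proof of Proposition \ref{prop:gradient_formula}, giving a $0/0$ expression when $y_l \to \lambda_p$.

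The main obstacle, and the step that requires the most care, is evaluating this $0/0$ limit by applying $\partial_{y_l}^{m_p}$ to numerator and denominator and setting $y_l = \lambda_p$. The crucial structural fact is that the $k > 1$ matrix $M^{(k)}$ from Definition \ref{def:eval_matrix_rank_k} has exactly the same property exploited in the rank-one proof: the columns associated to the $\lambda_p$-cluster plus the singleton column at $y_l$ are the only ones depending on $y_l$, and within each the dependence enters through the polynomial-exponential entries $\lambda_i^{n-k-1-j}$ and $e^{\lambda_i} q_{r,j}(\lambda_i)$. Consequently iterated $\partial_{y_l}$ acts column-wise and the derivative of the determinant remains a determinant. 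Applying the product rule across $m_p$ derivatives kills every term except the one in which all derivatives land on the singleton column, and merging that column back into the $\lambda_p$-cluster fills the slot with index $m_p$ (rather than the used-up index $m_p - 1$), which is exactly the recipe defining $M^{(k)}_p(-y)$ in Definition \ref{def:grad_matrix_rank_k}; the factor $\partial_{\lambda_p}^{m_p}/m_p!$ on that column, together with the identities $e^{-t}\partial_t^j(t^i e^t)/j! = q_{i,j}(t)$ and $\partial_\lambda^j \lambda^i/j! = \binom{i}{j}\lambda^{i-j}$, reproduces the entries $\binom{n-k-1}{m_p}\lambda_p^{n-k-m_p-1}$ and $e^{\lambda_p} q_{r,m_p}(\lambda_p)$ listed there. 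The denominator computation is the same but with the singleton column re-merged without the extra $\partial_{\lambda_p}$, yielding $(-1)^{m_p} m_p! \det(M^{(k)}(-y))$, and an identical sign-and-factorial bookkeeping in the numerator gives $(-1)^{m_p+1} m_p! \det(M^{(k)}_p(-y))$. The factorials and signs cancel, the singular $-(m_p-1)/(\lambda_p - y_l)$ contribution from the denominator $(\lambda_p - y_l)^{m_p-1}$ is absorbed in the L'Hopital step, and the combined limit is the announced $-\det(M^{(k)}_p(-y))/\det(M^{(k)}(-y))$, completing the proof.
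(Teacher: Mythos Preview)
Your proposal is correct and takes essentially the same approach as the paper: the paper explicitly declines to give a proof of this corollary, stating only that ``the arguments are essentially the same as those of the $k=1$ case, again via L'Hopital's rule,'' and your write-up carries out exactly that adaptation of the proof of Proposition~\ref{prop:gradient_formula} using the closed form from Corollary~\ref{cor:dual_integral_k} and the column structure of $M^{(k)}$ from Definition~\ref{def:eval_matrix_rank_k}.
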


\subsection{Sampling from $\mathcal{P}_1$}\label{sec:sampling}

Given some real diagonal matrix $Y$ as in the previous section, we want to be able to sample from the measure on $\mathcal{P}_1$ given by $e^{-\langle Y, X \rangle}d\mu_1(X)$.
It is not immediately obvious how to do this on $\mathcal{P}_1$ itself, so we instead transfer the measure to a simpler domain.

Specifically, we use the Proposition \ref{prop:evaluation_formula} to transfer the sampling problem to the simplex.
Once on the simplex, we can apply standard techniques via the coordinate-wise cumulative distribution function (CDF).
That said, we now state the sampling process for $\mathcal{P}_1$ and then use the rest of the section to fill in the details.

\begin{proposition}[Rank-one Sampling] \label{prop:rank_one_sampling}
    Let $Y = \diag(y)$ be a real diagonal $n \times n$ matrix. The following process produces samples from the measure $e^{-\langle Y, X \rangle} d\mu_1(X)$ on $\mathcal{P}_1$.
    \begin{enumerate}
        \item Sample $v$ from the measure $e^{-\langle y, v \rangle} d\mu_{\Delta_1}(v)$ on the simplex $\Delta_1$ by iteratively sampling $v_i$ conditioned on $v_1,\ldots,v_{i-1}$.
        \item Sample $z_1,\ldots,z_n$ independently uniformly from the complex unit circle.
        \item Construct $X := (z\sqrt{v})(z\sqrt{v})^* \in \mathcal{P}_1$ where $z\sqrt{v}$ is the column vector $(z_1\sqrt{v_1}, \ldots, z_n\sqrt{v_n})$.
    \end{enumerate}
    Note that step $(1)$ is nontrivial, but we discuss how to sample coordinate-wise from the simplex below.
\end{proposition}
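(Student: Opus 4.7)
The plan is to reduce the sampling problem on the nonconvex manifold $\mathcal{P}_1$ to sampling on the much simpler simplex $\Delta_1$ via the diagonal map $\phi: X \mapsto \diag(X)$, and then to sample uniformly from each fiber $\phi^{-1}(v)$ using the action of the diagonal unitary group. The key structural observation is that since $Y$ is diagonal, we have $\langle Y, X \rangle = \sum_i y_i X_{ii} = \langle y, \diag(X) \rangle$, so the density $e^{-\langle Y, X \rangle}$ depends only on $\phi(X)$. Thus the target measure factors along the fibration $\phi$, which makes a two-stage sampling scheme natural.

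First, I would invoke Proposition \ref{prop:evaluation_formula} (equality of $(2)$ and $(3)$) to conclude that the pushforward of $\mu_1$ through $\phi$ is exactly the uniform measure $\mu_{\Delta_1}$ on the simplex. Combined with the disintegration theorem (Proposition \ref{prop:disintegration}), this yields a decomposition $d\mu_1(X) = d\mu_{\phi^{-1}(v)}(X)\, d\mu_{\Delta_1}(v)$ for appropriate fiber measures $\mu_{\phi^{-1}(v)}$. Because $\mu_1$ is invariant under conjugation by any unitary, in particular by any diagonal unitary $D \in U(1)^n$, and because $D$ preserves every fiber $\phi^{-1}(v)$ (since $\diag(DXD^*) = \diag(X)$), the fiber measure must be the unique $U(1)^n$-invariant probability measure on $\phi^{-1}(v)$.

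Next, I would identify the fiber $\phi^{-1}(v)$ explicitly as the $U(1)^n$-orbit of the fixed point $(\sqrt{v})(\sqrt{v})^*$, where $\sqrt{v} = (\sqrt{v_1}, \ldots, \sqrt{v_n})$. Indeed, any rank-one projection $ww^* \in \mathcal{P}_1$ with $|w_i|^2 = v_i$ is of the form $w_i = z_i \sqrt{v_i}$ for some $z_i$ on the unit circle, and conjugation by $\diag(z)$ maps $(\sqrt{v})(\sqrt{v})^*$ to $(z\sqrt{v})(z\sqrt{v})^*$. Hence sampling $z_1, \ldots, z_n$ independently and uniformly from the unit circle and forming $(z\sqrt{v})(z\sqrt{v})^*$ produces a $U(1)^n$-invariant, hence uniform, sample from $\phi^{-1}(v)$. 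This justifies steps $(2)$ and $(3)$ of the process.

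Combining the two stages: since $e^{-\langle Y, X \rangle}$ is constant on fibers and equal to $e^{-\langle y, v \rangle}$ when $\phi(X) = v$, the measure $e^{-\langle Y, X \rangle} d\mu_1(X)$ factors as $e^{-\langle y, v \rangle} d\mu_{\Delta_1}(v)$ on the base together with the uniform fiber measure. This is exactly what the three-step process samples, after step $(1)$ produces $v \sim e^{-\langle y, v \rangle} d\mu_{\Delta_1}(v)$. The main obstacle of the proof is then to justify step $(1)$: one must sample from the tilted simplex measure by inverting coordinate-wise CDFs, conditioned on the previously drawn coordinates. The conditional CDF of $v_i$ given $v_1, \ldots, v_{i-1}$ reduces (after a change of variables analogous to the one in Lemma \ref{lem:laplace_integral}) to an iterated exponential convolution on a lower-dimensional simplex, which admits the same Laplace-transform-based closed form as in Proposition \ref{prop:evaluation_formula}, with L'Hôpital-style limits handling repeated $y_i$'s. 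This yields an efficient, exact inversion-sampling procedure and completes the verification; the details of these CDF formulas will be deferred to Corollary \ref{cor:cdf_formula_rank_one}.
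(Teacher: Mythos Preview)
Your proposal is correct and follows essentially the same route as the paper's proof: factor the target measure along the diagonal map $\phi:X\mapsto\diag(X)$ using disintegration (Proposition \ref{prop:disintegration}), identify each fiber $\phi^{-1}(v)$ with the $U(1)^n$-orbit of $(\sqrt{v})(\sqrt{v})^*$, observe that the density is constant on fibers since $Y$ is diagonal so the fiber measure is the uniform one, and defer the simplex coordinate-wise CDF computation to Corollary \ref{cor:cdf_formula_rank_one}. If anything, your write-up is slightly more explicit than the paper's own sketch.
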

\begin{proof}
    We give a proof sketch here, leaving the details to the remainder of this section. First, the reason we are able to reduce to sampling on the simplex is due to Proposition \ref{prop:disintegration}. Specifically, let $\Phi: \mathcal{P}_1 \to \Delta_1$ be given by $\Phi: X \mapsto \diag(X)$. Then for any $v \in \Delta_1$, we have the following where $\mathbb{T}$ is the complex unit circle:
    \[
        \Phi^{-1}(v) = \{X \in \mathcal{P}_1 ~:~ v = \diag(X)\} = \{X \in \mathcal{P}_1 ~:~ X = (z\sqrt{v})(z\sqrt{v})^* \text{ for } z \in \mathbb{T}^n\}.
    \]
    The fact that $e^{-\langle Y, X \rangle} d\mu_1(X)$ is invariant under the action of conjugating $X$ by $\diag(z)$ for $z \in \mathbb{T}^n$ then implies that we can uniformly sample $z$ from $\mathbb{T}^n$ via Proposition \ref{prop:disintegration}.

    Second, sampling from the measure $e^{-\langle Y, X \rangle} d\mu_1(X)$ is nontrivial, but doable by sampling each coordinate conditioned on the previous coordinates sampled. To do this we need to be able to efficiently compute the cumulative density function (CDF) for the conditioned measures, and we discuss how to do this below. Once we have this, we can sample each conditioned coordinate using standard techniques; see \cite{owen2013}, Section 4.5.
\end{proof}

\noindent
For the case of $\mu_k$ for $k > 1$, we leave the question of sampling from the associated maximum entropy distributions as an open problem.

\paragraph{Transferring to the simplex.}

To transfer sampling from $\mathcal{P}_1$ to sampling from the simplex, we need a way of applying pushforward to sampling.
The way to do this is via \emph{disintegration} (see \cite{chang1997}), which we discuss in the following result.

\begin{proposition}[Pushforward sampling] \label{prop:disintegration}
    Let $X,Y$ be separable complete metric spaces, and let $\mu,\nu$ be probability measures on $X,Y$ respectively. Let $\Phi: X \to Y$ be a map such that $\nu$ is the pushforward measure of $\mu$. Further, for any $y \in Y$, let $\mu_y$ denote the measure on the fiber $\Phi^{-1}(y)$ given by disintegration: i.e., such that $\int_X f(x) d\mu(x) = \int_Y \int_{\Phi^{-1}(y)} f(x) d\mu_y(x) d\nu(y)$ for all measurable $f$ (see \cite{chang1997}). Then the measure on $X$ generated by sampling $y$ from $(Y,\nu)$, followed by sampling $x$ from $(\Phi^{-1}(y), \mu_y)$, is equal to $\mu$.
\end{proposition}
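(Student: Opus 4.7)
The plan is to show that the two-stage sampling process produces a measure $\tilde{\mu}$ on $X$ which integrates bounded measurable functions $f: X \to \R$ in exactly the same way as $\mu$. Since measures on separable complete metric spaces are determined by their integrals against bounded measurable functions (equivalently by their values on Borel sets), this will show $\tilde{\mu} = \mu$ and finish the proof.

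First, I would make the sampling process rigorous: the composite sample $x$ is distributed according to the measure $\tilde{\mu}$ characterized, for any bounded measurable $f$, by the iterated-expectation identity
\[
    \int_X f(x)\,d\tilde{\mu}(x) \;=\; \mathbb{E}_{y \sim \nu}\Bigl[\mathbb{E}_{x \sim \mu_y}[f(x)]\Bigr] \;=\; \int_Y \left(\int_{\Phi^{-1}(y)} f(x)\,d\mu_y(x)\right) d\nu(y).
\]
For this right-hand side to make sense one needs the inner integral $y \mapsto \int_{\Phi^{-1}(y)} f(x)\,d\mu_y(x)$ to be a $\nu$-measurable function of $y$; this measurability is part of the standard statement of the disintegration theorem for probability kernels on separable complete metric spaces (see \cite{chang1997}), which is precisely why the hypothesis on $X,Y$ is imposed.

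Next, I would directly invoke the defining property of disintegration stated in the proposition: for every bounded measurable $f$,
\[
    \int_X f(x)\,d\mu(x) \;=\; \int_Y \int_{\Phi^{-1}(y)} f(x)\,d\mu_y(x)\,d\nu(y).
\]
Combining this with the previous display yields $\int_X f\,d\tilde{\mu} = \int_X f\,d\mu$ for every bounded measurable $f$. Taking $f = \mathbbm{1}_A$ for Borel sets $A \subseteq X$ gives $\tilde{\mu}(A) = \mu(A)$, so $\tilde{\mu} = \mu$ as measures on $X$.

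There is not really a hard obstacle here: the statement is essentially a tautological reformulation of the disintegration theorem as a sampling procedure. The only non-trivial ingredient is the existence of a measurable family of fiber measures $\{\mu_y\}_{y \in Y}$ satisfying the disintegration identity, and this is guaranteed by the hypothesis that $X$ and $Y$ are separable complete metric spaces (i.e., Polish spaces), for which the disintegration theorem applies. The remaining work — verifying measurability of the inner integral and justifying the exchange of expectations — is standard and follows from Fubini-type arguments applied to the Markov kernel $y \mapsto \mu_y$.
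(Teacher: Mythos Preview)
Your proposal is correct and follows essentially the same approach as the paper: define the measure $\tilde\mu$ produced by the two-stage sampling, express its integrals (or values on sets) as the iterated integral $\int_Y\int_{\Phi^{-1}(y)}f\,d\mu_y\,d\nu$, and then invoke the disintegration identity to conclude $\tilde\mu=\mu$. The paper reaches the same iterated integral via a conditional-probability decomposition $P_1(y\in\Phi(A))\cdot P_2(x\in A\mid y\in\Phi(A))$ applied to indicator functions, whereas you write down the iterated-expectation formula for a compound distribution directly; your route is a bit more streamlined but the substance is identical.
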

\begin{proof}
    Let $\gamma$ denote the measure on $X$ generated by the described two-step sampling process. For any measurable set $A$ we have the following, where $P_1$ and $P_2$ denote the probabilities according to the first and second steps of the process respectively:
    \[
    \begin{split}
        \gamma(A) &= P_1(y \in \Phi(A)) \cdot P_2(x \in \Phi^{-1}(y) \cap A ~|~ y \in \Phi(A)) \\
            &= \left(\int_{\Phi(A)} d\nu(y)\right) \cdot \frac{\int_{\Phi(A)} \int_{\Phi^{-1}(y) \cap A} d\mu_y(x) d\nu(y)}{\int_{\Phi(A)} \int_X d\mu_y(x) d\nu(y)} \\
            &= \int_{\Phi(A)} \int_{\Phi^{-1}(y) \cap A} d\mu_y(x) d\nu(y).
    \end{split}
    \]
    The second equality is just by definition of conditional probability. We then further have:
    \[
    \begin{split}
        \int_{\Phi(A)} \int_{\Phi^{-1}(y) \cap A} d\mu_y(x) d\nu(y) &= \int_{\Phi(A)} \int_{\Phi^{-1}(y)} 1_A(x) d\mu_y(x) d\nu(y) \\
            &= \int_Y \int_{\Phi^{-1}(y)} 1_A(x) d\mu_y(x) d\nu(y) \\
            &= \mu(A).
    \end{split}
    \]
    That is, $\gamma(A) = \mu(A)$.
\end{proof}

\paragraph{Computing the conditioned CDF.}

We now compute the conditioned CDF for each coordinate of the measure on the simplex in Corollary \ref{cor:cdf_formula_rank_one}, after a necessary lemma.
Note that the formula below in Corollary \ref{cor:cdf_formula_rank_one} is not given in full explicit detail.
However, the formula is still a constant times a determinant of a matrix, and expressions are given for the entries of that matrix.
They are in fact rational functions of polynomials in $\beta$, $y_i$, $e^{y_i\beta}$, and factorials at most $n$ (see below).
Therefore, the whole determinant is computable in time polynomial in $n$ and the number of bits needed to represent $e^{y_i\beta}$.

\begin{lemma}
    Fix $y_1,\ldots,y_n \in \R$, and let $\lambda_1 < \cdots < \lambda_k$ be the distinct values of $y_i$ with multiplicities $m_i$. For valid $\gamma > 0$ and $x_n := 1-\gamma-x_1-\cdots-x_{n-1}$, we have the following:
    \[
        \int_0^{1-\gamma} \int_0^{1-\gamma-x_1} \cdots \int_0^{1-\gamma-x_1-\cdots-x_{n-2}} e^{\langle y, x \rangle} dx_{n-1} \cdots dx_1 = \frac{\det(M(y,1-\gamma))}{\prod_{i < j} (\lambda_j - \lambda_i)^{m_im_j}}.
    \]
    Moreover, only one of the rows of the matrix $M(y,1-\gamma)$ depends on $\gamma$.
\end{lemma}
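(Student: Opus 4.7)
The plan is to imitate the proof of Lemma \ref{lem:laplace_integral} and the equality of $(4)$ with $(5)$ in Proposition \ref{prop:evaluation_formula}, replacing the endpoint $t=1$ by $t=1-\gamma$ throughout. First, I would repeat verbatim the iterated-convolution computation from Lemma \ref{lem:laplace_integral}: the integration region is still a standard simplex, only rescaled to side length $1-\gamma$, so the same change of variables $x_n = (1-\gamma) - x_1 - \cdots - x_{n-1}$ and the same iterated identification with inner convolutions $(e^{y_{n-1}t}\ast e^{y_n t})\bigl|_{t=1-\gamma-x_1-\cdots-x_{n-2}}$ yields
\[
\int_0^{1-\gamma}\!\!\cdots\int_0^{1-\gamma-x_1-\cdots-x_{n-2}} e^{\langle y,x\rangle}\,dx_{n-1}\cdots dx_1 \;=\; \bigl(e^{y_1 t}\ast\cdots\ast e^{y_n t}\bigr)\bigl|_{t=1-\gamma}.
\]

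Second, assuming momentarily that the $y_i$ are distinct, I would apply the Laplace transform exactly as in Lemma \ref{lem:laplace_integral}: the convolution equals $\mathcal{L}^{-1}\bigl[\prod_i (s-y_i)^{-1}\bigr](1-\gamma)$, which Lagrange interpolation evaluates to $\sum_i e^{(1-\gamma)y_i}/\!\prod_{j\ne i}(y_i-y_j)$. Putting this sum over the common denominator $\prod_{i<j}(y_j-y_i)$, the numerator is precisely the Laplace expansion along the last row of $M(y,1-\gamma)$, since in the $m_i=1$ case Definition \ref{def:eval_matrix} produces a Vandermonde in rows $0,\ldots,n-2$ and the row $\bigl(e^{(1-\gamma)y_1},\ldots,e^{(1-\gamma)y_n}\bigr)$ at the bottom. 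This gives the claimed formula in the distinct-$y_i$ case.

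Third, for the degenerate case in which some $\lambda_i$ has multiplicity $m_i>1$, the left-hand integral is jointly continuous in $(y_1,\ldots,y_n)$, so I would take distinct values $y'\to y$ and apply the same differential operators
\[
D_i \;:=\; \prod_{j=1}^{m_i}\partial^{j-1}_{y'_{m_1+\cdots+m_{i-1}+j}}
\]
to both numerator and denominator of $\det(M(y',1-\gamma))/\prod_{i<j}(y'_j-y'_i)$, exactly as in the L'Hopital's rule step of Proposition \ref{prop:evaluation_formula}. The crucial observation, which also yields the ``moreover'' clause, is that each $y'_i$ sits in exactly one column of $M(y',1-\gamma)$, and that $\gamma$ enters only through the exponential row: the columns of Definition \ref{def:eval_matrix} are, up to divided factorials, successive $\lambda$-derivatives of the template vector $(1,\lambda,\lambda^2,\ldots,\lambda^{n-2},e^{(1-\gamma)\lambda})$, whose only $\gamma$-dependent coordinate is the last. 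Consequently the operators $D_i$ act within single rows via multilinearity of the determinant, preserving the determinant structure; the upper-row entries $\binom{p}{r}\lambda_i^{p-r}$ arise from $\partial_\lambda^r\lambda^p/r!$, and the bottom-row entries arise from $\partial_\lambda^r e^{(1-\gamma)\lambda}/r! = (1-\gamma)^{r}e^{(1-\gamma)\lambda}/r!$, matching Definition \ref{def:eval_matrix} with $\gamma$ replaced by $1-\gamma$. The denominator limits to $\prod_{i<j}(\lambda_j-\lambda_i)^{m_im_j}$ by the same product-rule count as in Proposition \ref{prop:evaluation_formula}.

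The main obstacle is purely bookkeeping in the L'Hopital step, and no new ideas are required beyond those already used for Proposition \ref{prop:evaluation_formula}; the structural identity of $M(y,1-\gamma)$ with $M(y)$ outside of one row means that the counting of vanishing factors, and hence the matching of numerator-derivative count with denominator-derivative count, is unchanged.
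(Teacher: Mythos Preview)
Your proposal is correct, but it takes a different route than the paper. You redo the convolution/Laplace-transform computation and the L'H\^opital limiting argument from Proposition~\ref{prop:evaluation_formula} with the endpoint $t=1$ replaced by $t=1-\gamma$ throughout. The paper instead performs the substitution $u_i = x_i/(1-\gamma)$ to rescale the integration region back to the standard simplex, invokes Proposition~\ref{prop:evaluation_formula} directly with $y$ replaced by $(1-\gamma)y$, and then manipulates the resulting matrix $M((1-\gamma)y)$ by row and column scalings (pulling factors of $(1-\gamma)$ in and out) to recover $M(y,1-\gamma)$; a short computation shows the net power of $(1-\gamma)$ outside the determinant vanishes. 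The paper's argument is shorter and treats the already-proven case as a black box, at the cost of a slightly delicate exponent count; your argument is longer but entirely self-contained and makes the ``moreover'' clause transparent from the start, since $\gamma$ visibly enters only through the bottom row of the template column $(1,\lambda,\ldots,\lambda^{n-2},e^{(1-\gamma)\lambda})$. One small slip: the $D_i$ act column-wise (as you correctly note earlier), not ``within single rows''.
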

\begin{proof}
    The proof follows from a simple substitution ($u_i = \frac{x_i}{1-\gamma}$), applying Proposition \ref{prop:evaluation_formula}, and then multiplying and dividing factors of $(1-\gamma)$ in the rows and columns of $M((1-\gamma)y)$ to obtain $M(y,1-\gamma)$.

    To see this, we apply the change of variables $u_i = \frac{x_i}{1-\gamma}$ and Proposition \ref{prop:evaluation_formula}:
    \[
    \begin{split}
        \text{left-hand side} &= \int_0^1 \int_0^{1-u_1} \cdots \int_0^{1-u_1-\cdots-u_{n-2}} e^{(1-\gamma)\langle y, u \rangle} (1-\gamma)^{n-1} du_{n-1} \cdots du_1 \\
            &= (1-\gamma)^{n-1} \frac{\det(M((1-\gamma)y))}{\prod_{i < j} (1-\gamma)^{m_im_j}(\lambda_j-\lambda_i)^{m_im_j}}.
    \end{split}
    \]
    Note now that we can do the following to $M((1-\gamma)y)$ to make it so that only one of its rows depends on $\gamma$ (recall the definition of $M(y)$ from Definition \ref{def:eval_matrix}). First, divide the $i$th row of the matrix by $(1-\gamma)^{i-1}$ up to $i=n-1$. Then, for any $\lambda_p$ multiply the $j$th column depending on $\lambda_p$ by $(1-\gamma)^{j-1}$. Only the last row of the matrix obtained will depend on $\gamma$, and in fact this matrix is precisely $M(y, 1-\gamma)$.

    The process described above is equivalent to pulling out of the determinant a factor of $(1-\gamma)$ with the following exponent:
    \[
        \text{exponent of factor} = \sum_{i=0}^{n-2} i - \sum_{p=1}^k \sum_{j=0}^{m_p-1} j = \binom{n-1}{2} - \sum_{p=1}^k \binom{m_p}{2}.
    \]
    With this have that
    \[
        \text{left-hand side} = (1-\gamma)^\xi \cdot \frac{\det(M(y, 1-\gamma))}{\prod_{i < j}(\lambda_j-\lambda_i)^{m_im_j}},
    \]
    where $\xi = (n-1) - \sum_{i<j} m_im_j + \binom{n-1}{2} - \sum_{p=1}^k \binom{m_p}{2}$. Note that
    \[
        \xi = \frac{n(n-1)}{2} - \frac{1}{2}\left(\sum_{p=1}^k m_p\right)^2 + \frac{1}{2}\sum_{p=1}^k m_p = 0,
    \]
    since $\sum_{p=1}^k m_p = n$. The result follows.
\end{proof}

\begin{corollary}[Conditioned CDF formula] \label{cor:cdf_formula_rank_one}
    Fix $y_1,\ldots,y_n \in \R$, and let $\lambda_1 < \cdots < \lambda_p$ be the distinct values of $y_{k+1},\ldots,y_n$ with multiplicities $m_i$. Further, fix $x_1=\alpha_1$, \ldots, $x_{k-1}=\alpha_{k-1}$ and let $\alpha := \sum_{i=1}^{k-1} \alpha_i$. Also, let $x_n := 1-\alpha-x_k-\cdots-x_{n-1}$. The CDF denoted $F_k(\beta)$ for the simplex distribution $e^{\langle y, x \rangle}$, conditioned on the given values of $x_1,\ldots,x_{k-1}$, is given as follows for $\beta \in [0,1-\alpha]$ and $y' := (y_{k+1}, \ldots, y_n)$:
    \[
    \begin{split}
        \frac{F_k(\beta)}{(n-1)!} &:= \int_0^\beta \int_0^{1-\alpha-x_k} \cdots \int_0^{1-\alpha-x_k-\cdots-x_{n-2}} e^{\langle y, x \rangle} dx_{n-1} \cdots dx_k \\
            &= \frac{e^{y_1\alpha_1+\cdots+y_{k-1}\alpha_{k-1}}}{\prod_{i < j} (\lambda_j-\lambda_i)^{m_im_j}} \int_0^\beta e^{y_k x_k} \det(M(y',1-\alpha-x_k)) dx_k.
    \end{split}
    \]
    Recall the definition of $M(y, \gamma)$ from Definition \ref{def:eval_matrix}. Since only the last row of $M(y',1-\alpha-x_k)$ depends on $x_k$, the above integral can be passed to that row and computed explicitly when $y_k \neq \lambda_l$:
    \[
        \int_0^\beta e^{y_k x_k} \cdot \frac{(1-\alpha-x_k)^j e^{(1-\alpha-x_k)\lambda_l}}{j!} dx_k = e^{(1-\alpha)\lambda_l}\left[\sum_{i=0}^j \frac{(1-\alpha-\beta)^i e^{(y_k-\lambda_l)\beta} - (1-\alpha)^i}{i!(y_k-\lambda_l)^{j-i+1}}\right].
    \]
    If $y_k = \lambda_l$, we have the simpler expression, $e^{(1-\alpha)\lambda_l} \left[\frac{(1-\alpha)^{j+1} - (1-\alpha-\beta)^{j+1}}{(j+1)!}\right]$.
\end{corollary}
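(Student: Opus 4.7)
The plan is to prove this corollary in two stages: first reduce the $(n-k)$-fold nested integral to a single integral over $x_k$ by applying the preceding lemma, then evaluate the one-dimensional integral explicitly using a change of variables and integration by parts.

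First, since $x_1 = \alpha_1, \ldots, x_{k-1} = \alpha_{k-1}$ are fixed, I would split
\[
    \langle y, x \rangle = (y_1\alpha_1 + \cdots + y_{k-1}\alpha_{k-1}) + y_k x_k + \langle y', (x_{k+1}, \ldots, x_n) \rangle,
\]
where $y' = (y_{k+1}, \ldots, y_n)$, and pull the constant prefactor $e^{y_1\alpha_1+\cdots+y_{k-1}\alpha_{k-1}}$ out of the integral. For each fixed value of $x_k \in [0,\beta]$, the remaining inner integral over $x_{k+1}, \ldots, x_{n-1}$ (with $x_n = 1-\alpha-x_k-x_{k+1}-\cdots-x_{n-1}$) is exactly of the form treated by the preceding lemma, with $n$ replaced by $n-k$, $y$ replaced by $y'$, and $\gamma$ replaced by $\alpha+x_k$. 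Applying that lemma gives the inner integral as $\det(M(y', 1-\alpha-x_k))/\prod_{i<j}(\lambda_j-\lambda_i)^{m_im_j}$, yielding the first displayed equality of the corollary.

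For the explicit evaluation, the key observation (which the preceding lemma explicitly states) is that only one row of $M(y', 1-\alpha-x_k)$ depends on $x_k$. By multilinearity of the determinant in rows, I can commute the integral $\int_0^\beta e^{y_k x_k}(\cdot)\,dx_k$ inside the determinant, applying it entry-by-entry to that single row. Each such entry has the form $\frac{(1-\alpha-x_k)^j e^{(1-\alpha-x_k)\lambda_l}}{j!}$, so I must compute $\int_0^\beta e^{y_k x_k} \cdot \frac{(1-\alpha-x_k)^j e^{(1-\alpha-x_k)\lambda_l}}{j!} dx_k$. The substitution $u = 1-\alpha-x_k$ transforms this into $e^{y_k(1-\alpha)} \int_{1-\alpha-\beta}^{1-\alpha} \frac{u^j e^{(\lambda_l-y_k)u}}{j!}\,du$, and when $y_k \neq \lambda_l$ iterated integration by parts yields the standard antiderivative $\sum_{i=0}^j \frac{(-1)^{j-i}}{i!}\cdot \frac{u^i e^{au}}{a^{j-i+1}}$ with $a = \lambda_l-y_k$. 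Evaluating the antiderivative at the endpoints and using $y_k + (\lambda_l - y_k) = \lambda_l$ to combine exponentials produces the claimed formula with the factored $e^{(1-\alpha)\lambda_l}$ and $e^{(y_k-\lambda_l)\beta}$ terms. When $y_k = \lambda_l$, the integrand reduces to a scalar multiple of $(1-\alpha-x_k)^j$, whose integral is elementary and gives the simpler formula stated.

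The main obstacle is bookkeeping: keeping track of signs, of the powers of $(y_k-\lambda_l)$ versus $(\lambda_l-y_k)$ flipped during the integration by parts, and of which exponential factor goes with which evaluation endpoint. No genuinely new analytical idea is needed beyond the preceding lemma and multilinearity of the determinant, and in particular the fact that only one row depends on $x_k$ is exactly what makes the ``pass the integral into one row'' trick work.
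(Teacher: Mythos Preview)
Your proposal is correct and follows essentially the same approach as the paper: factor out the constant exponential in the fixed coordinates, apply the preceding lemma to the inner $(n-k)$-fold integral with $\gamma = \alpha + x_k$, then use multilinearity of the determinant to push the remaining one-dimensional integral into the single $x_k$-dependent row and evaluate via the substitution $t = 1-\alpha-x_k$ and integration by parts. The paper's proof is exactly this, with the same change of variables and the same case split on $y_k = \lambda_l$ versus $y_k \neq \lambda_l$.
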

\begin{proof}
    We have:
    \[
        \frac{F_k(\beta)}{(n-1)!} = e^{y_1\alpha_1+\cdots} \int_0^\beta e^{y_kx_k} \left[\int_0^{1-\alpha-x_k} \cdots \int_0^{1-\alpha-x_k-\cdots-x_{n-2}} e^{y_{k+1} x_{k+1}+\cdots} dx_{n-1} \cdots dx_{k+1}\right] dx_k.
    \]
    We compute the inner expression using the previous lemma and $\gamma = \alpha+x_k$:
    \[
        \int_0^{1-\alpha-x_k} \cdots \int_0^{1-\alpha-x_k-\cdots-x_{n-2}} e^{y_{k+1} x_{k+1}+\cdots} dx_{n-1} \cdots dx_{k+1} = \frac{\det(M(y',1-\alpha-x_k))}{\prod_{i < j} (\lambda_j - \lambda_i)^{m_im_j}}.
    \]
    This then implies:
    \[
        \frac{F_k(\beta)}{(n-1)!} = \frac{e^{y_1\alpha_1+\cdots+y_{k-1}\alpha_{k-1}}}{\prod_{i < j} (\lambda_j-\lambda_i)^{m_im_j}} \int_0^\beta e^{y_k x_k} \det(M(y',1-\alpha-x_k)) dx_k.
    \]
    Since only one row of $M(y',1-\alpha-x_k)$ depends on $x_k$, we can compute the above integral entrywise on that row by linearity (after multiplying that row by the $e^{y_k x_k}$ factor). We now compute the final expression of the result, removing subscripts to simplify notation. First we make the change of variables $t = 1-\alpha-x$:
    \[
        \int_0^\beta e^{yx} \frac{(1-\alpha-x)e^{(1-\alpha-x)\lambda}}{j!} dx = \frac{e^{y(1-\alpha)}}{j!} \int_{1-\alpha-\beta}^{1-\alpha} t^j e^{t(\lambda-y)} dt.
    \]
    If $\lambda = y$, then we simply obtain $e^{(1-\alpha)\lambda} \left[\frac{(1-\alpha)^{j+1} - (1-\alpha-\beta)^{j+1}}{(j+1)!}\right]$. Otherwise, we use integration by parts to obtain:
    \[
        \frac{e^{y(1-\alpha)}}{j!} \int_{1-\alpha-\beta}^{1-\alpha} t^j e^{t(\lambda-y)} dt = e^{(1-\alpha)\lambda}\left[\sum_{i=0}^j \frac{(1-\alpha-\beta)^i e^{(y-\lambda)\beta} - (1-\alpha)^i}{i!(y-\lambda)^{j-i+1}}\right].
    \]
\end{proof}

\section{Computing maximum entropy measures} \label{sec:computing_max_entropy}

In this section we describe the entire algorithm for computing the optimum $Y^\star$ for the dual program $\dual_\mu(A)$, given some $A \in \mathcal{K} = \hull(\Omega)$.
The algorithm is essentially an application of the ellipsoid algorithm, based on a bounding box and a strong counting/integration oracle.
We first discuss this algorithm in general, and then apply it to specific cases based on results from the previous sections.

Before moving on, we discuss how the linear equality constraints $\mathcal{L}(X) = B$ come into play here.
We want to restrict our search space to the vector space $V_\mathcal{L}$ defined as the set of all $X$ such that $\mathcal{L}(X) = 0$.
The main reason for this is, since the constraints given by $\mathcal{L}(X) = B$ pick out an affine space in which $\mathcal{K}$ is full dimensional, restricting the search space to $V_\mathcal{L}$ causes the optimum $Y^\star$ to be unique.
Further, the bounding box results above apply specifically to this particular $Y^\star$.

Since we are given $\mathcal{L}$ effectively and explicitly, we assume for the ellipsoid algorithm that we can project the gradient (given by the strong counting oracle) onto $V_\mathcal{L}$.
That said, we will from now on assume $V_\mathcal{L}$ to be the domain in which we are optimizing.

\subsection{The ellipsoid framework} \label{sect:ellipsoid}

Using the standard argument via H\"older's inequality, we have that the dual objective function is convex:
\[
    F_A(Y) := \langle Y, A \rangle + \mathcal{E}_\mu(Y) = \langle Y, A \rangle + \log\left( \int_\Omega e^{-\langle Y, X \rangle} d\mu(X) \right).
\]
With this, the main optimization tool we use to approximate the the dual optimum $Y^\star$ is the ellipsoid algorithm.
Recall the following from \cite{SinghV14} Theorem 2.13, which was essentially taken from \cite{BentalN12}.

\begin{theorem}[Ellipsoid algorithm]\label{thm:ellipsoid}
    Given any $\beta > 0$ and $R > 0$, there is an algorithm which, given a strong first-order oracle for $F_A$, returns a $Y^\circ \in V_\mathcal{L}$ such that:
    \[
        F_A(Y^\circ) \leq \inf_{Y \in V_\mathcal{L}, \|Y\|_\infty \leq R} F_A(Y) + \beta\left(\sup_{Y \in V_\mathcal{L}, \|Y\|_\infty \leq R} F_A(Y) - \inf_{Y \in V_\mathcal{L}, \|Y\|_\infty \leq R} F_A(Y)\right).
    \]
    The number of calls to the strong first-order oracle for $F_A$ is bounded by a polynomial in $d$, $\log R$, and $\log (1/\beta)$. Here, $d$ is the dimension of the ambient Hilbert space in which $\Omega$ lies.
\end{theorem}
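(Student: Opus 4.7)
The plan is to reduce the statement to the standard convergence guarantee of the ellipsoid method for convex minimization on a compact convex body, essentially as packaged in Theorem 2.13 of \cite{SinghV14} (and originally in \cite{BentalN12}). The three ingredients I would verify are (i) convexity of $F_A$, (ii) boundedness of the restricted search domain, and (iii) that a strong first-order oracle for $F_A$ yields the type of separation oracle the ellipsoid method consumes.

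First, I would verify that $F_A$ is convex on $V_\mathcal{L}$. The term $\langle Y, A\rangle$ is affine, and $\mathcal{E}_\mu(Y) = \log\int_\Omega e^{-\langle Y, X\rangle}d\mu(X)$ is convex by H\"older's inequality (this is simply the log-moment generating function of the pushforward of $\mu$ to $V_\mathcal{L}$), as already noted in the paragraph preceding the statement. Next, I would fix the feasible region $K_R := \{Y \in V_\mathcal{L} : \|Y\|_\infty \leq R\}$, which is a full-dimensional compact convex body inside $V_\mathcal{L}$ whose circumscribed and inscribed ball radii are polynomially related to $R$ and $d$, so the ellipsoid initialization can be taken as any ball containing $K_R$ of radius $\poly(R,d)$.

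Given the strong first-order oracle, at each query point $Y$ I would retrieve $F_A(Y)$ and $\nabla F_A(Y)$ and project the gradient onto $V_\mathcal{L}$ using the explicit linear description $\mathcal{L}(\cdot)=0$; this projection is a valid subgradient of $F_A\big|_{V_\mathcal{L}}$, and together with convexity it provides a separating hyperplane between $Y$ and every better sublevel point. Plugging this into the ellipsoid iteration shrinks the volume of the enclosing ellipsoid by a constant factor per step, so after a number of iterations polynomial in $d$, $\log R$, and $\log(1/\beta)$, the best queried point $Y^\circ$ is guaranteed to satisfy the relative optimality bound stated, with the gap measured in units of $\sup_{K_R} F_A - \inf_{K_R} F_A$.

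The main subtle point, and what I expect to be the only nontrivial step beyond citing the standard result, is that $F_A$ is not globally Lipschitz; the standard ellipsoid analysis needs either a Lipschitz bound or a relative (ratio-style) error guarantee. On the bounded domain $K_R$ however, both $|F_A|$ and $\|\nabla F_A\|$ are bounded in terms of $R$, $d$, and $A$, and consequently the oscillation $\sup_{K_R} F_A - \inf_{K_R} F_A$ is finite. It is precisely this oscillation that appears in the relative-error form of the guarantee as stated in \cite{BentalN12,SinghV14}, so no extra Lipschitz assumption is required and the complexity bound depends only on $d$, $\log R$, and $\log(1/\beta)$ as claimed.
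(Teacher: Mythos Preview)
Your proposal is appropriate, but note that the paper does not actually prove this theorem: it is stated as a black-box citation (``Recall the following from \cite{SinghV14} Theorem 2.13, which was essentially taken from \cite{BentalN12}'') and then immediately used inside the proof of Theorem~\ref{thm:main_algorithm_in_section}. So there is no ``paper's own proof'' to compare against; the authors treat the ellipsoid guarantee as a known tool.

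What you have written is a reasonable and correct unpacking of why that cited result applies here --- convexity of $F_A$ via H\"older, the box $K_R$ as the feasible region, and the first-order oracle plus projection onto $V_\mathcal{L}$ supplying the separation step --- and it matches the level of justification one would expect if asked to flesh out the citation. Your final paragraph about the oscillation $\sup_{K_R} F_A - \inf_{K_R} F_A$ replacing a Lipschitz assumption is exactly the point of the relative-error formulation in \cite{BentalN12}, and is consistent with how the paper uses the theorem downstream (bounding $|F_A(Y)| \le 2rR\sqrt{d}$ on $K_R$ and then choosing $\beta$ accordingly).
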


\noindent
We now prove the main theorem (Theorem \ref{thm:main_algorithm}) regarding the existence of an algorithm for approximating the optimum to the dual objective.

\begin{theorem}[Main algorithm, general case] \label{thm:main_algorithm_in_section}
    Let $\mu$ be a balanced measure on a domain $\Omega \subseteq \R^d$ contained in a ball of radius $r$. There exists an algorithm that, given a maximal set of linearly independent equalities $\mathcal{L}(X) = B$, an $A$ in the $\eta$-interior of $\mathcal{K} = \hull(\Omega)$, an $\epsilon > 0$, and a strong counting/integration oracle for the exponential integral $\mathcal{E}_\mu(Y)$, returns $Y^\circ \in V_\mathcal{L}$ such that
    \[
        F_A(Y^\circ) \leq F_A(Y^\star) + \epsilon,
    \]
    where $F_A$ is the objective function for the dual program $\dual_\mu(A)$, and $Y^\star \in V_\mathcal{L}$ is the optimum of the dual program. The running time of the algorithm is polynomial in $d$, $\eta^{-1}$, $\log(\epsilon^{-1})$, $\log(r)$, and the number of bits needed to represent $A$, $\mathcal{L}$, and $B$.
\end{theorem}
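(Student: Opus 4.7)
My plan is to combine three ingredients already established in the paper: (1) the bounding box of Theorem~\ref{thm:bounding-main}, which places the dual optimum $Y^\star$ inside a ball of radius $R = \poly(\eta^{-1}, d)$ in $V_\mathcal{L}$; (2) the ellipsoid method of Theorem~\ref{thm:ellipsoid}, which approximately minimizes a convex function over such a ball given a strong first-order oracle; and (3) the supplied strong counting oracle for $\mathcal{E}_\mu$, which together with the explicit expression $\langle Y, A \rangle$ (whose gradient is simply $A$) gives a strong first-order oracle for the full dual objective $F_A$.

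First I would set up the optimization on $V_\mathcal{L}$: since $\mathcal{L}$ and $B$ are given explicitly, we can compute the orthogonal projection onto $V_\mathcal{L}$ and restrict the search to the ball of radius $R$ in $V_\mathcal{L}$. By Theorem~\ref{thm:bounding-main} (applied via the balanced hypothesis on $\mu$), the unique optimal solution in $V_\mathcal{L}$ lies in this ball. At each query point $Y$, we call the counting oracle to obtain $\mathcal{E}_\mu(Y)$ and $\nabla \mathcal{E}_\mu(Y)$, project the gradient onto $V_\mathcal{L}$, and add $A$ to assemble a strong first-order oracle for the restricted $F_A$.

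Next I would convert the relative error guarantee of Theorem~\ref{thm:ellipsoid} into an additive-$\epsilon$ guarantee. Since $\Omega$ sits in a ball of radius $r$ and $A \in \hull(\Omega)$ satisfies $\|A\| \leq r$, for any $\|Y\| \leq R$ one has $|\langle Y, A \rangle| \leq Rr$, while $|\mathcal{E}_\mu(Y)| \leq Rr + |\log \mu(\Omega)|$ because $e^{-Rr}\mu(\Omega) \leq \int e^{-\langle Y, X \rangle}d\mu(X) \leq e^{Rr}\mu(\Omega)$. Consequently $\sup F_A - \inf F_A$ on this box is bounded by $\poly(R, r) = \poly(\eta^{-1}, d, r)$. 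Setting $\beta := \epsilon/\poly(R, r)$ in Theorem~\ref{thm:ellipsoid} yields a $Y^\circ$ with $F_A(Y^\circ) \leq F_A(Y^\star) + \epsilon$ using a number of first-order oracle calls polynomial in $d$, $\log R$, and $\log(1/\beta)$, which is $\poly(d, \log\eta^{-1}, \log(1/\epsilon), \log r)$.

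The step that needs the most care, and the main obstacle, is that the counting oracle is only approximate: it returns $\bar E, \bar G$ within some tolerance $\delta$. To plug it into Theorem~\ref{thm:ellipsoid} I would argue that choosing $\delta = 1/\poly(\eta^{-1}, d, r, 1/\epsilon)$ is enough to ensure the inexact gradients still steer the ellipsoid correctly, using that the objective is Lipschitz with constant $\poly(R, r)$ on the box. Since the counting oracle's running time is polynomial in $\log(1/\delta)$ and in the bit complexity of its input, and since our inputs $Y$ have norm at most $R = \poly(\eta^{-1}, d)$, each oracle call is cheap. Aggregating gives total running time polynomial in $d$, $\eta^{-1}$, $\log(1/\epsilon)$, $\log r$, and the bit complexity of $A, \mathcal{L}, B$, as claimed.
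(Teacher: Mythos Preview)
Your proposal is correct and follows essentially the same approach as the paper: choose the ellipsoid radius $R$ from the bounding box Theorem~\ref{thm:bounding-main}, bound the oscillation of $F_A$ over this box to convert the relative-error guarantee of Theorem~\ref{thm:ellipsoid} into an additive-$\epsilon$ one by setting $\beta = \epsilon/\poly(R,r)$, and use the counting oracle together with the trivial linear part $\langle Y,A\rangle$ as the first-order oracle. Two minor remarks: the paper defines a \emph{strong} counting/integration oracle as returning the exact values $\mathcal{E}_\mu(Y)$ and $\nabla\mathcal{E}_\mu(Y)$, so your careful discussion of the $\delta$-approximate case is not needed for this particular theorem (it becomes relevant only when specializing to $\mathcal{P}_k$ via Theorem~\ref{thm:counting}); and the $|\log\mu(\Omega)|$ term in your oscillation bound is unnecessary since $\mu$ is taken to be a probability measure throughout, so the paper simply uses $|F_A(Y)|\le 2rR\sqrt{d}$.
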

\begin{proof}
    To apply the ellipsoid algorithm, we need to choose the two parameters, $\beta$ and $R$. Since $\mu$ is balanced with some polynomial bound $f$, we choose for $R$ the bounding box given for balanced measures in Theorem \ref{thm:boundingbox}:
    \[
        R := 2\eta^{-1} \cdot f(2\eta^{-1}, d).
    \]
    So, the set $\{Y \in V_\mathcal{L} ~:~ \|Y\| \leq R\} \subset \{Y \in V_\mathcal{L} ~:~ \|Y\|_\infty \leq R\}$ contains the optimal $Y^\star$ for the dual program. Next, we need to choose $\beta$. Note that for $\|Y\|_\infty \leq R$ we have:
    \[
        |F_A(Y)| \leq |\langle Y, A \rangle| + \left|\log \int e^{-\langle Y, X \rangle} d\mu(X)\right| \leq r\|Y\|_\infty + r\|Y\|_\infty \leq 2r\sqrt{d}\|Y\| \leq 2rR\sqrt{d}.
    \]
    Therefore, choosing $\beta := \frac{\epsilon}{4rR\sqrt{d}}$ implies:
    \[
        \beta = \frac{\epsilon}{4rR\sqrt{d}} \leq \frac{\epsilon}{\sup_{Y \in V_\mathcal{L}, \|Y\|_\infty \leq R} F_A(Y) - \inf_{Y \in V_\mathcal{L}, \|Y\|_\infty \leq R} F_A(Y)}.
    \]
    The ellipsoid algorithm then guarantees a $Y^\circ$ such that:
    \[
        F_A(Y^\circ) \leq \inf_{Y \in V_\mathcal{L}, \|Y\|_\infty \leq R} F_A(Y) + \epsilon = F_A(Y^\star) + \epsilon.
    \]
    The number of calls to the strong counting oracle is polynomial in $d$, $\log(R) = \log(2\eta^{-1} \cdot f(2\eta^{-1}))$ and $\log(1/\beta) = \log(4rR\sqrt{d}\epsilon^{-1})$. Given the bounding box, each oracle call (now including computing $\langle Y, A \rangle$) can be implemented in time polynomial in $d$, $\eta^{-1}$, and the number of bits needed to represent $A$. This completes the proof.
\end{proof}

\subsection{Rank-$k$ Projections}

Next we apply the above result to the case of $\Omega = \mathcal{P}_k$ and $\mu = \mu_k$, i.e., the case of rank-$k$ projections.
To do so we make a few tweaks to the proof of the theorem for the general algorithm given in the previous section.
In particular, even though our domain $\mathcal{P}_k$ lies in the space of Hermitian matrices, our strong counting oracle for $\mathcal{E}_k$ only applies to real diagonal matrices $Y$.
That said, we now state the theorem for rank-$k$ projections and discuss such issues in the proof.

\begin{corollary}[Main algorithm, $\mathcal{P}_k$ case] \label{cor:rank_k_algorithm_in_section}
    There exists an algorithm that, given $n \in \N$, $k \in [n]$, $A$ in the $\eta$-interior of $\mathcal{P}_k$, and any $\epsilon > 0$, returns Hermitian $Y^\circ$ such that
    \[
        F_A(Y^\circ) \leq F_A(Y^\star) + \epsilon,
    \]
    where $F_A$ is the objective function for the dual program $\dual_k(A)$, and $Y^\star$ is an optimum of the dual program. The running time of the algorithm is polynomial in $n$, $\eta^{-1}$, $\log(\epsilon^{-1})$, and the number of bits need to represent $A$.
\end{corollary}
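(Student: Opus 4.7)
The plan is to instantiate Theorem \ref{thm:main_algorithm_in_section} with $\Omega = \mathcal{P}_k$ and $\mu = \mu_k$, but we first have to reconcile one mismatch: the ambient real Hilbert space for $\mathcal{P}_k$ is the $n^2$-dimensional space of $n \times n$ Hermitian matrices, whereas the strong counting oracle of Theorem \ref{thm:counting_in_section} only accepts real diagonal inputs. The fix is to exploit unitary invariance of $\mu_k$ to reduce the dual program for an arbitrary Hermitian $A$ to a dual program over real diagonal $Y$ only.

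First, I would compute a unitary diagonalization $A = V D V^*$ with $D$ real diagonal. Since $\mathcal{E}_k$ is unitarily invariant, $F_A(V Y V^*) = F_D(Y)$ for every Hermitian $Y$, so optimizing $\dual_k(A)$ is equivalent to optimizing $\dual_k(D)$ (with the same Frobenius norm on $Y$ and the same objective value) followed by conjugation by $V$. Next, because $U D U^* = D$ for every diagonal unitary $U$, the function $F_D$ is invariant under conjugation by diagonal unitaries. Averaging any Hermitian optimum of $\dual_k(D)$ over this compact symmetry group, together with convexity of $F_D$ and uniqueness of the optimum on $V_\mathcal{L}$, produces a diagonal optimum; its entries are automatically real since it is also Hermitian. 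Thus the search can be confined to the subspace of real traceless diagonal matrices, which is exactly where the counting oracle from Theorem \ref{thm:counting_in_section} is defined.

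With the reduction in place, the remaining argument is the ellipsoid analysis from the proof of Theorem \ref{thm:main_algorithm_in_section}, now carried out in dimension $n-1$ rather than $n^2-1$. Corollary \ref{cor:boundingbox_rank_k} supplies a bounding box $R = \poly(n, \eta^{-1})$; the fact that $\mathcal{P}_k$ lies in the ball of radius $\sqrt{k} \leq \sqrt{n}$ bounds $|F_D(Y)|$ on this box by $\poly(n, \eta^{-1})$; and setting $\beta$ so that $\beta \cdot \poly(n, \eta^{-1}) \leq \epsilon$ yields additive error $\epsilon$ via Theorem \ref{thm:ellipsoid}. The total number of oracle calls is $\poly(n, \log R, \log \beta^{-1}) = \poly(n, \eta^{-1}, \log \epsilon^{-1})$, and each call runs in time polynomial in $n$, $\log R$, and the bit complexity of $A$. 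Returning $Y^\circ := V \tilde Y^\circ V^*$, where $\tilde Y^\circ$ is the diagonal output of the ellipsoid method, finishes the proof. The main obstacle is cleanly justifying the diagonal reduction; once that is done the analysis is a direct transcription of the proof of Theorem \ref{thm:main_algorithm_in_section}.
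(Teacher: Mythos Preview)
Your proposal is correct and follows the same overall architecture as the paper: diagonalize $A$ by a unitary, reduce the search for $Y$ to real diagonal matrices, then invoke the general ellipsoid framework (Theorem \ref{thm:main_algorithm_in_section}) using the bounding box of Corollary \ref{cor:boundingbox_rank_k} and the counting oracle of Theorem \ref{thm:counting_in_section}.

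The one genuine difference is in how you justify the diagonal reduction for $Y$. The paper invokes the Schur--Horn theorem (Section \ref{sec:schur_horn} and Corollary \ref{cor:schur_horn_opt}): for diagonal $D$, the minimum of $\langle UD_0U^*, D \rangle$ over unitary $U$ is already attained at a permutation matrix, so optimizing $F_D$ over all Hermitian $Y$ with a fixed spectrum reduces to optimizing over diagonal $Y$. Your argument instead uses the torus symmetry directly: $F_D$ is invariant under conjugation by diagonal unitaries, so averaging an optimum $Y^\star$ over this torus (which simply zeroes out the off-diagonal entries) and applying Jensen's inequality yields a diagonal optimum. Your route is more elementary---it needs only convexity of $F_D$ and does not appeal to majorization---and it makes the invariance mechanism explicit. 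The paper's Schur--Horn route, on the other hand, gives slightly more: it shows that for \emph{every} fixed spectrum the optimum over the unitary orbit is attained at a diagonal matrix, which is stronger than needed here. Either argument suffices; you do not actually need the uniqueness clause you cite, since Jensen already gives $F_D(\bar Y) \le F_D(Y^\star)$.
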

\begin{proof}
    The result essentially follows from the general case, with a few details that need to be dealt with. First, the maximal linear equalities for $\mathcal{P}_k$ boils down to something very simple within the space of Hermitian matrices. It is simply given by $\Tr(X) = k$. Thus, our search space $V_\mathcal{L}$ then becomes the set of traceless Hermitian matrices.
    
    Next, by unitary invariance of $\mu_k$ we can assume $A$ is diagonal by unitary conjugation. Once we obtain an approximate optimum $Y^\circ$ for the diagonalized $A$, we can obtain an approximate optimum for the original $A$ via conjugation by this unitary. Next, by the Schur-Horn theorem (see \S\ref{sec:schur_horn} and the discussion at the start of \S\ref{sec:counting_oracle}) we can further assume that $Y^\star$ is diagonal. That is, we can assume $A$ is real diagonal and restrict the domain of $F_A(Y)$ to real diagonal matrices $Y$.

    Once we make this simplifying assumption, we have access to a strong counting/integration oracle for $\mathcal{E}_k(Y)$ by Theorem \ref{thm:counting}. The proof for the general case then goes through (using this strong counting oracle and the bounding box result for rank-$k$ projections), giving the desired result.
\end{proof}

\section{The Goemans-Williamson measure}

We discuss two main features of the pushforward through $v \mapsto vv^\top$ of the Goemans-Williamson measure which are relevant to this paper.
We abuse notation in this section by letting $\mu_{\mathrm{GW}}$ refer to the pushforward measure on $\mathcal{V}_1$.
First, we prove that this measure is a max-entropy measure with respect to $\mathcal{V}_1$.
Second, we demonstrate that this measure \emph{cannot} be interpreted as a max-entropy measure on $\mathcal{P}_1$.
This second point demonstrates the fundamental difference between mex entropy measures on $\mathcal{V}_1$ and $\mathcal{P}_1$.

\subsection{Goemans-Williamson measure on $\mathcal{V}_1$ maximizes entropy}

In this section, we demonstrate how the measure associated to the Goemans-Williamson SDP rounding scheme can be interpreted as a max-entropy measure.
We describe it formally as follows.

\begin{definition}[Goemans-Williamson rounding scheme] \label{def:GW_rounding}
    Given an $n \times n$ real symmetric positive definite matrix $A$, let $V$ be a real $n \times n$ matrix such that $VV^\top = A$. The Goemans-Williamson rounding scheme proceeds as follows:
    \begin{enumerate}
        \item Sample a random standard Gaussian vector $g$ from $\R^n$.
        \item Return the rank-1 PSD matrix $(Vg)(Vg)^\top$.
    \end{enumerate}
    The measure associated to this sampling process we refer to as the \emph{Goemans-Williamson measure} and denote it $\mu_{\mathrm{GW}}$. This measure is supported on the rank-1 real symmetric PSD matrices, which is the set of extreme points of the real symmetric PSD cone.
\end{definition}

\noindent
Now let $m$ be the Lebesgue measure on $\R^n$, and let $\mu$ be the measure on the real symmetric PSD cone which is the pushforward of $m$ through the map $\Phi: x \mapsto xx^\top$.
With this we can also give an explicit description of the Goemans-Williamson measure.

\begin{proposition}[Goemans-Williamson density function] \label{prop:GW_density}
    The Goemans-Williamson measure on the set of rank-1 real symmetric PSD matrices is given by
    \[
        d\mu_{\mathrm{GW}}(X) \propto e^{-\langle \frac{1}{2}A^{-1}, X \rangle} d\mu(X),
    \]
    where $\mu$ is the pushforward of Lebesgue measure through $x \mapsto xx^\top$.
\end{proposition}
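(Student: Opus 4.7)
The plan is to carry out a direct change of variables starting from the sampling definition of $\mu_{\mathrm{GW}}$. First I would restate the two-step process of Definition \ref{def:GW_rounding} as a single step by tracking the distribution of $x := Vg$ on $\R^n$ before pushing through $\Phi: x \mapsto xx^\top$. Since $g$ is a standard Gaussian and $V$ is invertible (because $A = VV^\top$ is positive definite), the change of variables formula gives that $x$ has density proportional to $e^{-\frac{1}{2}\|V^{-1}x\|^2}$ with respect to Lebesgue measure on $\R^n$; the Jacobian factor $|\det V^{-1}|$ is absorbed into the normalizing constant.

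The key algebraic identity is to rewrite the quadratic form as a Frobenius pairing: using $VV^\top = A$,
\[
    \|V^{-1}x\|^2 = x^\top (VV^\top)^{-1} x = \Tr(A^{-1} xx^\top) = \langle A^{-1}, xx^\top \rangle.
\]
Thus the density of $x$ on $\R^n$ is proportional to $e^{-\langle \frac{1}{2}A^{-1},\, xx^\top\rangle}$, a function that depends on $x$ only through $xx^\top$.

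Finally I would push forward through $\Phi$ to land on $\mathcal{V}_1$. Because the density already factors through $\Phi$, for any bounded measurable $h$ on $\mathcal{V}_1$,
\[
    \int_{\mathcal{V}_1} h(X)\, d\mu_{\mathrm{GW}}(X) \;\propto\; \int_{\R^n} h(xx^\top)\, e^{-\langle \frac{1}{2}A^{-1},\, xx^\top\rangle}\, dx \;=\; \int_{\mathcal{V}_1} h(X)\, e^{-\langle \frac{1}{2}A^{-1},\, X\rangle}\, d\mu(X),
\]
where the last equality uses the defining property of $\mu$ as the pushforward of Lebesgue measure through $\Phi$. Since this identity holds for every test function $h$, the claimed density identity follows.

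The only subtlety is that $\Phi$ identifies $x$ with $-x$, so Lebesgue measure on $\R^n$ pushes to $\mu$ with a fiber cardinality of two; but both the intermediate density and Lebesgue measure are invariant under $x \mapsto -x$, so this contributes only an overall constant absorbed into the proportionality. I do not foresee any substantive obstacle; the entire argument is a short change-of-variables computation once the Frobenius identity is observed. If the explicit normalizing constant is desired, it can be read off by combining the standard Gaussian normalization with the integral formula in Equation \eqref{eq:GW_TO} applied to $Y = \frac{1}{2}A^{-1}$.
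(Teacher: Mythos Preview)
Your proposal is correct and follows essentially the same route as the paper: apply the change of variables $x = Vg$ to the standard Gaussian, rewrite $\|V^{-1}x\|^2 = \langle A^{-1}, xx^\top\rangle$, and push forward through $x \mapsto xx^\top$. Your version is slightly more careful in spelling out the test-function argument and the $x \mapsto -x$ fiber subtlety, but the substance is identical.
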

\begin{proof}
    Let $A = VV^\top$ as in the definition of $\mu_{\mathrm{GW}}$. Since a standard Gaussian $g$ is distributed according to $e^{-\frac{1}{2}\|g\|^2} dm(g)$, we can apply the change of variables formula to determine how $x := Vg$ is distributed. We have:
    \[
        x \sim e^{-\frac{1}{2}\|V^{-1}x\|^2} \cdot \det(V^{-1}) dm(x) = e^{-\langle \frac{1}{2}A^{-1}, xx^\top \rangle} \cdot \sqrt{\det(A^{-1})} dm(x).
    \]
    Considering the pushforward of this measure through $x \mapsto xx^\top$ gives the desired result.
\end{proof}

\noindent
Note that strong duality then immediately implies $\mu_{\mathrm{GW}}$ is a max-entropy measure with respect to $\mu$, since its density function is of the correct form.
To demonstrate this more concretely, we prove this explicitly below via an explicit formula $\mathcal{E}_\mu(Y)$.
First, the following observation tells us that it is sufficient to restrict $\mathcal{E}_\mu(Y)$ to positive definite $Y$.
\begin{lemma}
    If $Y$ is not PD, then $\int_{\mathcal{V}_1} e^{-\langle Y, X \rangle} d\mu(X) = +\infty$.
\end{lemma}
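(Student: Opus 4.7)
The plan is to transfer the integral from $\mathcal{V}_1$ back to $\mathbb{R}^n$ using the definition of $\mu$, then diagonalize $Y$ and reduce to a product of one-dimensional Gaussian-type integrals, at least one of which will diverge.

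First, since $\mu$ is defined as the pushforward of Lebesgue measure on $\mathbb{R}^n$ through $x \mapsto xx^\top$, and since $\langle Y, xx^\top \rangle = \Tr(Y xx^\top) = x^\top Y x$, the change of variables formula immediately yields
\[
    \int_{\mathcal{V}_1} e^{-\langle Y, X \rangle}\, d\mu(X) \;=\; \int_{\mathbb{R}^n} e^{-x^\top Y x}\, dx.
\]
Since $Y$ is real symmetric (it lies in the ambient Hilbert space of $\mathcal{V}_1$), I would write $Y = O D O^\top$ with $O \in O(n)$ and $D = \diag(\lambda_1, \ldots, \lambda_n)$ the diagonal matrix of real eigenvalues. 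Substituting $y = O^\top x$ (orthogonal, so unit Jacobian) turns the integral into a product:
\[
    \int_{\mathbb{R}^n} e^{-x^\top Y x}\, dx \;=\; \int_{\mathbb{R}^n} e^{-\sum_i \lambda_i y_i^2}\, dy \;=\; \prod_{i=1}^n \int_{\mathbb{R}} e^{-\lambda_i y_i^2}\, dy_i.
\]

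The final step uses the hypothesis that $Y$ is not positive definite, so there exists some index $j$ with $\lambda_j \leq 0$. For such an index, the integrand $e^{-\lambda_j y_j^2} \geq 1$ for all $y_j \in \mathbb{R}$, hence $\int_{\mathbb{R}} e^{-\lambda_j y_j^2}\, dy_j = +\infty$. Every other factor $\int_{\mathbb{R}} e^{-\lambda_i y_i^2}\, dy_i$ is a positive (possibly also infinite) quantity, so the full product is $+\infty$, as claimed.

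There is no genuine obstacle here; the only thing to be careful about is justifying the product decomposition (standard Tonelli, since the integrand is nonnegative) and noting that the case $\lambda_j = 0$ is handled on the same footing as $\lambda_j < 0$, since in both cases the one-dimensional integral diverges.
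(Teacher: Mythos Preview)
Your proof is correct and follows essentially the same approach as the paper: transfer to $\mathbb{R}^n$ via the pushforward, diagonalize $Y$, and identify a divergent one-dimensional factor. The only cosmetic difference is that the paper first reduces to the singular PSD case via the monotonicity $Y \prec Z \Rightarrow -\langle Y, X\rangle \geq -\langle Z, X\rangle$ and then separates out a single coordinate, whereas you factor the integral completely and handle all non-PD $Y$ uniformly.
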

\begin{proof}
    Since $X$ is PSD, we have that $Y \prec Z$ implies $-\langle Y, X \rangle \geq -\langle Z, X \rangle$. Hence, to prove the result, we only need to show it for singular PSD matrices $Y$. Further, unitary invariance means we can restrict to diagonal $Y$. So, assume $Y = \diag(0,y_2,\ldots,y_n)$ for $y_i \geq 0$. Now consider:
    \[
    \begin{split}
        \int_{\mathcal{V}_1} e^{-\langle Y, X \rangle} d\mu(X) &= \int_{\R^n} e^{-\langle Y, xx^\top \rangle} dx = \int_{\R^n} e^{-\sum_{i=2}^n y_i |x_i|^2} dx \\
            &= \int_{-\infty}^\infty \int_{\R^{n-1}} e^{-\sum_{i=2}^n y_i |x_i|^2} d(x_2,\ldots,x_n) dx_1 \\
            &= \int_{-\infty}^\infty C dx_1 = +\infty.
    \end{split}
    \]
    Note that the inner integrand above does not depend on $x_1$, and so the evaluation of the inner integral yields some positive (possibly infinite) constant $C$ as written above.
\end{proof}

\noindent
We now give an explicit formula for $\mathcal{E}_\mu(Y)$ on positive definite $Y$.

\begin{proposition}[Lebesgue evaluation formula] \label{prop:lebesgue_formula}
    We have the following explicit expression for $\mathcal{E}_\mu(Y)$ for $n \times n$ real symmetric positive definite $Y$:
    \[
        \mathcal{E}_\mu(Y) := \log \int_{\mathcal{V}_1} e^{-\langle Y, X \rangle} d\mu(X) = \frac{n}{2}\log(\pi) - \frac{1}{2} \log\det(Y).
    \]
\end{proposition}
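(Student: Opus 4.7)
The plan is to reduce the integral over $\mathcal{V}_1$ to a classical Gaussian integral on $\R^n$, and then evaluate that Gaussian integral by orthogonal diagonalization.

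First I would unwind the pushforward definition of $\mu$. Since $\mu$ is by definition the pushforward of Lebesgue measure on $\R^n$ through the map $\Phi: x \mapsto xx^\top$, the change-of-variables formula for pushforward measures gives
\[
\int_{\mathcal{V}_1} e^{-\langle Y, X\rangle}\, d\mu(X) \;=\; \int_{\R^n} e^{-\langle Y, xx^\top\rangle}\, dx \;=\; \int_{\R^n} e^{-x^\top Y x}\, dx,
\]
using $\langle Y, xx^\top\rangle = \Tr(Y xx^\top) = x^\top Y x$.

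Next I would evaluate the Gaussian integral. Since $Y$ is real symmetric positive definite, write $Y = O^\top D O$ with $O$ orthogonal and $D = \diag(\lambda_1,\dots,\lambda_n)$ with every $\lambda_i > 0$. Changing variables $u = Ox$ has unit Jacobian determinant, so
\[
\int_{\R^n} e^{-x^\top Y x}\, dx \;=\; \int_{\R^n} e^{-u^\top D u}\, du \;=\; \prod_{i=1}^n \int_{-\infty}^{\infty} e^{-\lambda_i u_i^2}\, du_i \;=\; \prod_{i=1}^n \sqrt{\pi/\lambda_i} \;=\; \sqrt{\pi^n/\det(Y)}.
\]
Taking the logarithm yields exactly $\tfrac{n}{2}\log(\pi) - \tfrac{1}{2}\log\det(Y)$, as claimed.

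The only thing that requires any real care is justifying the first equality in display above: that integration against the pushforward measure $\mu = \Phi_*(dx)$ of a function $F$ on $\mathcal{V}_1$ equals integration of $F \circ \Phi$ against Lebesgue measure. This is the abstract change-of-variables formula for pushforwards, applied to $F(X) = e^{-\langle Y, X\rangle}$; the function $F$ is bounded on the support of $\mu$ when $Y \succ 0$ (so integrability is not an issue), and the map $\Phi$ is measurable, so the identity is immediate from the definition of pushforward. The only mild subtlety, which I would mention in passing, is that $\Phi$ is two-to-one ($x$ and $-x$ map to the same point), but this plays no role since we are integrating against the pushforward measure rather than trying to invert $\Phi$. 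There is no real obstacle here; the result is essentially the standard Gaussian normalization constant expressed in our notation.
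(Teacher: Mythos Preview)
Your proof is correct and follows essentially the same approach as the paper: unwind the pushforward to reduce to $\int_{\R^n} e^{-x^\top Y x}\,dx$, then invoke the Gaussian integral formula. The paper simply cites the multivariate Gaussian density for the second step, whereas you spell it out via orthogonal diagonalization; the content is the same.
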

\begin{proof}
    Since $\mu$ is the pushforward measure of $m$ through $x \mapsto xx^t$, we have:
    \[
        \log \int_{\mathcal{V}_1} e^{-\langle Y, X \rangle} d\mu(X) = \log \int_{\R^n} e^{-\langle Y, xx^\top \rangle} dm(x) = \log\left(\pi^{n/2} \det(Y)^{-1/2}\right).
    \]
    The second equality is computed via the density function of the multivariate Guassian.
\end{proof}

\noindent
This then leads to the main result of this section.

\begin{corollary}[Max-entropy, SDP rounding] \label{cor:sdp_rounding}
    Given an $n \times n$ real symmetric positive definite marginals matrix $A$, the Goemans-Williamson measure $\mu_{\mathrm{GW}}$ is the max-entropy measure with respect to $\mu$, the pushforward through $x \mapsto xx^\top$ of the Lebesgue measure on $\R^n$. That is, $\mu_{\mathrm{GW}}$ is the optimal measure for $\primal_\mu(A)$.
\end{corollary}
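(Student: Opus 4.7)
The plan is to combine strong duality (Theorem \ref{thm:strongduality}) with the explicit formula for $\mathcal{E}_\mu(Y)$ given by Proposition \ref{prop:lebesgue_formula}, together with the explicit density of $\mu_{\mathrm{GW}}$ from Proposition \ref{prop:GW_density}. The strategy is to compute the unique minimizer $Y^\star$ of the dual objective $F_A$ in closed form, observe that the resulting primal-optimal density coincides with the Goemans-Williamson density, and conclude by strong duality that the two measures agree.

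First I would verify the hypotheses of strong duality. Since $A$ is real symmetric and positive definite, and since $\hull(\mathcal{V}_1)$ is the real symmetric PSD cone (whose relative interior is the set of real symmetric PD matrices), $A$ lies in the relative interior of $\hull(\mathcal{V}_1)$. Hence Theorem \ref{thm:strongduality} applies and tells us that the unique max-entropy primal optimum with marginal $A$ has the form $\nu^\star(X) \propto e^{-\langle Y^\star, X \rangle}\, d\mu(X)$, where $Y^\star$ minimizes $F_A(Y) = \langle Y, A \rangle + \mathcal{E}_\mu(Y)$.

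Next I would reduce the minimization of $F_A$ to the PD cone: by the lemma preceding Proposition \ref{prop:lebesgue_formula}, $\mathcal{E}_\mu(Y) = +\infty$ whenever $Y$ is not PD, so any minimizer must be PD. On the PD cone, Proposition \ref{prop:lebesgue_formula} gives the closed form
\[
    F_A(Y) = \langle Y, A \rangle + \tfrac{n}{2}\log(\pi) - \tfrac{1}{2}\log\det(Y).
\]
Since $\langle Y, A \rangle$ is linear and $-\log\det$ is strictly convex on the PD cone, $F_A$ is strictly convex there, so its minimum is attained at the unique critical point. Using $\nabla_Y \langle Y, A \rangle = A$ and $\nabla_Y \log\det(Y) = Y^{-1}$, the critical-point equation is $A - \tfrac{1}{2}Y^{-1} = 0$, whose unique PD solution is $Y^\star = \tfrac{1}{2}A^{-1}$.

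Finally, substituting back yields that the max-entropy measure for marginal $A$ has density $\nu^\star(X) \propto e^{-\langle \frac{1}{2}A^{-1}, X \rangle}\, d\mu(X)$, which by Proposition \ref{prop:GW_density} is exactly $d\mu_{\mathrm{GW}}(X)$. Thus $\mu_{\mathrm{GW}}$ is optimal for $\primal_\mu(A)$. The only point of real subtlety is ensuring the gradient computation is valid inside the space of real symmetric matrices (so that $Y^\star$ lies in the correct search space), but this is immediate since $A$ and $A^{-1}$ are both real symmetric; all other steps are direct invocations of earlier results.
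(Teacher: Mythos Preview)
Your proposal is correct and follows essentially the same route as the paper: use Proposition \ref{prop:lebesgue_formula} to write $F_A(Y)$ explicitly on the PD cone, solve the first-order condition to get $Y^\star=\tfrac{1}{2}A^{-1}$, and then invoke strong duality together with Proposition \ref{prop:GW_density} to identify the optimal density with $\mu_{\mathrm{GW}}$. The only cosmetic difference is that the paper cites the Slater condition for $\mathcal{V}_1$ (Proposition \ref{prop:slaters_condition_GW}) rather than Theorem \ref{thm:strongduality} directly, and your argument is slightly more careful in spelling out strict convexity and the restriction to PD $Y$.
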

\begin{proof}
    Proposition \ref{prop:lebesgue_formula} gives the following explicit expression for $\mathcal{E}_\mu(Y)$ with $n \times n$ real symmetric positive definite input $Y$:
    \[
        \mathcal{E}_\mu(Y) := \log \int e^{-\langle Y, X \rangle} d\mu(X) = \frac{n}{2}\log(\pi) - \frac{1}{2} \log\det(Y).
    \]
    By a standard computation, we then have the following:
    \[
        \nabla \mathcal{E}_\mu(Y) = -\frac{1}{2} \nabla \log\det(Y) = -\frac{1}{2} Y^{-1}.
    \]
    This implies the following regarding the gradient of the dual program objective $\dual_\mu(A)$ for positive definite $A$:
    \[
        0 = \nabla F_A(Y) = \nabla(\langle Y, A \rangle + \mathcal{E}_\mu(Y)) = A - \frac{1}{2}Y^{-1} \iff Y = \frac{1}{2} A^{-1}.
    \]
    That is, $Y^\star = \frac{1}{2} A^{-1}$ is the optimum for the dual program. By strong duality/Slater condition for $\mu$ (see Proposition \ref{prop:slaters_condition_GW}) and the density function for $\mu_{\mathrm{GW}}$ given in Proposition \ref{prop:GW_density} above, this implies the result.
\end{proof}

\subsection{Goemans-Williamson measure projected to the unit sphere does not maximize entropy} \label{sec:GWsphere}

In this section we show that the Hermitian version of the measure $\mu_{\mathrm{GW}}$ is not a max-entropy measure for $\mathcal{P}_1$. We do not recompute the density function for $\mu_{\mathrm{GW}}$ in the Hermitian case, but only say that Proposition \ref{prop:GW_density} can be adapted to show that in this case it is of the same form: $\nu(X) \propto e^{-\langle A_0, X \rangle}$ for some positive definite $A_0$.

We want to ``project'' the (Hermitian) SDP rounding measure onto $\mathcal{P}_1$, and we want to compute the density with respect to $\mu_1$. To do this, we first project the Lebesgue measure onto the complex unit sphere $S_\C^n$ and then pushforward through $x \mapsto xx^*$. We first state a few standard lemmas.

\begin{lemma}
    Let $f(z)$ be a Lebesgue measurable function on $\C^n$. Then:
    \[
        \int_{\C^n} f(z) dm(z) = \frac{2\pi^n}{(n-1)!} \int_{S_\C^n} \int_0^\infty f(rv) r^{2n-1} dr d\mu_{S_\C^n}(v).
    \]
\end{lemma}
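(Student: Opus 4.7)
The plan is to recognize this as the standard polar/spherical coordinates formula on $\R^{2n} \cong \C^n$, with the constant $\frac{2\pi^n}{(n-1)!}$ being the surface area of the unit sphere $S^{2n-1}$.

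First, I would identify $\C^n$ with $\R^{2n}$ via the real coordinates $(x_1,y_1,\ldots,x_n,y_n)$ coming from $z_j = x_j + i y_j$. Under this identification, the Lebesgue measure $dm(z)$ on $\C^n$ is the usual Lebesgue measure on $\R^{2n}$, and $|z|^2 = \sum_j (x_j^2+y_j^2)$ is the usual Euclidean norm squared. The map $\Phi:(0,\infty) \times S^{2n-1} \to \R^{2n}\setminus\{0\}$ given by $\Phi(r,v)=rv$ is a diffeomorphism, and the standard change-of-variables computation gives $dm = r^{2n-1}\,dr\,d\sigma(v)$, where $d\sigma$ is the (unnormalized) surface measure on $S^{2n-1}$. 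Therefore
\[
   \int_{\C^n} f(z)\,dm(z) \;=\; \int_0^\infty \int_{S^{2n-1}} f(rv)\, r^{2n-1}\, d\sigma(v)\, dr.
\]

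Second, I would compute the total surface area $\sigma(S^{2n-1})$. The simplest route is to compare two evaluations of the Gaussian integral $\int_{\C^n} e^{-|z|^2}\,dm(z) = \pi^n$ (which follows by Fubini from the one-variable identity $\int_\R e^{-x^2}dx = \sqrt{\pi}$ applied $2n$ times). Applying the polar formula above with $f(z)=e^{-|z|^2}$ gives $\pi^n = \sigma(S^{2n-1}) \int_0^\infty e^{-r^2} r^{2n-1}\,dr = \sigma(S^{2n-1}) \cdot \tfrac{1}{2}(n-1)!$, whence $\sigma(S^{2n-1}) = \frac{2\pi^n}{(n-1)!}$.

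Third, I would identify $\mu_{S_\C^n}$ with the normalized surface measure $\frac{(n-1)!}{2\pi^n}\,d\sigma$. Both are Borel probability measures on $S^{2n-1}$ that are invariant under the action of $U(n)$ (in the case of $d\sigma$, because unitary maps are a subgroup of the orthogonal group and hence preserve Euclidean surface area). By uniqueness of the $U(n)$-invariant probability measure on $S^{2n-1}$, they coincide. Substituting $d\sigma = \frac{2\pi^n}{(n-1)!}\,d\mu_{S_\C^n}$ into the polar formula above yields the stated identity.

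The only nontrivial ingredient is the uniqueness of the invariant measure in the last step; everything else is a standard change of variables and a Gaussian integral. Since the claim is stated only for Lebesgue measurable $f$, one should interpret both sides as possibly infinite for non-integrable nonnegative $f$ (via Tonelli) and then extend by linearity to integrable $f$.
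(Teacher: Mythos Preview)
Your proof is correct and follows the same approach as the paper: both recognize the identity as the polar coordinates formula for Lebesgue measure on $\C^n\cong\R^{2n}$. In fact your argument is more careful than the paper's one-line proof, and your identification of the constant $\tfrac{2\pi^n}{(n-1)!}$ as the surface area of $S^{2n-1}$ is the right one (the paper calls it the volume of the complex unit ball, which would instead be $\tfrac{\pi^n}{n!}$).
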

\begin{proof}
    This is precisely the polar coordinates formula for Lebesgue measure in $\C^n \cong \R^{2n}$. The constant $\frac{2\pi^n}{(n-1)!}$ is the volume of the complex unit ball in $\C^n$.
\end{proof}

\noindent
This shows that the projected density $g$ can be computed from the Lebesgue density $f$ as follows:
\[
    g(v) = \frac{2\pi^n}{(n-1)!} \int_0^\infty f(rv) r^{2n-1} dr.
\]
We will now use the following lemma, which is standard.

\begin{lemma}
    For $n \in \N$ and $a > 0$, we have:
    \[
        \int_0^\infty r^{2n-1} e^{-ar^2} dr = \frac{(n-1)!}{2 a^n}.
    \]
\end{lemma}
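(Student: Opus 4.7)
The plan is to reduce the integral directly to the Gamma function via a single substitution, which gives the result in one line without any induction. Since the integrand is $r^{2n-1}e^{-ar^2}$ and the exponent suggests the change of variables $u = ar^2$, I will apply that substitution: then $du = 2ar\,dr$, so $r\,dr = \tfrac{du}{2a}$ and $r^{2n-2} = (r^2)^{n-1} = (u/a)^{n-1}$. Substituting yields
\[
\int_0^\infty r^{2n-1} e^{-ar^2}\,dr
= \int_0^\infty \left(\frac{u}{a}\right)^{n-1} e^{-u} \cdot \frac{du}{2a}
= \frac{1}{2a^n} \int_0^\infty u^{n-1} e^{-u}\,du.
\]
The remaining integral is the definition of $\Gamma(n) = (n-1)!$, which immediately gives the claimed value $\frac{(n-1)!}{2a^n}$.

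As a sanity check, one could alternatively proceed by induction on $n$. The base case $n=1$ reduces to $\int_0^\infty r e^{-ar^2}\,dr = \tfrac{1}{2a}$, which follows from the same substitution (or by recognizing the antiderivative $-\tfrac{1}{2a}e^{-ar^2}$). For the inductive step one would integrate by parts with $u = r^{2n-2}$ and $dv = re^{-ar^2}\,dr$, obtaining the recursion $I_n = \tfrac{n-1}{a}I_{n-1}$ and then applying the inductive hypothesis. Both approaches avoid any convergence subtleties since $a > 0$ makes the integrand rapidly decaying at infinity and the factor $r^{2n-1}$ is integrable at the origin for $n \geq 1$. There is no real obstacle here; the only thing to be careful about is tracking the factor of $2a$ correctly when changing variables, which the explicit substitution above handles.
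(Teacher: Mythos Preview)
Your proof is correct. The paper itself does not supply a proof of this lemma, stating only that it ``is standard,'' so there is nothing to compare against; your substitution $u = ar^2$ reducing to $\Gamma(n)$ is exactly the expected one-line argument.
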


\noindent
With this, we compute the following for $f(x) \sim e^{-\langle A, xx^* \rangle}$:
\[
\begin{split}
    g(v) &= \frac{2\pi^n}{(n-1)!} \int_0^\infty f(rv) r^{2n-1} dr \\
        &\propto \frac{2\pi^n}{(n-1)!} \int_0^\infty r^{2n-1} e^{-r^2\langle A, vv^* \rangle} dr \\
        &= \frac{\pi^n}{\langle A, vv^* \rangle ^n} \propto \langle A, vv^* \rangle^{-n}.
\end{split}
\]
That is, the projected density is proportional to $\langle A, vv^* \rangle^{-n}$ on the unit sphere. With this, we have the following interesting fact.

\begin{proposition}
    The ``projection'' of the (Hermitian) SDP rounding measure to $\mathcal{P}_1$ is not a max-entropy measure with respect to $\mu_1$ on $\mathcal{P}_1$.
\end{proposition}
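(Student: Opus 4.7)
The plan is to argue by contradiction: suppose the projected Goemans--Williamson measure were max-entropy on $\mathcal{P}_1$ with respect to $\mu_1$. By Theorem \ref{thm:strongduality}, its density with respect to $\mu_1$ would then be proportional to $e^{-\langle Y^\star, X \rangle}$ for some Hermitian $Y^\star$. However, the preceding displayed computation in this section shows that the projected density is instead proportional to $\langle A, X \rangle^{-n}$ on $\mathcal{P}_1$. Equating these two expressions and taking logarithms, the task reduces to showing that, for at least one admissible positive definite $A$, there is no Hermitian $Y^\star$ and real constant $c$ for which
\[
    \langle Y^\star, X \rangle - n \log \langle A, X \rangle = c \qquad \text{for all } X \in \mathcal{P}_1.
\]

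The core step is to exhibit a one-parameter family of rank-one projections along which the linearity in $X$ on the left cannot be reconciled with the strict concavity of $\log$. The simplest choice is $A = \diag(\alpha,\beta,1,\ldots,1)$ with $\alpha \neq \beta$, both positive, together with the curve $v(s) := (\sqrt{1-s},\sqrt{s},0,\ldots,0)$ and $X(s) := v(s)v(s)^*$ for $s \in [0,1]$. Along this curve $\langle A, X(s) \rangle = \alpha(1-s) + \beta s$, while
\[
    \langle Y^\star, X(s) \rangle = Y^\star_{11}(1-s) + Y^\star_{22} s + 2\,\mathrm{Re}(Y^\star_{12})\sqrt{s(1-s)}.
\]
Substituting these into the constancy equation above yields an identity in $s$ of the form
\[
    Y^\star_{11}(1-s) + Y^\star_{22} s + 2\,\mathrm{Re}(Y^\star_{12})\sqrt{s(1-s)} - n \log\bigl(\alpha(1-s) + \beta s\bigr) = c.
\]

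Since $1$, $s$, and $\sqrt{s(1-s)}$ are linearly independent as real functions on $[0,1]$, and the remaining piece is real-analytic near $s = 0$, isolating the non-analytic summand forces $\mathrm{Re}(Y^\star_{12}) = 0$. What is left then demands that $n \log(\alpha(1-s) + \beta s)$ be an affine function of $s$ on $[0,1]$, which is impossible since $\log$ is strictly concave on $(0,\infty)$ and $\alpha \neq \beta$. This contradiction will complete the proof.

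I expect no serious technical obstacle. The one minor point to verify carefully is that the pointwise identification of densities is legitimate: both $\langle A, X \rangle^{-n}$ and $e^{-\langle Y^\star, X \rangle}$ are continuous on the compact manifold $\mathcal{P}_1$, so any $\mu_1$-a.e.\ proportionality automatically upgrades to pointwise equality on all of $\mathcal{P}_1$, making the substitution $X = X(s)$ fully rigorous. The choice of an $A$ with only two distinct relevant eigenvalues is purely for clarity; the argument goes through for any $A$ that is not a scalar multiple of the identity, which is precisely the interesting case (when $A$ is a scalar multiple of the identity, $\langle A, X \rangle$ is already constant on $\mathcal{P}_1$ and both measures degenerate to $\mu_1$ itself).
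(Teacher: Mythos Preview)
Your proposal is correct. The overall strategy matches the paper's: invoke strong duality to see that any max-entropy density must be $\propto e^{-\langle Y^\star, X\rangle}$, then argue this cannot be proportional to the projected density $\langle A, X\rangle^{-n}$ computed just before the proposition.

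Where you diverge is in how you rule out the proportionality. The paper disposes of it in one line, observing that $\langle A, vv^*\rangle^{-n}$ is a rational function of $\Re(v_i),\Im(v_i)$ on the sphere while $e^{-\langle B, vv^*\rangle}$ is not. You instead restrict to an explicit one-parameter family $X(s)$ of rank-one projections, compute both sides along this curve, peel off the non-analytic $\sqrt{s(1-s)}$ term to force $\mathrm{Re}(Y^\star_{12})=0$, and then reach a contradiction via strict concavity of $\log$. Your route is more hands-on and self-contained; the paper's is shorter but leans on a structural fact about function classes that some readers might want unpacked. One presentational point: you phrase the reduction as ``for at least one admissible positive definite $A$,'' but the proposition concerns the given $A$. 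Your closing remark already fixes this---after diagonalizing $A$ by unitary invariance, the same curve argument works for any $A$ with at least two distinct eigenvalues, and when $A$ is a scalar multiple of the identity both densities collapse to the constant $\mu_1$, so the statement is vacuous there. It would read more cleanly to lead with the general diagonalized $A$ rather than a specific one.
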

\begin{proof}
    By strong duality, max-entropy densities in both contexts take the form $g(X) \propto e^{-\langle A, X \rangle}$. So, we just need to show that for all PD $B$ we have:
    \[
        \langle A, X \rangle^{-n} \not\sim e^{-\langle B, X \rangle}.
    \]
    This is straightforward, e.g. using the fact that the left-hand side is a rational function in $\Re(v_i),\Im(v_i)$ but the right-hand side is not.
\end{proof}

\section{Generalization of the maximum entropy framework to Lie groups}\label{sec:lie}

Recent work (e.g., \cite{christandl2014,BurgisserFGOWW18}) has demonstrated interesting connections between Lie theory and TCS, and the max-entropy framework
fits into this context as well.
In what follows we will briefly discuss the case of $\Omega = \mathcal{P}_k$ and $\mu = \mu_k$, as well as how this can be generalized.
However, a more detailed investigation of the computational aspects of the max-entropy framework in this context
is outside the scope of this paper.

We first describe the case of $\Omega = \mathcal{P}_k$ and $\mu = \mu_k$ in a more general way.
The unitary group $U(n)$ acts on the real vector space of $n \times n$ Hermitian matrices by conjugation.
This group action partitions the vector space into orbits, with $X$ and $Y$ being in the same orbit if and only if they have the eigenvalues.
Given any Hermitian matrix $F$, we denote the orbit corresponding to $F$ by $\mathcal{O}(F)$.

Consider now the matrix $P_k := \diag(1,\ldots,1,0,\ldots,0)$ where $k$ denotes the number of $1$s that appear in the matrix.
Then the orbit $\mathcal{O}(P_k)$ is precisely the set of rank-$k$ projections.
That is, $\mathcal{O}(P_k) = \mathcal{P}_k$, and so the unitarily invariant measure $\mu_k$ on $\mathcal{P}_k$ induces such a measure on $\mathcal{O}(P_k)$.
In fact such a unitarily invariant measure $\mu_F$ exists for any orbit $\mathcal{O}(F)$ allowing us to extend our maximum entropy framework to such orbits of $U(n)$.

This can be generalized beyond the group $U(n)$, to the general setting of a Lie group $G$ and its corresponding Lie algebra $\mathfrak{g}$ upon which $G$ naturally acts.
The primal and dual programs for this generalized setting are the same as in the general case, with one exception.
The element $F \in \mathfrak{g}$ is now an input, and any algorithm for approximating an optimum for $\dual_{\mu_F}(A)$ will necessarily depend on the complexity of $F$.
That said, strong duality holds in this case whenever $A$ is in the interior of $\mathcal{K} = \hull(\mathcal{O}(F)) \subset \mathfrak{g}$, and so the bounding box and the strong counting oracle are the two main results needed to obtain the polynomial-time ellipsoid-based algorithm described in this paper.
As an aside, in this case $\mathcal{K} = \hull(\mathcal{O}(F))$ is called an \emph{orbitope} (e.g., see \cite{sanyal2011,barvinok2008}).

Thus, the following optimization problem is a natural generalization of the (dual) maximum entropy problem considered in this paper.
The $G$-invariant inner product used in the exponent here can be derived from the so-called Killing form of $\mathfrak{g}$ when $G$ is compact (e.g., see \cite{knapp2013}, Corollary 4.26).
\[
    \inf_{Y \in \mathfrak{g}} F_A(Y) = \inf_{Y \in \mathfrak{g}} \left[\langle Y, A \rangle + \log \int_{\mathcal{O}(F)} e^{-\langle Y, X \rangle} d\mu_F(X)\right]
\]
Computability of this problem will be a subject of future work.

\section*{Acknowledgments} The authors would like to thank Sushant Sachdeva, Sebastien Bubeck, and Umesh Vazirani for useful discussions. 
They would also like to thank Simons Institute for the  Theory of Computing where this work was initiated. 
This research was partially supported by NSF CCF-1908347 grant and by Vetenskapsr\r adet.

\newpage

\bibliographystyle{plain}
\bibliography{references} 

\begin{thebibliography}{10}

\bibitem{ALOW17}
Zeyuan {Allen Zhu}, Yuanzhi Li, Rafael Oliveira, and Avi Wigderson.
\newblock Much faster algorithms for matrix scaling.
\newblock In {\em FOCS'17: Proceedings of the 58th Annual IEEE Symposium on
  Foundations of Computer Science}, 2017.

\bibitem{asadpour2017atsp}
Arash Asadpour, Michel~X Goemans, Aleksander M{{a}}dry, Shayan~Oveis Gharan,
  and Amin Saberi.
\newblock An {$O (\log n / \log \log n)$}-approximation algorithm for the
  asymmetric traveling salesman problem.
\newblock {\em Operations Research}, 65(4):1043--1061, 2017.

\bibitem{Band1976}
William Band and James~L. Park.
\newblock New information-theoretic foundations for quantum statistics.
\newblock {\em Foundations of Physics}, 6(3):249--262, Jun 1976.

\bibitem{barvinok2008}
Alexander Barvinok and Isabella Novik.
\newblock A centrally symmetric version of the cyclic polytope.
\newblock {\em Discrete \& Computational Geometry}, 39(1-3):76--99, 2008.

\bibitem{BentalN12}
Aharon Ben-Tal and Arkadi Nemirovski.
\newblock Optimization {III}: Convex analysis, nonlinear programming theory,
  nonlinear programming algorithms.
\newblock Lecture Notes, 2012.

\bibitem{bingham1974}
Christopher Bingham.
\newblock An antipodally symmetric distribution on the sphere.
\newblock {\em Ann. Statist.}, 2(6):1201--1225, 11 1974.

\bibitem{BubeckEldan}
S\'ebastien Bubeck and Ronen Eldan.
\newblock The entropic barrier: a simple and optimal universal self-concordant
  barrier.
\newblock In {\em Proceedings of The 28th Conference on Learning Theory},
  volume~40 of {\em Proceedings of Machine Learning Research}, pages 279--279,
  Paris, France, 03--06 Jul 2015. PMLR.

\bibitem{BurgisserFGOWW18}
Peter B{\"{u}}rgisser, Cole Franks, Ankit Garg, Rafael~Mendes de~Oliveira,
  Michael Walter, and Avi Wigderson.
\newblock Efficient algorithms for tensor scaling, quantum marginals, and
  moment polytopes.
\newblock In {\em 59th {IEEE} Annual Symposium on Foundations of Computer
  Science, {FOCS} 2018, Paris, France, October 7-9, 2018}, pages 883--897,
  2018.

\bibitem{chang1997}
Joseph~T Chang and David Pollard.
\newblock Conditioning as disintegration.
\newblock {\em Statistica Neerlandica}, 51(3):287--317, 1997.

\bibitem{christandl2014}
Matthias Christandl, Brent Doran, Stavros Kousidis, and Michael Walter.
\newblock Eigenvalue distributions of reduced density matrices.
\newblock {\em Communications in mathematical physics}, 332(1):1--52, 2014.

\bibitem{CMTV17}
Michael~B. Cohen, Aleksander Madry, Dimitris Tsipras, and Adrian Vladu.
\newblock Matrix scaling and balancing via box constrained {N}ewton's method
  and interior point methods.
\newblock In {\em FOCS'17: Proceedings of the 58th Annual IEEE Symposium on
  Foundations of Computer Science}, 2017.

\bibitem{cover2012elements}
Thomas~M Cover and Joy~A Thomas.
\newblock {\em Elements of information theory}.
\newblock John Wiley \& Sons, 2012.

\bibitem{DuistermaatH1982}
J.~J. Duistermaat and G.~J. Heckman.
\newblock On the variation in the cohomology of the symplectic form of the
  reduced phase space.
\newblock {\em Inventiones mathematicae}, 69(2):259--268, Jun 1982.

\bibitem{garg2015}
Ankit Garg, Leonid Gurvits, Rafael Oliveira, and Avi Wigderson.
\newblock Operator scaling: theory and applications.
\newblock {\em Foundations of Computational Mathematics}, pages 1--68, 2015.

\bibitem{GoemansW1995}
Michel~X. Goemans and David~P. Williamson.
\newblock Improved approximation algorithms for maximum cut and satisfiability
  problems using semidefinite programming.
\newblock {\em J. ACM}, 42(6):1115--1145, November 1995.

\bibitem{Gromov1990}
M.~Gromov.
\newblock {\em Convex sets and {K}ahler manifolds}, pages 1--38.
\newblock 1990.

\bibitem{guler1996barrier}
Osman G{\"u}ler.
\newblock Barrier functions in interior point methods.
\newblock {\em Mathematics of Operations Research}, 21(4):860--885, 1996.

\bibitem{Guler1997}
Osman. G\"{u}ler.
\newblock On the self-concordance of the universal barrier function.
\newblock {\em SIAM Journal on Optimization}, 7(2):295--303, 1997.

\bibitem{Guler1998}
Osman G{\"u}ler and Levent Tun{\c{c}}el.
\newblock Characterization of the barrier parameter of homogeneous convex
  cones.
\newblock {\em Mathematical Programming}, 81(1):55--76, Mar 1998.

\bibitem{GS02}
Leonid Gurvits and Alex Samorodnitsky.
\newblock A deterministic polynomial-time algorithm for approximating mixed
  discriminant and mixed volume, and a combinatorial corollary.
\newblock {\em {Discrete \& Computational Geometry}}, 27:531--550, 2002.

\bibitem{HarishChandra1957}
Harish-Chandra.
\newblock Differential operators on a semisimple lie algebra.
\newblock {\em American Journal of Mathematics}, 79(1):87--120, 1957.

\bibitem{Hoff2009}
Peter~D. Hoff.
\newblock Simulation of the matrix bingham-von mises-fisher distribution, with
  applications to multivariate and relational data.
\newblock {\em Journal of Computational and Graphical Statistics},
  18(2):438--456, 2009.

\bibitem{horn1954}
Alfred Horn.
\newblock Doubly stochastic matrices and the diagonal of a rotation matrix.
\newblock {\em American Journal of Mathematics}, 76(3):620--630, 1954.

\bibitem{IZ1980}
C.~Itzykson and J.~Zuber.
\newblock The planar approximation. ii.
\newblock {\em Journal of Mathematical Physics}, 21(3):411--421, 1980.

\bibitem{Jaynes1}
Edwin~T. {Jaynes}.
\newblock {Information theory and statistical mechanics}.
\newblock {\em Physical Review}, 106:620--630, May 1957.

\bibitem{Jaynes2}
Edwin~T. {Jaynes}.
\newblock {Information theory and statistical mechanics. II}.
\newblock {\em Physical Review}, 108:171--190, October 1957.

\bibitem{Khatri77}
C.~G. Khatri and K.~V. Mardia.
\newblock The von mises-fisher matrix distribution in orientation statistics.
\newblock {\em Journal of the Royal Statistical Society. Series B
  (Methodological)}, 39(1):95--106, 1977.

\bibitem{Klartag2006}
B.~Klartag.
\newblock On convex perturbations with a bounded isotropic constant.
\newblock {\em Geometric {\&} Functional Analysis GAFA}, 16(6):1274--1290, Dec
  2006.

\bibitem{knapp2013}
Anthony~W Knapp.
\newblock {\em Lie groups beyond an introduction}, volume 140.
\newblock Springer Science \& Business Media, 2013.

\bibitem{nesterov1994interior}
Yurii Nesterov and Arkadii Nemirovskii.
\newblock {\em Interior-point polynomial algorithms in convex programming},
  volume~13.
\newblock Siam, 1994.

\bibitem{owen2013}
Art~B Owen.
\newblock Monte carlo theory, methods and examples.
\newblock {\em Monte Carlo Theory, Methods and Examples. Art Owen}, 2013.

\bibitem{Park1977}
James~L. Park and William Band.
\newblock Rigorous information-theoretic derivation of quantum-statistical
  thermodynamics. i.
\newblock {\em Foundations of Physics}, 7(3):233--244, Apr 1977.

\bibitem{pinasco2012}
Dami{\'a}n Pinasco.
\newblock Lower bounds for norms of products of polynomials via bombieri
  inequality.
\newblock {\em Transactions of the American Mathematical Society},
  364(8):3993--4010, 2012.

\bibitem{sanyal2011}
Raman Sanyal, Frank Sottile, and Bernd Sturmfels.
\newblock Orbitopes.
\newblock {\em Mathematika}, 57(2):275--314, 2011.

\bibitem{schur1923}
Issai Schur.
\newblock Uber eine klasse von mittelbildungen mit anwendungen auf die
  determinantentheorie.
\newblock {\em Sitzungsberichte der Berliner Mathematischen Gesellschaft},
  22(9-20):51, 1923.

\bibitem{SinghV14}
Mohit Singh and Nisheeth~K Vishnoi.
\newblock Entropy, optimization and counting.
\newblock In {\em Proceedings of the 46th Annual ACM Symposium on Theory of
  Computing}, pages 50--59. ACM, 2014.

\bibitem{SlaterEntropy1}
Paul~B. Slater.
\newblock Relations between the barycentric and von neumann entropies of a
  density matrix.
\newblock {\em Physics Letters A}, 159(8):411 -- 414, 1991.

\bibitem{StraszakV19}
Damian Straszak and Nisheeth~K. Vishnoi.
\newblock Maximum entropy distributions: Bit complexity and stability.
\newblock In Alina Beygelzimer and Daniel Hsu, editors, {\em Proceedings of the
  Thirty-Second Conference on Learning Theory}, volume~99 of {\em Proceedings
  of Machine Learning Research}, pages 2861--2891, Phoenix, USA, 25--28 Jun
  2019. PMLR.

\bibitem{Vergne1996}
Mich{\`e}le Vergne.
\newblock Convex polytopes and quantization of symplectic manifolds.
\newblock {\em Proceedings of the National Academy of Sciences},
  93(25):14238--14242, 1996.

\bibitem{von1955mathematical}
J.~von Neumann and R.T. Beyer.
\newblock {\em Mathematical Foundations of Quantum Mechanics}.
\newblock Goldstine Printed Materials. Princeton University Press, 1955.

\end{thebibliography}

\newpage

\appendix

\section{The dual formulation and strong duality}\label{sec:duality}

\subsection{The dual formulation} \label{sect:dualform}

The dual formulation $\dual_\mu(A)$ is given as follows, for $A \in \mathcal{K} = \hull(\Omega)$ and $Y$ in the ambient real inner product space $\R^d$:
\[
    \inf_Y F_A(Y) := \inf_Y \left[\langle Y, A \rangle + \log \int_\Omega e^{-\langle Y, X \rangle} d\mu(X)\right].
\]
In $\dual_\mu(A)$ we also assume a linear constraint on $Y$: $\mathcal{L}(Y) = 0$ where $X \in \Omega$ is such that $\mathcal{L}(X) = B$. We ignore this constraint for now, and deal with it in Lemma \ref{lem:linear_constraints} below.

To prove the form of the dual formulation given above, we write:
\[
    L(\nu, Y, z) = -\int \nu(X) \log\left(\nu(X)\right) d\mu(X) + \langle Y, A \rangle - \int \langle Y, X \rangle \nu(X) d\mu(X) + z - z \int \nu(X) d\mu(X).
\]
We now want to compute derivatives to connect this with the dual program. For any $f \in L^2(\mu)$, we compute:
\[
\begin{split}
    0 &= \left.\partial_t\right|_{t=0} L(\nu + t f, Y, z) \\
        &= -\int f(X) \log\left(\nu(X)\right) d\mu(X) - \int f(X) d\mu(X) - \int \langle Y, X \rangle f(X) d\mu(X) - z\int f(X) d\mu(X) \\
        &= -\int f(X) \left[\log\left(\nu(X)\right) + 1 + \langle Y, X \rangle + z\right] d\mu(X) \\
        &= -\left\langle f, \left[\log\left(\nu(X)\right) + 1 + \langle Y, X \rangle + z\right] \right\rangle.
\end{split}
\]
This immediately implies (almost everywhere, and we will suppress this caveat from now on):
\[
    \log\left(\nu(X)\right) + 1 + \langle Y, X \rangle + z = 0.
\]
This then gives
\[
    \nu(X) = \exp(-1 - z - \langle Y, X \rangle).
\]
and therefore:
\[
    -\int \left[\log\left(\nu(X)\right) + \langle Y, X \rangle + z\right] \nu(X) d\mu(X) = \int \nu(X) d\mu(X) = \int \exp(-1 - z - \langle Y, X \rangle) d\mu(X).
\]
Combining these observations:
\[
\begin{split}
    L(\nu, Y, z) &= \int \exp(-1 - z - \langle Y, X \rangle) d\mu(X) + \langle Y, A \rangle + z \\
        &= z + \langle Y, A \rangle + e^{-1-z} \int e^{-\langle Y, X \rangle} d\mu(X).
\end{split}
\]
Now, we compute:
\[
\begin{split}
    0 &= \partial_z L(\nu, Y, z) = 1 - e^{-1-z} \int e^{-\langle Y, X \rangle} d\mu(X) \\
        &\implies \int e^{-\langle Y, X \rangle} d\mu(X) = e^{1+z} \\
        &\implies z = \log\left(\int e^{-\langle Y, X \rangle} d\mu(X) \right) - 1.
\end{split}
\]
And finally
\[
    \inf_Y L(\nu, Y, z) = \inf_Y \left[ \langle Y, A \rangle + \log\left( \int e^{-\langle Y, X \rangle} d\mu(X) \right)\right],
\]
where $Y$ ranges over $\R^d$.

\begin{lemma} \label{lem:linear_constraints}
    Let $\mathcal{L}(X) = B$ be a set of linear constraints satisfied by all $X \in \Omega$. There exists an optimal solution $Y^\star$ to the dual program $\dual_\mu(A)$ if and only if there exists a solution $Z^\star$ to $\dual_\mu(A)$ restricted to $\mathcal{L}(Z^\star) = 0$.
\end{lemma}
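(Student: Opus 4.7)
\begin{proofof}{Proposal}
The plan is to show that the objective $F_A$ is invariant under translations of $Y$ by any direction whose inner product with elements of $\Omega$ is constant, and to identify that subspace of invariance as precisely the orthogonal complement of $V_{\mathcal{L}} = \{Z : \mathcal{L}(Z) = 0\}$. Once this invariance is established, the equivalence of the two optimization problems follows by projecting any candidate solution onto $V_{\mathcal{L}}$.

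First, I would realize the linear constraint $\mathcal{L}(X) = B$ as a finite collection of affine conditions $\langle W_i, X \rangle = b_i$ for $i = 1,\ldots,m$, where the $W_i$ are elements of the ambient Hilbert space (via Riesz representation). The subspace $V_{\mathcal{L}}$ is then exactly the orthogonal complement of $\mathrm{span}(W_1,\ldots,W_m)$. Note that since $A \in \hull(\Omega)$ and $\mathcal{L}$ is affine-linear, we also have $\mathcal{L}(A) = B$.

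Next I would verify the key invariance: for any $W = \sum_i c_i W_i \in V_{\mathcal{L}}^\perp$, the quantity $\langle W, X \rangle = \sum_i c_i b_i =: c$ is constant for all $X \in \Omega$ and also equals $\langle W, A \rangle$. Hence
\[
    F_A(Y + W) = \langle Y+W, A\rangle + \log \int_\Omega e^{-\langle Y+W, X\rangle} d\mu(X) = \langle Y,A\rangle + c + \log\left(e^{-c} \int_\Omega e^{-\langle Y, X\rangle} d\mu(X)\right) = F_A(Y).
\]
Thus $F_A$ descends to a well-defined function on the quotient $V / V_{\mathcal{L}}^\perp$, which we identify with $V_{\mathcal{L}}$ via orthogonal projection.

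Finally I would deduce both directions of the equivalence. Given an optimal $Y^\star$ for the unrestricted dual, decompose $Y^\star = Z^\star + W^\star$ with $Z^\star \in V_{\mathcal{L}}$ and $W^\star \in V_{\mathcal{L}}^\perp$; then by invariance $F_A(Z^\star) = F_A(Y^\star)$, and $Z^\star$ satisfies $\mathcal{L}(Z^\star) = 0$ and inherits optimality. Conversely, given an optimum $Z^\star$ of the restricted problem, for any candidate $Y$ of the unrestricted problem write $Y = Y_1 + Y_2$ with $Y_1 \in V_{\mathcal{L}}$; then $F_A(Y) = F_A(Y_1) \geq F_A(Z^\star)$, so $Z^\star$ is also optimal unrestrictedly. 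The only subtlety worth care is verifying that the $W_i$ can be chosen to give an orthogonal decomposition inside the real Hilbert space $V$ (which is immediate since $\mathcal{L}$ is given as a finite list of linear equations), so no genuine obstacle arises; this is ultimately a clean consequence of invariance of the Gibbs-type objective under constant shifts of the linear functional $X \mapsto \langle Y, X \rangle$ on $\Omega$.
\end{proofof}
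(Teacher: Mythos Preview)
Your proposal is correct and follows essentially the same approach as the paper: orthogonally decompose $Y = Z + Z^\perp$ with $Z \in V_{\mathcal{L}}$, observe that $\langle Z^\perp, X\rangle$ is constant on $\Omega$ (and equals $\langle Z^\perp, A\rangle$), and conclude $F_A(Y) = F_A(Z)$. The paper's write-up is slightly terser, using the form $F_A(Y) = \log \int e^{-\langle Y, X-A\rangle} d\mu(X)$ to make the cancellation immediate, but the argument is the same.
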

\begin{proof}
    For any $Y$, consider the decomposition $Y = Z + Z^\perp$ where $\mathcal{L}(Z) = 0$ and $\langle Z^\perp, Y' \rangle$ for all $Y'$ such that $\mathcal{L}(Y') = 0$. Note that $A \in \Omega$ implies $\mathcal{L}(X-A) = 0$ for all $X \in \Omega$, and so $\langle Z^\perp, X-A \rangle = 0$. Letting $F_A(Y)$ denote the dual objective, this implies:
    \[
        F_A(Y) = \log \int e^{-\langle Y, X-A \rangle} d\mu(X) = \log \int e^{-\langle Z, X-A \rangle} d\mu(X) = F_A(Z).
    \]
    This completes the proof.
\end{proof}

\subsection{Strong duality under Slater's condition}\label{sect:sd_proof}

We now prove a general result above obtaining strong duality from a Slater-type condition. In the next section, we show that this Slater-type condition holds for the max-entropy program in general. We also give more concrete proofs for $\mu_k$ on $\mathcal{P}_k$ and $\mu$ on $\mathcal{V}_1$ in the following section.

\begin{proposition}[\bf Strong duality under Slater's condition]
    Let $V$ be a real inner product space such that $\Omega \subseteq V$. If for any $A$ in the relative interior of $\mathcal{K} = \hull(\Omega)$ there exists $\nu_A$ in the relative interior of the constraints of $\primal_\mu(A)$, then we have strong duality for any $A$ in the relative interior of $\mathcal{K}$.
\end{proposition}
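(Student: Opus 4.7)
The plan is a standard perturbation-function argument from convex analysis. Weak duality $p(A) \le \inf_Y F_A(Y)$, where $p(A)$ denotes the optimal value of $\primal_\mu(A)$, already follows from the Lagrangian construction in Section~\ref{sect:dualform}, so it suffices to exhibit a dual feasible $Y^\star$ with $F_A(Y^\star) \le p(A)$.

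First I would define the value function $p \colon V \to [-\infty, +\infty]$ by letting $p(A')$ be the optimal value of $\primal_\mu(A')$, with the convention $p(A') = -\infty$ when the primal is infeasible. Concavity of $p$ is immediate: if $\nu_1, \nu_2$ are feasible for marginals $A_1, A_2$, the mixture $\lambda \nu_1 + (1-\lambda)\nu_2$ is feasible for marginal $\lambda A_1 + (1-\lambda)A_2$, and since $\nu \mapsto -\int \nu \log \nu \, d\mu$ is concave in $\nu$, taking suprema yields concavity of $p$ on $\mathcal{K}$. Next, Slater's condition produces $\nu_A$ strictly positive and bounded on $\Omega$ with marginal $A$, so $p(A) > -\infty$. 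For nearby $A' = A + \delta$ with $\delta$ in the ambient subspace $V_\mathcal{L}$, I would build a feasible density $\nu_{A'} = \nu_A + \eta_\delta$, where $\eta_\delta$ is a small signed perturbation supported where $\nu_A > 0$, with $\int \eta_\delta \, d\mu = 0$ and $\int X\, \eta_\delta(X) \, d\mu(X) = \delta$; the positive lower bound on $\nu_A$ keeps $\nu_{A'} \ge 0$ for $\|\delta\|$ sufficiently small. An upper bound on $p$ is elementary (e.g.\ by Jensen), so $p$ is finite on a relative neighborhood of $A$.

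A concave function that is finite on a relative neighborhood admits a supergradient at the interior point, so working in $V_\mathcal{L}$ where $p$ is nontrivial, finite-dimensional convex analysis produces $Y^\star \in V_\mathcal{L}$ with
\[
    p(A') \le p(A) + \langle Y^\star, A' - A \rangle \quad \text{for all } A' \in V.
\]
To convert this into the dual bound I invoke the Gibbs variational principle. For any probability density $\nu \ge 0$ with $\int \nu \, d\mu = 1$ and marginal $A' := \int X \nu \, d\mu$, the definition of $p$ gives $-\int \nu \log \nu \, d\mu \le p(A')$; substituting the supergradient inequality and rearranging yields
\[
    -\int \nu \log \nu \, d\mu - \int \langle Y^\star, X \rangle \, \nu(X) \, d\mu(X) \le p(A) - \langle Y^\star, A \rangle
\]
for every such $\nu$. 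Taking the supremum of the left side over all probability densities (now unconstrained by marginal) and applying the Gibbs identity $\sup_\nu \bigl[-\int \nu \log \nu \, d\mu - \int f \nu \, d\mu\bigr] = \log \int e^{-f} \, d\mu$ produces $F_A(Y^\star) \le p(A)$. Together with weak duality this forces equality, and the extremizer in the Gibbs step is precisely $\nu^\star(X) \propto e^{-\langle Y^\star, X \rangle}$, delivering the form claimed by the theorem.

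The main obstacle is constructing the marginal perturbation $\eta_\delta$: one needs to realize every small direction $\delta \in V_\mathcal{L}$ by a signed density supported on $\{X : \nu_A(X) > 0\}$ without leaving the constraint set. This is where Slater's hypothesis does the essential work, and handling it cleanly in the infinite-dimensional setting may require restricting to bounded densities or a suitable truncation argument. Once local finiteness of $p$ is in hand, the remainder of the argument is a streamlined packaging of the Legendre--Fenchel conjugacy between Shannon entropy and the log-partition function.
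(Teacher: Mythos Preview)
Your proof is correct and takes a genuinely different route from the paper's. The paper argues via a separating hyperplane in the extended space $W \times \R \times \R$: it forms the convex set $\mathcal{W}$ of achievable triples $(E(\nu), t)$ with $F(\nu) \le t$, separates the point $(0,0,p^\star)$ from it, and then uses Slater's condition specifically to rule out a ``vertical'' separating hyperplane (i.e., to force the coefficient on the $t$-coordinate to be strictly positive), after which normalization yields a dual-feasible pair achieving $p^\star$. Your argument instead works with the value function $p(A')$ directly, extracts a supergradient at $A$ from local finiteness, and closes with the Gibbs/Donsker--Varadhan variational identity to convert the supergradient inequality into $F_A(Y^\star) \le p(A)$. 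Your route more visibly exploits the entropy/log-partition conjugacy, and it delivers the form $\nu^\star(X) \propto e^{-\langle Y^\star, X\rangle}$ for free as the extremizer in the Gibbs step, whereas the paper's separating-hyperplane argument establishes only $d^\star = p^\star$ and leaves the form of the optimizer to be read off from the earlier Lagrangian computation.

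One simplification: the obstacle you flag at the end (constructing the signed perturbation $\eta_\delta$ to realize nearby marginals) is not actually needed. The hypothesis of the proposition asserts Slater's condition at \emph{every} relative-interior point $A'$, not just at $A$, so $p(A') > -\infty$ throughout a relative neighborhood of $A$ is immediate and no perturbation construction is required for the supergradient to exist.
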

\begin{proof}
    Without loss of generality, we may assume that $\Omega$ and $A$ have been translated such that $\mathcal{K} = \hull(\Omega)$ is full dimensional within a subspace of $V$, and that $A$ also lies in this subspace. We will let $W$ refer to this subspace, and now $\mathcal{K}$ has nonempty interior within $W$.

    We now follow some standard proofs of strong duality. Fix $A$ in the interior of $\mathcal{K}$, and let $\mathcal{D} := \{\nu: \supp(\mu) \to \R_+\}$. Then we only have linear equality constraints, which we denote collectively by:
    \[
        E(\nu) := \left(\int X\nu(X) d\mu(X) - A, \int \nu(X) d\mu(X) - 1\right) \in W \times \R.
    \]
    We also negate the objective function, denoting this negated function by $F$, and $p^\star$ denote its optimal value. Note that $p^\star$ is finite by the assumptions of the theorem. Now define:
    \[
        \mathcal{W} := \{(B,y,t) ~:~ F(\nu) \leq t \text{ and } E(\nu) = (B,y) \text{ for some } \nu \in \mathcal{D}\} \subseteq W \times \R \times \R.
    \]
    Note that this set is convex. Further, $(0,0,p^\star)$ is either on the boundary of $\mathcal{W}$ or outside of $\mathcal{W}$ by optimality. Hence, we can find a separating hyperplane, and therefore there exists some $0 \neq (C_0,z_0,s_0) \in W \times \R \times \R$ such that:
    \[
        \langle (C_0,z_0,s_0), (B,y,t) \rangle \geq s \cdot p^\star \qquad \text{for all } (B,y,t) \in \mathcal{V}.
    \]
    Note that $s_0 \geq 0$, or else we can pick $t$ very large to get a contradiction. We now demonstrate that in fact $s_0 > 0$. To get a contradiction, we suppose $s_0=0$ which gives:
    \[
        \langle (C_0,z_0), (B,y) \rangle \geq 0 \qquad \text{for all } (B,y,t) \in \mathcal{W}.
    \]
    For any $B = A + Z$ where $Z \in B_\epsilon(0) \subset W$, we have an interior solution $\nu_B$ by assumption. Therefore for all such $Z$:
    \[
        \langle C_0, Z \rangle = \langle C_0, Z \rangle + z_0 \cdot 0 = \langle (C_0,z_0), (B,y) \rangle \geq 0.
    \]
    This is only possible if $C_0 = 0$. By scaling $\nu_A$, we also see that $z_0 = 0$. This contradicts $(C_0,z_0,s_0) \neq 0$, and therefore $s_0 > 0$.
    
    To finish the proof, we define $C_0' := \frac{C_0}{s_0}$ and $z_0' := \frac{z_0}{s_0}$. We can then write:
    \[
        \langle (C_0',z_0',1), (B,y,t) \rangle \geq p^\star \qquad \text{for all } (B,y,t) \in \mathcal{W}.
    \]
    Note that, for the dual objective function $G(C,z) := \inf_{\nu \in \mathcal{D}} \left[F(\nu) + \langle (C,z), E(\nu) \rangle\right]$, we have:
    \[
        d^\star := \sup_{C,z} G(C,z) \geq G(C_0',z_0') = \inf_{(B,y,t) \in \mathcal{W}} \langle (C_0',z_0',1), (B,y,t) \rangle \geq p^\star.
    \]
    On the other hand, we have for all $C,z$ and $\nu_A$ satisfying constraints such that $F(\nu_A)$ is near optimal:
    \[
        G(C,z) = \inf_{\nu \in \mathcal{D}} [F(\nu) + \langle (C,z), E(\nu) \rangle] \leq F(\nu_A) + \langle (C,z), E(\nu_A) \rangle = p^\star + \epsilon + 0.
    \]
    Applying $\sup$ to $G(C,z)$ and letting $\epsilon \to 0$ implies $d^\star \leq p^\star$.
\end{proof}

\subsection{Slater's condition holds for general $\Omega$ and $\mu$} \label{sec:general_sd}

The dual formulation given in Section \ref{sect:dualform} implies a succinct representation of the optimal density $\nu$, given that we have strong duality. Here we prove strong duality for general $\Omega$ and $\mu$ by proving Slater's condition. The main thing needed for this is existence of an optimal $Y^\star$ for $\dual_\mu(A)$, given interiority of $A$. We prove this now.

\begin{lemma}[Existence of dual optimum] \label{lem:dual_optimum}
    If $A$ is in the interior of $\mathcal{K} = \hull(\Omega)$, then there exists $Y^\star$ which optimizes the dual program $\dual_\mu(A)$.
\end{lemma}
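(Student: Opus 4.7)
The plan is to prove existence via standard convex analysis: show that $F_A$ restricted to $V_\mathcal{L}$ is convex, lower semicontinuous, and coercive, so that its infimum is attained at some $Y^\star \in V_\mathcal{L}$.

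First, by Lemma \ref{lem:linear_constraints} it suffices to search over $Y \in V_\mathcal{L}$. Convexity of $F_A$ there is standard from H\"older's inequality (equivalently, from convexity of the log-Laplace transform of $\mu$), and lower semicontinuity follows from Fatou's lemma applied to $e^{-\langle Y_n, X\rangle}$ along a convergent sequence $Y_n \to Y$. These pieces are routine and I would dispatch them briefly.

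The content of the proof lies in establishing coercivity: $F_A(Y)\to +\infty$ as $\|Y\|\to\infty$ in $V_\mathcal{L}$. I would argue by contradiction and compactness. Suppose $Y_n \in V_\mathcal{L}$ has $r_n := \|Y_n\| \to \infty$ while $F_A(Y_n)$ remains bounded. Write $Y_n = r_n U_n$ with $\|U_n\|=1$; by compactness of the unit sphere in the finite-dimensional space $V_\mathcal{L}$, pass to a subsequence so that $U_n \to U$, still a unit vector in $V_\mathcal{L}$. Since $A$ lies in the relative interior of $\mathcal{K}=\hull(\Omega)$ inside the affine space $V_{\mathcal{L},B}$, there exists $\eta > 0$ with $A - \eta U \in \mathcal{K}$. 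Writing $A-\eta U = \sum_i \lambda_i X_i$ as a convex combination of points of $\Omega$ gives
\[
    \sum_i \lambda_i \langle -U, X_i - A \rangle \;=\; \langle -U,\, -\eta U \rangle \;=\; \eta,
\]
so some $X_0 \in \Omega$ satisfies $\langle -U, X_0-A\rangle \geq \eta$. By joint continuity in $(X,U)$ there is an open neighborhood $N \subseteq \Omega$ of $X_0$ and an index $n_0$ with $\langle -U_n, X-A\rangle \geq \eta/3$ for all $X\in N$ and $n\geq n_0$. Because $\supp\mu = \Omega$ we have $\mu(N) > 0$, and then
\[
    F_A(Y_n) \;\geq\; \log \int_N e^{r_n \langle -U_n, X-A\rangle} d\mu(X) \;\geq\; \log \mu(N) + \frac{\eta\, r_n}{3} \;\longrightarrow\; +\infty,
\]
contradicting the boundedness of $F_A(Y_n)$. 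This step is the heart of the argument: it converts the geometric interiority of $A$ into an exponentially growing lower bound on the dual integral in every direction of $V_\mathcal{L}$.

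The main obstacle I anticipate is that $F_A$ may equal $+\infty$ on part of $V_\mathcal{L}$ (for instance on non-positive-definite $Y$ in the $\mathcal{V}_1$ Lebesgue case). To handle this I would restrict attention to the (convex) effective domain $\{Y \in V_\mathcal{L} : F_A(Y)<\infty\}$; any minimizing sequence eventually lies in this set, and the coercivity estimate combined with lower semicontinuity then forces the infimum to be attained at an interior point of this effective domain (where $F_A$ is continuous by standard convex-analysis results). Note that the argument relies only on $A$ being in the relative interior and on $\mu$ having full support on $\Omega$, and crucially does not require the balance property of Definition \ref{def:balanced}; balance is only needed to get a \emph{quantitative} bound on $\|Y^\star\|$ as in Theorem \ref{thm:boundingbox}.
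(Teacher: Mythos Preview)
Your proof is correct and follows essentially the same approach as the paper: both establish coercivity of $F_A$ on $V_\mathcal{L}$ by showing that interiority of $A$ guarantees a positive amount of $\mu$-mass in the half-space $\{X : \langle -Y, X-A\rangle \geq c\|Y\|\}$ for every direction, and then conclude that a minimizer exists. The only cosmetic difference is that the paper runs a compactness argument over the space of half-spaces to obtain a uniform $\epsilon>0$ (placing $A$ in the $(\eta/2,\epsilon)$-interior of $\mu$ and invoking the two-parameter bound of Lemma~\ref{lem:twoparam_bounding} directly), whereas you argue by contradiction along a sequence of directions and use compactness of the unit sphere; your extra care about lower semicontinuity and the effective domain is a nice addition that the paper leaves implicit.
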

\begin{proof}
    By the previous lemma, we may assume that $\mathcal{K}$ is of full dimension in its ambient inner product space $V$. That said, note now that there is no closed half-space $H \subset V$ such that $A \in H$ and $\mu(H \cap \Omega) = 0$. Otherwise this would imply that $A$ is not in the interior of $\mathcal{K}$, since the interior of $H$ would not intersect $\Omega$, the support of $\mu$. Now suppose $A$ is in the $\eta$-interior of $\mathcal{K}$. Hence, for any $B \in B_{\eta/2}(A)$ and any half-space $H$ with $B$ on the boundary, there is an $\epsilon_{B,H} > 0$ such that $\mu(H \cap \Omega) = \epsilon_{B,H}$.
    
    We now prove that $\epsilon := \inf \epsilon_{B,H} > 0$. If not, then there is some sequence $(B_i,H_i)$ for which $\epsilon_{B_i,H_i} \to 0$. By identifying half-spaces about a point with the unit sphere, we have that the set of all possible $(B,H)$ pairs is compact. Thus we can assume $(B_i,H_i)$ is convergent, with limit $(B_0,H_0)$. Since every point $X$ of the interior of $H_0$ is eventually in the interior of $H_i$, we have that the measure in a small ball around any such $X$ is 0. Therefore the interior of $H_0$ does not intersect $\Omega$, the support of $\mu$. This implies $B_0$ is not in the interior of $\mathcal{K}$, a contradiction.
    
    Therefore, $\epsilon := \inf \epsilon_{B,H} > 0$. This in fact implies that $A$ is in the $(\frac{\eta}{2}, \epsilon)$-interior of $\mu$ (see Definition \ref{def:twoparam_interior}).

    Using the arguments of Lemma \ref{lem:twoparam_bounding}, for any $Y \in V$ we have:
    \[
    \begin{split}
        \epsilon &\leq \mu(\{X \in \Omega ~|~ \langle -Y, X-(A - (\eta/2) \cdot Y/\|Y\|)\rangle \geq 0\}) \\
            &= \mu(\{X \in \Omega ~|~ \langle -Y, X-A \rangle \geq (\eta/2) \cdot \|Y\|\}).
    \end{split}
    \]
    This implies:
    \[
        F_A(Y) = \log \int e^{\langle -Y, X-A \rangle} d\mu(X) \geq \log\left(\epsilon \cdot e^{(\eta/2) \cdot \|Y\|}\right) = \frac{\eta}{2} \|Y\| + \log(\epsilon).
    \]
    Hence, $\|Y\| > R$ implies a lower bound on the dual objective $F_A(Y)$, which goes to infinity as $R \to \infty$. Therefore $F_A$ must be minimized at some bounded point $Y^\star \in V$.
\end{proof}

\noindent
This lemma then implies Slater's condition in general.

\begin{theorem}[Strong duality] \label{thm:strongduality_app}
    Fix any $\mu$ with support $\Omega$ in a real Hilbert space $V$. If $A$ is in the interior of $\mathcal{K} = \hull(\Omega)$, then strong duality holds for $A$. In particular, the optimum density $\nu^\star$ is of the form:
    \[
        \nu^\star(X) \propto e^{-\langle Y^\star, X \rangle}
    \]
\end{theorem}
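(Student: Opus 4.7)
The plan is to construct the primal optimum directly from the dual optimum, bypassing the Slater-condition route. By Lemma \ref{lem:dual_optimum}, since $A$ lies in the interior of $\mathcal{K}$, there exists a minimizer $Y^\star \in V$ of the dual objective $F_A$; in particular the normalizing constant
\[
    C := \int_\Omega e^{-\langle Y^\star, X\rangle}\, d\mu(X)
\]
is finite. Define the candidate primal density
\[
    \nu^\star(X) := \frac{e^{-\langle Y^\star, X\rangle}}{C},
\]
which is strictly positive on $\Omega$ and integrates to $1$ by construction.

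Next, I would verify that $\nu^\star$ meets the marginal constraint using first-order stationarity of $F_A$ at $Y^\star$. Differentiating along an arbitrary direction $W \in V$ (the exchange of derivative and integral being justified by an exponential dominator on a neighborhood of $Y^\star$) yields
\[
    0 = \left.\frac{d}{dt}\right|_{t=0} F_A(Y^\star + tW) = \left\langle W,\; A - \int_\Omega X\, \nu^\star(X)\, d\mu(X) \right\rangle,
\]
and since $W$ is arbitrary, $\int_\Omega X\, \nu^\star(X)\, d\mu(X) = A$, so $\nu^\star$ is primal-feasible. A direct computation of the primal objective at $\nu^\star$ then gives
\[
    -\int_\Omega \nu^\star \log \nu^\star\, d\mu = \int_\Omega \nu^\star \bigl(\langle Y^\star, X\rangle + \log C\bigr)\, d\mu = \langle Y^\star, A\rangle + \log C = F_A(Y^\star),
\]
so the primal value at $\nu^\star$ matches the dual value at $Y^\star$.

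To finish, I would invoke Gibbs' inequality (non-negativity of KL divergence) to certify optimality. For any primal-feasible density $\nu$, the inequality $\int_\Omega \nu \log(\nu / \nu^\star)\, d\mu \geq 0$ combined with the constraints $\int_\Omega \nu\, d\mu = 1$ and $\int_\Omega X\, \nu\, d\mu = A$ produces
\[
    -\int_\Omega \nu \log \nu\, d\mu \;\leq\; -\int_\Omega \nu \log \nu^\star\, d\mu \;=\; \langle Y^\star, A\rangle + \log C \;=\; F_A(Y^\star).
\]
Thus $\nu^\star$ is a primal optimum, its objective value equals the dual optimum $F_A(Y^\star)$, and strong duality together with the claimed exponential form of the optimum follows.

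The principal technical obstacle is justifying the interchange of derivative and integral used to obtain the stationarity equation above: this requires a dominating integrable function in a neighborhood of $Y^\star$, which is routine when $\Omega$ is bounded (as in all our examples of interest) but demands some care for more general $\Omega$. All other steps—Gibbs' inequality, KL non-negativity, and the elementary log-density manipulations—are standard and cause no difficulty.
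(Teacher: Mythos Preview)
Your argument is correct, and the first half---obtaining $Y^\star$ from Lemma~\ref{lem:dual_optimum} and using first-order stationarity to show that $\nu^\star(X)\propto e^{-\langle Y^\star,X\rangle}$ satisfies the marginal constraint---is exactly what the paper does. The difference is in how the proof is closed off. The paper, having exhibited a strictly positive feasible density $\nu^\star$, invokes the separately proved Slater-condition proposition (the separating-hyperplane argument of Section~\ref{sect:sd_proof}) to conclude strong duality, and then reads off the exponential form of the optimizer from the Lagrangian analysis. You instead compute the primal value at $\nu^\star$ directly, match it to $F_A(Y^\star)$, and use Gibbs' inequality to certify that no other feasible $\nu$ can do better. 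Your route is more self-contained and elementary: it avoids the Slater machinery entirely and simultaneously identifies the primal optimizer rather than merely asserting its form. The paper's route, on the other hand, cleanly separates the abstract duality statement from the specific construction, which is arguably tidier if one wants to reuse the Slater proposition elsewhere. The derivative-integral interchange you flag is needed in both proofs and is treated just as informally in the paper.
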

\begin{proof}
    Let $Y^\star$ be an optimum for $\dual_\mu(A)$, by Lemma \ref{lem:dual_optimum}. Then:
    \[
        0 = \nabla F_A(Y^\star) = \nabla \left[\langle Y, A \rangle + \log \int e^{-\langle Y, X \rangle} d\mu(X)\right]_{Y=Y^\star} = A - \frac{\int Xe^{-\langle Y^\star, X \rangle} d\mu(X)}{\int e^{-\langle Y^\star, X \rangle} d\mu(X)}.
    \]
    This precisely says that $A$ is the marginals matrix of the measure $\nu^\star(X) \propto e^{-\langle Y^\star, X \rangle}$. Therefore strong duality holds, since $\nu^\star$ is in the relative interior of the constraints of $\primal_\mu(A)$ for any $A$ in the interior of $\hull(\Omega)$.
\end{proof}

\subsection{Slater's condition for $\mathcal{P}_k$ and $\mathcal{V}_1$}\label{sec:interior}

We now give more concrete and direct arguments for Slater's condition in the specific situations of $\mathcal{P}_k$ and $\mathcal{V}_1$ that we consider in this paper.

\begin{proposition}[Slater's condition for $\mathcal{P}_k$] \label{prop:slaters_condition_omega_k}
    Let $A$ be in the interior of $\hull(\mathcal{P}_k)$. Then there is a density function on $\mathcal{P}_k$ which is in the interior of the constraints of $\primal_k(A)$.
\end{proposition}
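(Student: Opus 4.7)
The plan is to show that the set
\[
    S := \left\{ \textstyle\int_{\mathcal{P}_k} X\,\nu(X)\,d\mu_k(X) \;:\; \nu \in C(\mathcal{P}_k),\ \nu > 0,\ \int_{\mathcal{P}_k} \nu\,d\mu_k = 1 \right\}
\]
of marginals of strictly positive continuous densities on $\mathcal{P}_k$ coincides with the relative interior of $\hull(\mathcal{P}_k)$ inside the trace-$k$ affine space. Once that is established, any $A$ in the relative interior is realized as the marginal of such a $\nu$, which is precisely a feasible solution in the relative interior of the constraints of $\primal_k(A)$. I will verify three properties of $S$: it is convex, relatively open, and dense in $\hull(\mathcal{P}_k)$.

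Convexity is immediate, since strict positivity is preserved under convex combinations and the marginal map $\nu \mapsto \int X\,\nu\,d\mu_k$ is linear. Density follows from Carath\'eodory's theorem: any $A \in \hull(\mathcal{P}_k)$ can be written as $\sum_i q_i X_i$ with $X_i \in \mathcal{P}_k$ and nonnegative $q_i$ summing to $1$. Approximating each Dirac mass $\delta_{X_i}$ by a continuous bump density $\phi_i^{(t)}$ supported in a small $\mu_k$-neighborhood of $X_i$ and mixing with the uniform density $\mathbf{1}$ via
\[
    \nu_{t,\eta} := (1-\eta) \sum_i q_i\,\phi_i^{(t)} + \eta\cdot\mathbf{1}
\]
produces a strictly positive continuous density (the $\eta\cdot\mathbf{1}$ term guarantees $\nu_{t,\eta} > 0$ everywhere) whose marginal $(1-\eta)\sum_i q_i \tilde{X}_i^{(t)} + \eta\cdot\tfrac{k}{n} I$ converges to $A$ as $t,\eta \to 0$.

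The heart of the proof is relative openness. Given $\nu_0 \in S$ with marginal $A_0$, consider the one-parameter family of perturbations
\[
    \nu_Y(X) := \nu_0(X)\,\bigl(1 + \langle Y, X - A_0 \rangle\bigr)
\]
for $Y$ in a small ball of traceless Hermitian matrices. Compactness of $\mathcal{P}_k$ bounds $\|X - A_0\|$, so $\nu_Y > 0$ when $\|Y\|$ is small; the centering gives $\int \nu_Y\,d\mu_k = 1$; and a direct computation shows the marginal of $\nu_Y$ equals $A_0 + K_{\nu_0}(Y)$, where $K_{\nu_0}(Y) := \int (X - A_0)\,\langle Y, X - A_0\rangle\,\nu_0(X)\,d\mu_k(X)$ is the covariance operator of $\nu_0$. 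This operator preserves the traceless Hermitian matrices (since $\Tr(X - A_0) = 0$ on $\mathcal{P}_k$), and $\langle Y, K_{\nu_0}(Y)\rangle$ equals the variance of $\langle Y, X\rangle$ under $\nu_0$. That variance is strictly positive for every nonzero traceless $Y$: since $\nu_0 > 0$ is continuous on a connected compact set, vanishing variance would force $\langle Y, X\rangle$ to be constant on $\mathcal{P}_k$, and hence on $\hull(\mathcal{P}_k)$, contradicting full-dimensionality of $\hull(\mathcal{P}_k)$ in the trace-$k$ affine space. The inverse function theorem then makes $Y \mapsto A_0 + K_{\nu_0}(Y)$ a local diffeomorphism, so a relative neighborhood of $A_0$ lies in $S$.

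Combining the three properties, $S$ is a convex, relatively open subset of the relative interior of $\hull(\mathcal{P}_k)$ whose closure equals $\hull(\mathcal{P}_k)$; standard convex analysis (the relative interior of any convex set equals the relative interior of its closure) then forces $S$ to coincide with the relative interior, proving the proposition. The main technical obstacle is the strict positive-definiteness of $K_{\nu_0}$ on traceless Hermitian matrices; this reduces to full-dimensionality of $\hull(\mathcal{P}_k)$ in the trace-$k$ affine space, which itself follows from the Schur--Horn picture together with unitary invariance, since the $\binom{n}{k}$ diagonal rank-$k$ projections affinely span the diagonal trace-$k$ slice and their unitary conjugates generate the remaining off-diagonal directions.
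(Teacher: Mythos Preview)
Your argument is correct and takes a genuinely different route from the paper. The paper proceeds constructively: it exploits the $U(k)\times U(n-k)$ symmetry of $Y=\alpha P^\perp-\beta P$ to force the marginal of the exponential density $e^{-\langle Y,X\rangle}$ to lie on the one-parameter family $B_\epsilon=(1-\epsilon)P+\tfrac{k\epsilon}{n}I$, then uses unitary conjugates and finite convex combinations of such exponential densities to hit every point of $(1-\epsilon)\hull(\mathcal{P}_k)+\tfrac{k\epsilon}{n}I$, and finally lets $\epsilon\to 0$. Your argument is instead topological: you show the marginal set $S$ is convex, dense, and relatively open, the last via the linear perturbation $\nu_Y=\nu_0(1+\langle Y,X-A_0\rangle)$ and positive-definiteness of the covariance operator $K_{\nu_0}$ on traceless Hermitians. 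The paper's approach has the advantage of producing densities that are (convex combinations of) exponentials, i.e., already in the form that the dual optimum takes, which meshes directly with the max-entropy machinery. Your approach has the advantage of generality: it uses only compactness of $\Omega$, full support of $\mu$, and full-dimensionality of $\hull(\Omega)$ in its affine span, so it would apply essentially verbatim to other compact $\Omega$ without any transitive group action. One small remark: since $Y\mapsto K_{\nu_0}(Y)$ is linear, invoking the inverse function theorem is overkill; invertibility of the self-adjoint positive-definite operator $K_{\nu_0}$ already gives the local surjectivity you need.
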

\begin{proof}
    Define $P := \diag(1,\ldots,1,0,\ldots,0) \in \mathcal{P}_k$ and $P^\perp := I - P$. $Y = \alpha P^\perp - \beta P$ for some $\alpha,\beta > 0$ to be determined. Hence, $Y$ is $U(k) \times U(n-k)$-invariant, and therefore by the unitary invariance of $\mu_k$ we have the following for any $U \in U(k) \times U(n-k)$:
    \[
        B := \int X e^{-\langle Y, X \rangle} d\mu_k(X) = \int UXU^* e^{-\langle Y, UXU^* \rangle} d\mu_k(X) = U\left(\int X e^{-\langle Y, X \rangle} d\mu_k(X)\right)U^* = UBU^*.
    \]
    That is, $B$ is $U(k) \times U(n-k)$-invariant, and therefore $B = \gamma P^\perp + \delta P$ for some $\gamma,\delta \in \R$.

    By picking $\alpha,\beta > 0$ large with $\beta \ll \alpha$, the mass of $e^{-\langle Y, X \rangle} d\mu_k(X)$ becomes concentrated at $X = P$. If we also normalize by multiplying the density by $e^{-c\langle I_n, X \rangle}$ for appropriate values of $c \in \R$, we in fact have that $B$ approaches $P$ as $\alpha,\beta \to \infty$. Combining this with the form that $B$ must take means that for every $\epsilon > 0$, there exist $\alpha_\epsilon$ and $\beta_\epsilon$ such that the corresponding measure is normalized and the corresponding value of $B$ is equal to $(1-\epsilon)P + \frac{k\epsilon}{n}I_n$. We refer to this matrix as $B_\epsilon$, and we refer to the corresponding $Y$ matrix as $Y_\epsilon$.

    Note also that for any unitary $U \in U(n)$, the same argument holds for 
    $$U B_\epsilon U^* = (1-\epsilon)UPU^* + \frac{k\epsilon}{n} I_n$$ and $U Y_\epsilon U^*$. This then proves the result for $A = U B_\epsilon U^*$ for any $\epsilon > 0$ and any $U \in U(n)$. For any fixed $\epsilon > 0$, we further have:
    \[
        \hull(\{U B_\epsilon U^* ~:~ U \in U(n)\}) = (1-\epsilon)\hull(\mathcal{P}_k) + \frac{k\epsilon}{n} I_n =: \mathcal{P}_{k,\epsilon}.
    \]
    Hence, for any $A \in \mathcal{P}_{k,\epsilon}$, we can choose $U_1,\ldots,U_m \in U(n)$ such that $A = \frac{1}{m}\sum_{i=1}^m U_i B_\epsilon U_i^*$. Therefore:
    \[
        \int X \left(\frac{1}{m} \sum_{i=1}^m e^{-\langle U_i Y U_i^*, X \rangle}\right) d\mu_k(X) = \frac{1}{m}\sum_{i=1}^m U_i B_\epsilon U_i^* = A.
    \]
    Since $\frac{1}{m} \sum_{i=1}^m e^{-\langle U_i Y U_i^*, X \rangle} d\mu_k(X)$ is a convex combination of measures in the interior of the constraints of $\primal_k(A)$, this proves the result for all $A \in \mathcal{P}_{k,\epsilon}$. Letting $\epsilon \to 0$ then proves the result in full generality.
\end{proof}

\begin{proposition}[Slater's condition for $\mathcal{V}_1$] \label{prop:slaters_condition_GW}
    Let $A$ be in the interior of the PSD cone, and let $\mu$ be the pushforward of the Lebesgue measure $m$ though $x \mapsto xx^\top$. Then there is a density function on the set of rank-one real symmetric PSD matrices which is in the interior of the constraints of $\primal_\mu(A)$.
\end{proposition}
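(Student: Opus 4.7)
The plan is to exhibit the Goemans-Williamson density itself as the required interior feasible solution. Concretely, since $A$ is in the interior of the PSD cone, the inverse $A^{-1}$ exists and is positive definite, so we may set $Y := \tfrac{1}{2} A^{-1}$ and consider the density
\[
    \nu_A(X) := \frac{e^{-\langle Y, X \rangle}}{\int_{\mathcal{V}_1} e^{-\langle Y, X' \rangle}\, d\mu(X')}.
\]
By Proposition \ref{prop:lebesgue_formula}, the normalizing integral in the denominator is finite (equal to $\pi^{n/2}\det(Y)^{-1/2}$) precisely because $Y$ is positive definite; therefore $\nu_A$ is a well-defined probability density with respect to $\mu$, and it is strictly positive on all of $\mathcal{V}_1$ (since the exponential never vanishes).

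The next step is to verify that $\nu_A$ has marginal $A$. This is exactly the content of Proposition \ref{prop:GW_density}: the change-of-variables computation there shows that the Goemans-Williamson measure $\mu_{\mathrm{GW}}$ for $A$ admits precisely the density $\nu_A$ with respect to $\mu$. On the other hand, by construction of $\mu_{\mathrm{GW}}$ via $v = V g$ with $g \sim \mathcal{N}(0,I_n)$ and $A = VV^\top$, a direct computation (recalled in Section \ref{sec:applications}) gives $\mathbb{E}[vv^\top] = V\, \mathbb{E}[gg^\top]\, V^\top = VV^\top = A$, so the marginal of $\nu_A$ is indeed $A$.

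Finally, since $\nu_A$ is strictly positive everywhere on $\mathcal{V}_1$ and satisfies both the normalization constraint and the marginal constraint $\int X \nu_A(X)\, d\mu(X) = A$, it lies in the relative interior of the feasible set of $\primal_\mu(A)$: any sufficiently small $L^\infty$ perturbation of $\nu_A$ that preserves the (linear) marginal and normalization constraints remains strictly positive. This establishes Slater's condition for the rank-one Lebesgue case.

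I do not anticipate a substantive obstacle here: the only subtlety is that the argument depends on having a density strictly positive on all of $\mathcal{V}_1$ (rather than only on a subset), and this is automatic from the exponential form. The essential content, namely the finiteness of the Gaussian integral and the marginal computation, has already been carried out earlier in the paper (Proposition \ref{prop:lebesgue_formula} and Proposition \ref{prop:GW_density}), so this proof is essentially an assembly of those two facts.
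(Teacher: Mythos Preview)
Your proposal is correct and takes essentially the same approach as the paper: both exhibit the pushforward of the Gaussian measure with covariance $A$ (equivalently, the Goemans--Williamson density $\nu_A(X) \propto e^{-\langle \frac{1}{2}A^{-1}, X\rangle}$) as a strictly positive feasible density with marginal $A$. The paper's version is terser---it simply says ``let $\nu_0$ be a Gaussian on $\R^n$ with covariance $A$ and push forward''---whereas you spell out the density explicitly and invoke Propositions~\ref{prop:lebesgue_formula} and~\ref{prop:GW_density}, but the content is identical.
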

\begin{proof}
    Let $\nu_0(x)dm(x)$ be a Gaussian probability measure on $\R^n$ with covariance matrix $A$. This precisely means:
    \[
        \int xx^\top \nu_0(x) dm(x) = A.
    \]
    Let $\nu(X) d\mu(X)$ be the pushforward of $\nu_0(x) dm(x)$ through the map $x \mapsto xx^\top$. Then:
    \[
        \int X \nu(X) d\mu(X) = A.
    \]
    This $\nu(X)$ is the desired density function.
\end{proof}

\section{The Schur-Horn theorem} \label{sec:schur_horn}

We last discuss an idea that will useful to us in a number of parts of this paper. Generally, the idea is that the unitary invariance of $\mu_k$ allows us to often restrict to looking at diagonal matrices when considering the dual objective. The main observation is a corollary of the famous Schur-Horn theorem \cite{schur1923,horn1954}.

\begin{proposition}[Schur-Horn] \label{prop:schur_horn}
    If $D$ is a real diagonal matrix and $U$ is unitary, then the diagonal of $UDU^*$ is majorized by the diagonal of $D$.
\end{proposition}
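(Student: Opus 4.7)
The plan is to reduce the statement to the fact that doubly stochastic transformations produce majorized vectors, a classical fact equivalent to Birkhoff's theorem together with the Hardy-Littlewood-Polya characterization of majorization.

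First I would write out the diagonal entries of $UDU^\ast$ explicitly. If $D = \diag(d_1,\ldots,d_n)$ and we set $c_i := (UDU^\ast)_{ii}$, then a direct computation gives
\[
c_i \;=\; \sum_{j=1}^n U_{ij} d_j \overline{U_{ij}} \;=\; \sum_{j=1}^n |U_{ij}|^2 \, d_j.
\]
Define $S \in \mathbb{R}^{n\times n}$ by $S_{ij} := |U_{ij}|^2$, so that $c = Sd$ as a vector identity. The point of this step is to expose the linear map that takes the eigenvalue vector $d$ to the diagonal vector $c$.

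Next I would verify that $S$ is doubly stochastic. Since $U$ is unitary, its rows and columns are orthonormal, so $\sum_j |U_{ij}|^2 = 1$ and $\sum_i |U_{ij}|^2 = 1$, and of course $S_{ij} \geq 0$. This is the key structural fact.

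Now I would invoke Birkhoff's theorem: any doubly stochastic matrix is a convex combination of permutation matrices. Thus $S = \sum_{\sigma \in S_n} \lambda_\sigma P_\sigma$ with $\lambda_\sigma \geq 0$ and $\sum_\sigma \lambda_\sigma = 1$, which in turn gives
\[
c \;=\; Sd \;=\; \sum_{\sigma \in S_n} \lambda_\sigma \, P_\sigma d.
\]
So $c$ lies in the convex hull of the permutations of the entries of $d$; equivalently $c$ lies in the permutohedron of $d$. By the Hardy-Littlewood-Polya theorem, this is exactly the condition that $c$ is majorized by $d$, completing the proof.

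The only genuine content here is Birkhoff's theorem plus Hardy-Littlewood-Polya, both of which are standard and can be cited. The rest is bookkeeping: the mild subtlety is that one should sanity-check the trace equality $\sum_i c_i = \sum_i d_i$ directly from $\Tr(UDU^\ast) = \Tr(D)$, which is automatic and corresponds to the equality case in the final inequality of majorization. I do not expect any real obstacle; the result is essentially immediate once the doubly stochastic matrix $S = (|U_{ij}|^2)$ is extracted from the computation.
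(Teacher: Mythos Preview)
Your proof is correct and is the standard argument for the Schur direction of the Schur--Horn theorem. Note, however, that the paper does not actually supply a proof of this proposition at all: it is stated with a citation to the original papers of Schur and Horn and then used as a black box to derive the corollaries that follow. So there is no ``paper's own proof'' to compare against here; your argument via the doubly stochastic matrix $S_{ij} = |U_{ij}|^2$ together with Birkhoff and Hardy--Littlewood--P\'olya is exactly the classical one and would serve perfectly well if the paper had chosen to include a proof.
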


\begin{corollary} \label{cor:schur_horn_opt}
    Given two real diagonal $n \times n$ matrices $D,D'$, we have the following:
    \[
        \inf_{U \in U(n)} \langle UDU^*, D' \rangle = \inf_{\sigma \in S_n} \langle \sigma D \sigma^*, D' \rangle.
    \]
    Here, $U(n)$ is the unitary group and $S_n$ is the subgroup of permutation matrices.
\end{corollary}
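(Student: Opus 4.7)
The plan is to reduce the matrix inner product to a Euclidean inner product of diagonal vectors and then invoke Schur-Horn together with the fact that majorization characterizes convex hulls of permutation orbits. First I would observe that for any matrix $M$ and any real diagonal $D' = \diag(d')$, we have $\langle M, D' \rangle = \Tr(MD') = \sum_i M_{ii} d'_i = \langle \diag(M), d' \rangle$, where the right-hand inner product is the standard one on $\R^n$. Applying this to $M = UDU^*$ reduces the problem to understanding the set of possible diagonals $\diag(UDU^*)$ as $U$ ranges over $U(n)$.

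Next I would apply the Schur-Horn theorem (Proposition \ref{prop:schur_horn}) to conclude that for every $U \in U(n)$, the vector $\diag(UDU^*)$ is majorized by $d := \diag(D)$. A classical theorem of Hardy, Littlewood, and P\'olya (or Rado-Birkhoff) states that a vector $x$ is majorized by $y$ if and only if $x$ lies in the convex hull of the permutations of $y$. Thus $\diag(UDU^*) = \sum_{\sigma \in S_n} \lambda_\sigma (\sigma \cdot d)$ for some nonnegative weights $\lambda_\sigma$ summing to $1$, where $\sigma \cdot d$ denotes coordinate permutation.

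Combining these two observations, for any $U \in U(n)$ we obtain
\[
    \langle UDU^*, D' \rangle = \sum_{\sigma \in S_n} \lambda_\sigma \langle \sigma \cdot d, d' \rangle \geq \min_{\sigma \in S_n} \langle \sigma \cdot d, d' \rangle = \min_{\sigma \in S_n} \langle \sigma D \sigma^*, D' \rangle,
\]
since conjugation of a diagonal matrix by a permutation matrix simply permutes its diagonal entries. Taking the infimum over $U$ on the left yields the inequality $\inf_U \langle UDU^*, D' \rangle \geq \inf_\sigma \langle \sigma D \sigma^*, D' \rangle$. The reverse inequality is immediate because every permutation matrix is unitary, so the infimum over $U(n)$ is at most the infimum over the subgroup $S_n$. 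The two infima therefore coincide, which is the claim.

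The only slightly subtle step is the appeal to the Hardy-Littlewood-P\'olya characterization of majorization in terms of convex hulls of permutation orbits; everything else is direct from Schur-Horn and the definition of the Frobenius inner product. Since that characterization is completely standard, no genuine obstacle arises.
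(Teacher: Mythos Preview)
Your proof is correct and follows essentially the same approach as the paper's: both reduce to the observation that $\langle UDU^*, D' \rangle$ depends only on $\diag(UDU^*)$, invoke Schur--Horn to place that diagonal in the convex hull of the permutations of $\diag(D)$, and then use linearity to bound the inner product below by the minimum over permutations. You are simply more explicit in naming the Hardy--Littlewood--P\'olya characterization and in spelling out the trivial reverse inequality $S_n \subset U(n)$, which the paper leaves implicit.
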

\begin{proof}
    Let $\sigma_0$ be the permutation matrix which minimizes $\langle \sigma D \sigma^*, D' \rangle$ over all permutation matrices. By majorization, for any $U$ the diagonal of $UDU^*$ can be written as a convex combination of the permutations of the diagonal of $D$. By linearity of $\langle \cdot, D' \rangle$, the value of $\langle UDU^*, D' \rangle$ must then be at least the value of $\langle \sigma_0 D \sigma_0^*, D' \rangle$.
\end{proof}

\begin{corollary}
    Let $A$ be a diagonal trace-$k$ PD matrix. Then:
    \[
        \inf_{Y, \text{Hermitian}} F_A(Y) = \inf_{Y, \text{real diagonal}} F_A(Y).
    \]
\end{corollary}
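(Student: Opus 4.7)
The plan is to show that any Hermitian $Y$ can be replaced by a real diagonal matrix without increasing the value of $F_A$. Since every real diagonal matrix is Hermitian, one inequality is trivial, so the work lies in the reverse direction.

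Given any Hermitian $Y$, diagonalize it as $Y = VDV^*$ where $D$ is a real diagonal matrix and $V$ is unitary. The first key step is to show that $\mathcal{E}_k$ is invariant under unitary conjugation: specifically,
\[
    \mathcal{E}_k(UDU^*) = \log \int_{\mathcal{P}_k} e^{-\langle UDU^*, X \rangle} d\mu_k(X) = \log \int_{\mathcal{P}_k} e^{-\langle D, U^*XU \rangle} d\mu_k(X) = \mathcal{E}_k(D),
\]
where the last equality uses the change of variables $X \mapsto UXU^*$ together with the unitary invariance of $\mu_k$ established in Proposition \ref{prop:invariant_measure}.

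With this in hand, for any unitary $U$ we have $F_A(UDU^*) = \langle UDU^*, A \rangle + \mathcal{E}_k(D)$. Since $A$ is real diagonal by assumption, Corollary \ref{cor:schur_horn_opt} (the Schur-Horn-based optimization lemma just proved) yields
\[
    \inf_{U \in U(n)} \langle UDU^*, A \rangle = \min_{\sigma \in S_n} \langle \sigma D \sigma^*, A \rangle,
\]
and $\sigma D \sigma^*$ is itself a real diagonal matrix. Combining this with the unitary invariance of $\mathcal{E}_k$, we obtain that for some permutation $\sigma^\star$,
\[
    F_A(\sigma^\star D (\sigma^\star)^*) = \langle \sigma^\star D (\sigma^\star)^*, A \rangle + \mathcal{E}_k(D) \le \langle VDV^*, A \rangle + \mathcal{E}_k(VDV^*) = F_A(Y),
\]
with $\sigma^\star D (\sigma^\star)^*$ real diagonal.

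Taking the infimum over Hermitian $Y$ on the right and noting that every Hermitian matrix arises as some $VDV^*$, this shows $\inf_{Y \text{ Herm.}} F_A(Y) \ge \inf_{Y \text{ real diag.}} F_A(Y)$, and the reverse inclusion of search sets gives the opposite inequality. I do not expect any real obstacle here; the only subtle point is being careful that the change-of-variables argument for $\mathcal{E}_k$ uses the two-sided unitary invariance of $\mu_k$ (conjugation, not just left multiplication), which is exactly the property provided by Proposition \ref{prop:invariant_measure}.
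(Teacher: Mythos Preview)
Your proposal is correct and follows essentially the same approach as the paper: both arguments reduce to the unitary invariance of the integral term $\mathcal{E}_k$ and then apply Corollary \ref{cor:schur_horn_opt} to the linear term $\langle UDU^*, A\rangle$ with diagonal $A$. Your version is slightly more explicit in spelling out the change-of-variables step for $\mathcal{E}_k$, but the logic is identical.
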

\begin{proof}
    Recall:
    \[
        F_A(Y) = \langle Y, A \rangle + \log \int e^{-\langle Y, X \rangle} d\mu_k(X).
    \]
    To prove the result, we only need to show that for any fixed real diagonal matrix $D$ we have:
    \[
        \inf_{U \in U(n)} F_A(UDU^*) = \inf_{\sigma \in S_n} F_A(\sigma D \sigma^*).
    \]
    Since the integration part of $F_A(Y)$ is unitarily invariant, this is then equivalent to:
    \[
        \inf_{U \in U(n)} \langle UDU^*, A \rangle = \inf_{\sigma \in S_n} \langle \sigma D \sigma^*, A \rangle.
    \]
    Since $A$ is diagonal, this follows from the previous corollary.
\end{proof}

\section{Closeness of the approximate distribution} \label{sec:closeness}

Let $\mu_1$ and $\mu_2$ two probability measures on $\Omega$, given as density functions with respect to a base measure $\mu$. The \emph{KL divergence} between $\mu_1$ and $\mu_2$ is defined as
\[
    D_\mathrm{KL}(\mu_1\|\mu_2) := \int_\Omega \mu_1(X)\log\left(\frac{\mu_1(X)}{\mu_2(X)}\right) d\mu(X).
\]
With this we follow the proof of Lemma A.4 in \cite{SinghV14} to obtain the following.

\begin{lemma}
    Let $Y^\star$ be the optimal solution to the dual objective function
    \[
        F_A(Y) = \langle Y, A \rangle + \log \int_\Omega e^{-\langle Y, X \rangle d\mu(X)}.
    \]
    Further, let $Y^\circ$ be such that $F_A(Y^\circ) \leq F_A(Y^\star) + \epsilon$. If $\mu^\star$ and $\mu^\circ$ are the probability distributions associated to $Y^\star$ and $Y^\circ$ respectively, then
    \[
        D_\mathrm{KL}(\mu^\star \| \mu^\circ) = F_A(Y^\circ) - F_A(Y^\star) \leq \epsilon.
    \]
\end{lemma}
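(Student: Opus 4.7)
The strategy is a direct computation: expand the KL divergence using the explicit exponential form of $\mu^\star$ and $\mu^\circ$, then use that $\mu^\star$ has marginal $A$ to collapse the integrals to the difference of dual objectives.

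By strong duality (Theorem \ref{thm:strongduality}) and the first-order optimality condition for the dual, I would first record the explicit forms
\[
    \mu^\star(X) = \frac{e^{-\langle Y^\star, X\rangle}}{Z^\star}, \qquad \mu^\circ(X) = \frac{e^{-\langle Y^\circ, X\rangle}}{Z^\circ},
\]
where $Z^\star = \int_\Omega e^{-\langle Y^\star, X\rangle}\,d\mu(X)$ and $Z^\circ = \int_\Omega e^{-\langle Y^\circ, X\rangle}\,d\mu(X)$. The key structural fact I would invoke next is that the marginal of $\mu^\star$ is exactly $A$; this follows from $\nabla F_A(Y^\star) = 0$, which yields $\int X\,\mu^\star(X)\,d\mu(X) = A$. (Note that for $\mu^\circ$ no such identity is available, which is why KL is not symmetric here.)

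With these in hand, the main calculation is straightforward:
\[
\begin{aligned}
    D_{\mathrm{KL}}(\mu^\star\|\mu^\circ)
    &= \int_\Omega \mu^\star(X)\Bigl[-\langle Y^\star, X\rangle - \log Z^\star + \langle Y^\circ, X\rangle + \log Z^\circ\Bigr]\,d\mu(X)\\
    &= \langle Y^\circ - Y^\star,\, A\rangle + \log Z^\circ - \log Z^\star\\
    &= \bigl(\langle Y^\circ, A\rangle + \log Z^\circ\bigr) - \bigl(\langle Y^\star, A\rangle + \log Z^\star\bigr)\\
    &= F_A(Y^\circ) - F_A(Y^\star),
\end{aligned}
\]
where the second line uses $\int X\,\mu^\star(X)\,d\mu(X) = A$ together with $\int \mu^\star(X)\,d\mu(X) = 1$. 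The hypothesis $F_A(Y^\circ) \le F_A(Y^\star) + \epsilon$ then yields the bound $\epsilon$.

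There is really no substantive obstacle; the only point to be careful about is justifying that the marginal of $\mu^\star$ equals $A$, which is precisely the content of strong duality as already established in Theorem \ref{thm:strongduality} (and which is also why $D_{\mathrm{KL}}$ comes out nonnegative here, matching the inequality $F_A(Y^\circ) \ge F_A(Y^\star)$). Everything else is bookkeeping after pulling the linear and constant terms out of the integral.
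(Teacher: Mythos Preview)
Your proof is correct and follows essentially the same approach as the paper's: write out the explicit exponential densities, use first-order optimality of $Y^\star$ to identify the marginal of $\mu^\star$ with $A$, and then expand $D_{\mathrm{KL}}(\mu^\star\|\mu^\circ)$ directly to obtain $F_A(Y^\circ)-F_A(Y^\star)$. The paper's computation is organized slightly differently but the substance is identical.
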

\begin{proof}
    By assumption we have $F_A(Y^\circ) - F_A(Y^\star) \leq \epsilon$, which implies
    \[
        \langle Y^\circ - Y^\star, A \rangle + \log \int_\Omega e^{-\langle Y^\circ, X \rangle} d\mu(X) - \log \int_\Omega e^{-\langle Y^\star, X \rangle} d\mu(X) \leq \epsilon.
    \]
    The density functions of the distributions associated to $Y^\star$ and $Y^\circ$ can be given as
    \[
        \mu^\circ(X) := \frac{e^{-\langle Y^\circ, X \rangle}}{\int_\Omega e^{-\langle Y^\circ, X \rangle} d\mu(X)} \qquad \text{and} \qquad \mu^\star(X) := \frac{e^{-\langle Y^\star, X \rangle}}{\int_\Omega e^{-\langle Y^\star, X \rangle} d\mu(X)}.
    \]
    Since $Y^\star$ is the optimal solution (and hence $A$ is proportional to $\int_\Omega X e^{-\langle Y^\star, X \rangle} d\mu(X)$), we can compute the KL divergence as
    \[
    \begin{split}
        D_\mathrm{KL}(\mu^\star \| \mu^\circ) &= \frac{\int_\Omega e^{-\langle Y^\star, X \rangle} \left[\log\left(\frac{\int_\Omega e^{-\langle Y^\circ, Z \rangle} d\mu(Z)}{\int_\Omega e^{-\langle Y^\star, Z \rangle} d\mu(Z)}\right) + \langle Y^\circ - Y^\star, X \rangle\right] d\mu(X)}{\int_\Omega e^{-\langle Y^\star, X \rangle} d\mu(X)} \\
            &= \log\left(\frac{\int_\Omega e^{-\langle Y^\circ, X \rangle} d\mu(X)}{\int_\Omega e^{-\langle Y^\star, X \rangle} d\mu(X)}\right) + \left\langle Y^\circ - Y^\star, \frac{\int_\Omega X e^{-\langle Y^\star, X \rangle} d\mu(X)}{\int_\Omega e^{-\langle Y^\star, X \rangle} d\mu(X)} \right\rangle \\
            &= \left[\langle Y^\circ, A \rangle + \log \int_\Omega e^{-\langle Y^\circ, X \rangle} d\mu(X)\right] - \left[\langle Y^\star, A \rangle + \log \int_\Omega e^{-\langle Y^\star, X \rangle} d\mu(X)\right] \\
            &= F_A(Y^\circ) - F_A(Y^\star) \leq \epsilon.
    \end{split}
    \]
\end{proof}

\noindent
As in Corollary A.5 of \cite{SinghV14}, we use the previous result to obtain bounds on the approximate optimal distribution and on the marginals of this distribution.

\begin{corollary}
    Let $Y^\star$ be the optimal to the dual objective function $F_A(Y)$ with domian $\Omega$ and measure $\mu$ as in the previous lemma, and let $Y^\circ$ be such that $F_A(Y^\circ) \leq F_A(Y^\star) + \epsilon$. If $\mu^\star$ and $\mu^\circ$ are the probability distributions associated to $Y^\star$ and $Y^\circ$ respectively, then
    \[
        \|\mu^\star - \mu^\circ\|_\mathrm{TV} \leq \sqrt{2\epsilon}.
    \]
\end{corollary}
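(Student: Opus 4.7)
The plan is to derive this bound as an essentially immediate consequence of the preceding lemma via Pinsker's inequality. The previous lemma gives $D_{\mathrm{KL}}(\mu^\star \| \mu^\circ) = F_A(Y^\circ) - F_A(Y^\star) \leq \epsilon$, and Pinsker's inequality relates KL divergence to total variation distance. So the entire proof is one line after citing Pinsker, assuming we fix a convention for $\|\cdot\|_{\mathrm{TV}}$.

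First I would state Pinsker's inequality in the form appropriate to the conventions used in the paper. The factor of $\sqrt{2\epsilon}$ rather than $\sqrt{\epsilon/2}$ tells us that $\|\cdot\|_{\mathrm{TV}}$ is being taken as the $L^1$ distance between density functions, i.e.\ $\|\mu_1 - \mu_2\|_{\mathrm{TV}} := \int_\Omega |\mu_1(X) - \mu_2(X)|\,d\mu(X)$, rather than as $\sup_A |\mu_1(A) - \mu_2(A)|$ (the two differ by a factor of $2$). Under the former convention, Pinsker's inequality reads
\[
    \|\mu_1 - \mu_2\|_{\mathrm{TV}} \leq \sqrt{2\, D_{\mathrm{KL}}(\mu_1 \| \mu_2)}.
\]

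Second, I would combine these two facts: applying Pinsker's inequality to $\mu^\star$ and $\mu^\circ$ and then using the KL bound from the previous lemma gives
\[
    \|\mu^\star - \mu^\circ\|_{\mathrm{TV}} \leq \sqrt{2\, D_{\mathrm{KL}}(\mu^\star \| \mu^\circ)} \leq \sqrt{2\epsilon},
\]
which is exactly the statement. I would also briefly note that $\mu^\star$ and $\mu^\circ$ are well-defined probability densities on $\Omega$ with respect to $\mu$ (as was used in the previous lemma's proof), so Pinsker is applicable without issue.

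There is no real obstacle here; the only minor point requiring care is aligning the normalization convention for $\|\cdot\|_{\mathrm{TV}}$ with the form of Pinsker used, and possibly including a short self-contained derivation of Pinsker (via the inequality $3(p-q)^2 \leq (4 + 2p)(p\log(p/q) - p + q)$ integrated termwise, or through the standard Cauchy--Schwarz argument on $|p-q| = |\sqrt{p}-\sqrt{q}|(\sqrt{p}+\sqrt{q})$) if one does not wish to cite it as a black box. Given the paper's style of citing \cite{SinghV14} for the analogous bound in the discrete case, simply invoking Pinsker's inequality as a well-known fact is the cleanest route.
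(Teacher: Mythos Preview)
Your proposal is correct and matches the paper's proof essentially line for line: the paper also invokes the previous lemma together with the inequality $\|\mu^\star - \mu^\circ\|_{\mathrm{TV}} \leq \sqrt{2\, D_{\mathrm{KL}}(\mu^\star \| \mu^\circ)}$ (citing Cover and Thomas rather than naming it as Pinsker's inequality). Your remark on the TV normalization convention is a helpful clarification but not something the paper itself addresses.
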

\begin{proof}
    The result follows from the previous lemma and the following well-known inequality (see e.g. \cite{cover2012elements}, Lemma 12.6.1, pp. 300-301) relating KL divergence and total variation distance:
    \[
        \|\mu^\star - \mu^\circ\|_\mathrm{TV} \leq \sqrt{2 \cdot D_\mathrm{KL}(\mu^\star\|\mu^\circ)}.
    \]
\end{proof}

\end{document}